\newcommand{\eg}{\textit{e.g.}}
\newcommand{\ie}{\textit{i.e.}}
\newcommand{\wrt}{\textit{w.r.t.}}
\newtheorem{theorem}{Theorem}
\newtheorem{lemma}[theorem]{Lemma}
\newtheorem{definition}{Definition}
\newtheorem{corollary}[theorem]{Corollary}
\newtheorem*{remark}{Remark}
\newcommand{\ourname}{MagnifierSketch}
\newcommand{\stageone}{Tower Sketch}
\newcommand{\stagetwo}{Value Sketch}
\definecolor{keywordcolor}{rgb}{0.0, 0.0, 0.6}
\definecolor{commentcolor}{rgb}{0.0, 0.5, 0.0}
\definecolor{stringcolor}{rgb}{0.58, 0.0, 0.82}
\lstdefinelanguage{SQL}{
  keywords={SELECT, FROM, WHERE, AND, OR, NOT, INSERT, INTO, VALUES, UPDATE, SET, DELETE, CREATE, TABLE, PRIMARY, KEY, FOREIGN, DROP, ALTER, ADD, AS, DISTINCT, GROUP, BY, ORDER, HAVING, JOIN, ON, IN, IS, NULL},
  sensitive=false,
  comment=[l]--,
  morestring=[b]',
  morestring=[b]",
}
\newcommand{\st}{\textit{s.t.}}
\begin{document}

\title{\ourname{}: Quantile Estimation Centered at One Point}

\author{
\IEEEauthorblockN{
Jiarui Guo\IEEEauthorrefmark{1}, 
Qiushi Lyu\IEEEauthorrefmark{1}, 
Yuhan Wu\IEEEauthorrefmark{1},
Haoyu Li\IEEEauthorrefmark{2}, 
Zhaoqian Yao\IEEEauthorrefmark{1}, \\
Yuqi Dong\IEEEauthorrefmark{1},
Xiaolin Wang\IEEEauthorrefmark{1}, 
Bin Cui\IEEEauthorrefmark{1}, 
and Tong Yang\IEEEauthorrefmark{1}
}

\IEEEauthorblockA{
\IEEEauthorrefmark{1}\textit{Peking University} 
\IEEEauthorrefmark{2}\textit{University of Texas
at Austin} 
}
}

\setlength{\subfigcapskip}{-0.15cm}
\setlength{\subfigbottomskip}{-0.05cm}

\maketitle

\pagestyle{plain}

\begin{abstract}
In this paper, we take into consideration quantile estimation in data stream models, where every item in the data stream is a key-value pair. 
Researchers sometimes aim to estimate per-key quantiles (\ie{} quantile estimation for every distinct key), and some popular use cases, such as tail latency measurement, recline on a predefined single quantile (\eg{} 0.95- or 0.99-quantile) rather than demanding arbitrary quantile estimation. 
However, existing algorithms are not specially designed for per-key estimation centered at one point. 
They cannot achieve high accuracy in our problem setting, and their throughput are not satisfactory to handle high-speed items in data streams. 
To solve this problem, we propose \ourname{} for point-quantile estimation. 
\ourname{} supports both single-key and per-key quantile estimation, and its key techniques are named \textbf{Value Focus}, \textbf{Distribution Calibration} and \textbf{Double Filtration}. 
We provide strict mathematical derivations to prove the unbiasedness of \ourname{} and show its space and time complexity. 
Our experimental results show that the Average Error (AE) of \ourname{} is significantly lower than the state-of-the-art in both single-key and per-key situations. 
We also implement \ourname{} on RocksDB database to reduce quantile query latency in real databases. 
All related codes of \ourname{} are open-sourced and available at GitHub.

\end{abstract}

\section{Introduction}
\label{sec:intro}

\subsection{Background and Motivation}
Quantile estimation is an important topic in many fields. In the field of statistics, researchers can have a better knowledge of the unknown parameters in the probability distribution function (pdf) by quantile estimation, which helps predict further sampling in the future \cite{parrish1990comparison}. In theoretical computer science, researchers make efforts to design elegant algorithms to estimate the quantile of a multiset with size $N$ within $o(N)$ space complexity \cite{munro1980selection, manku1998approximate, greenwald2001space, karnin2016optimal}. Quantile estimation also plays an important role in application, such as databases \cite{chen2000incremental}, meteorology \cite{timofeev2010using}, and finance \cite{ferrari2009maximum}. 

When it comes to data stream models, it is important to estimate the quantile of distribution. Yet, few work targets at \textit{per-key} quantile estimation. 
In a data stream, every item is a key-value pair $(k, v)$, and we have to independently estimating quantile for every distinct key rather than viewing different keys as a whole. With the above preliminaries, single-key and per-key quantile estimation can be respectively summarized as follows: 

\begin{lstlisting}
    -- Single-key quantile estimation
    SELECT Quantile(value, w)
    FROM DataStream
    -- Per-key quantile estimation
    SELECT Quantile(item.value, w)
    FROM DataStream
    WHERE item.key = key
\end{lstlisting}
where the function \texttt{Quantile(s, w)} returns the $w$-quantile of the multiset $s$. 
In fact, per-key quantile is more useful in many situations, and below we show some use cases.  

\textbf{Case 1: Latency Measurement.} Latency is an important topic in network scenarios \cite{padmanabhan1995improving, rumble2011s}. It is defined as the delay in network communication, or the time from request to response. Users usually expect low latency when they are visiting a website, but the overall low latency cannot guarantee low latency for every user. Therefore, to attract as many visitors as possible, website administrators shall make efforts to achieve low latency quantile for every visitor \cite{flach2013reducing, arapakis2014impact}. 
Also, the sudden increase of latency can imply potential cyber attacks or offending activities \cite{chen2020system, shahzad2016accurate}. However, the aggregated latency quantile can remain unchanged, as millions of packets with low latency can conceal this phenomenon. As a result, we can only detect the attack by per-key latency quantile estimation.

\textbf{Case 2: Data Preprocessing.} Data preprocessing is necessary in machine learning models \cite{brownlee2020data, chu2016data}. An effective method for data cleaning is to estimate per-key quantile. Since dirty data with extreme values usually have extremely high (low) percentage quantiles, we can detect and delete these values quickly \cite{hellerstein2008quantitative, dasu2003exploratory}. However, to obtain high-quality datasets for machine learning, traditional single-key algorithms either have to read the dataset multiple times, or have to build a data structure for every distinct key, resulting in a waste of time or space. Designing a per-key quantile estimation algorithm can solve this tough problem efficiently.

Also, while most prior work mainly focuses on all-quantile estimation (\ie{} quantile estimation for arbitrary point), researchers sometimes are more interested in the quantile function at one point. 
For example, network administrators usually pay more attention to tail latency (\eg{} 0.95 or 0.99-quantile of latency) in network scenarios, as many applications must achieve low tail latency to meet business objectives \cite{prekas2017zygos, zhao2023panakos}. 
In addition, the median (\eg{} 0.5-quantile) can be used for estimation of income levels within population \cite{harrison2009median} and analysis of large amounts of transaction data in finance \cite{harris_1987}. 
Finally, constructing $\left\lceil \frac{1}{\varepsilon}\right\rceil $ data structures for point-quantile estimation can solve the problem of all-quantile estimation within at most $\varepsilon$ error, so point-quantile estimation also lays solid foundation for all-quantile estimation. 

However, per-key point-quantile estimation can be challenging and costly in data stream models. We have to keep the information of every item, but the high-speed items in data streams require us to deal with every item in $O(1)$ time. Also, to ensure that the data structure is small enough to be placed on caches, we have to use as little memory as possible. As a result, the goal of this paper is to design a compact sketch algorithm which can accurately estimate per-key point-quantile with small time and space complexity. 

\subsection{Prior Art and Their Limitations}
We divide existing quantile estimation algorithms into two categories: \textit{single-key} algorithms and \textit{per-key} algorithms. Single-key algorithms can be used to estimate \textit{aggregated} quantile, which means they view different keys as a whole and estimate the quantile of all values. 
Many algorithms belong to this category \cite{manku1998approximate, greenwald2001space, karnin2016optimal, ddsketch, cormode2021relative, cormode2021theory, tdigest}. 
However, although attractive in theory, they cannot be directly applied to per-key situation. In data stream models, a large number of items are mixed together, and these algorithms require several copies to realize per-key estimation, which is unacceptable in practice.

Per-key algorithms can be directly used for per-key quantile estimation. SQUAD \cite{shahout2022squad}, SketchPolymer \cite{guo2023sketchpolymer} and M4 \cite{dongm4} are three of these per-key algorithms. These per-key algorithms are mainly designed for all-quantile estimation, so they can be applied for per-key point-quantile estimation as well. Nevertheless, in order to realize all-quantile estimation, these algorithms record too many value samples far from the target quantile, which is a waste of space in point-quantile situation. In addition, these algorithms fail to filter infrequent items efficiently, and infrequent items can have detrimental influence on the accuracy of frequent items. 

\subsection{Our Proposed Solution}
Toward the design goal, we propose \textbf{\ourname{}} for per-key point-quantile estimation. \ourname{} is space-efficient: it is small enough to be placed in CPU caches. \ourname{} is fast: it deals with every item in constant time. \ourname{} is accurate: it achieves much smaller error compared to the state-of-the-art.

\ourname{} includes two stages: Stage 1 is a \stageone{} \cite{towersketch} and Stage 2 is a hash table named
\stagetwo{}. \stageone{} is used to filter infrequent items in advance, and \stagetwo{} keeps samples of value for every frequent item. The techniques used in \stageone{} and \stagetwo{} are named \textbf{Value Focus}, \textbf{Distribution Calibration} and \textbf{Double Filtration}. We now introduce these three techniques below: 

\textbf{Key Technique 1: Value Focus (\S \ref{sec:single})}. In point-quantile query, values close to the query quantile $w$ shall be prioritized. Therefore, we design a special processing mechanism to keep as many values close to the target quantile as possible. \ourname{} maintains two data structures named the Candidate and the Representative in single-key situation. The value of every incoming item will be inserted into the Candidate, and only two median values can be inserted into the Representative after the Candidate is full when $w=0.5$. The Representative further applies replacement strategy to evict unqualified values. In this way, \ourname{} can estimate quantile accurately by reporting the median value in the Representative.

\textbf{Key Technique 2: Distribution Calibration (\S \ref{sec:cali})}. 
We study data streams where the item values remain a stable distribution $F$ in a certain interval \cite{zhou2007distributed, nguyen2015survey}. 
We find that this model is applicable in many scenarios, \eg{} database \cite{gibbons1997fast}, network monitoring \cite{paxson1997end} and commerce \cite{correa2017posted}. 
In addition, the performance of \ourname{} peaks when we set the query quantile $w=0.5$, and merely querying 0.5-quantile is simpler than querying an arbitrary quantile. Consequently, we apply probability method to ``calibrate'' the original distribution $F$ and construct a new distribution $F'$, so that the $w$-quantile of the original distribution $F$ is just the 0.5-quantile of $F'$ (See in Figure \ref{fig::cali}). We prove that the expected quantile function at the query result \wrt{} $F'$ is just 0.5, and \ourname{} still achieves high throughput after Distribution Calibration. 

\begin{figure}[t]
	\centering
	\subfigure[before]{
	\begin{minipage}[t]{0.22\textwidth}{
			\includegraphics[width=1\textwidth]{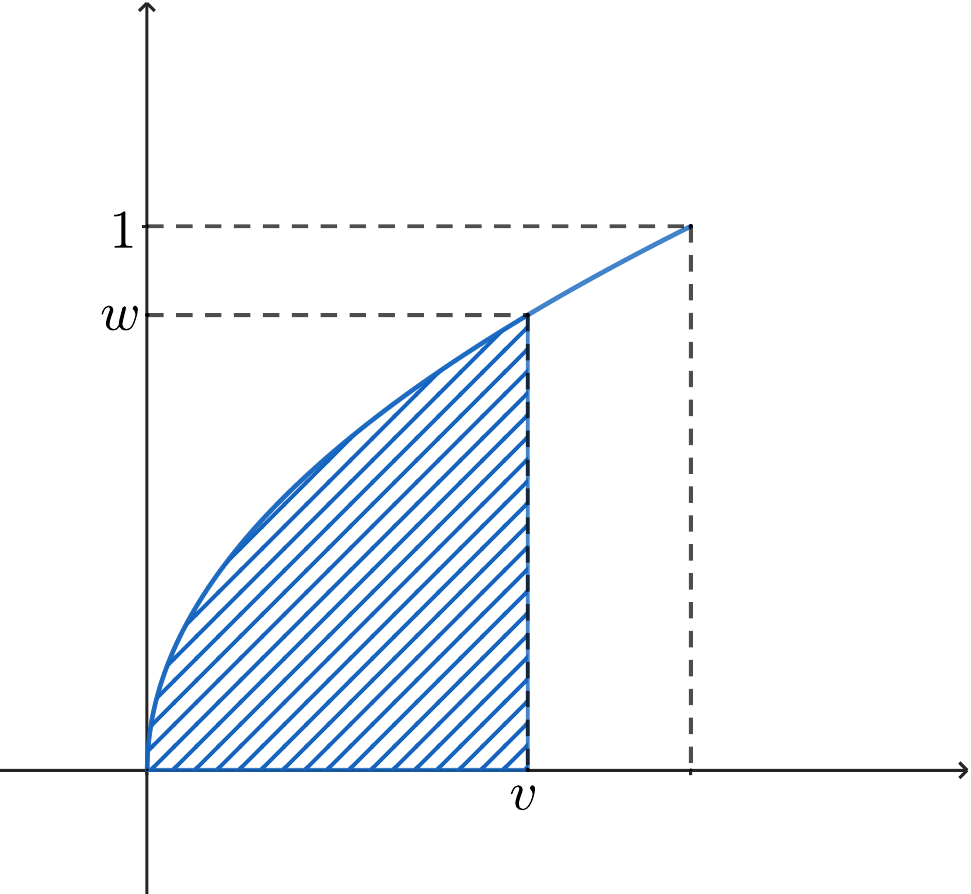}}
	\end{minipage}}
	\subfigure[after]{
	\begin{minipage}[t]{0.22\textwidth}{
		\includegraphics[width=1\textwidth]{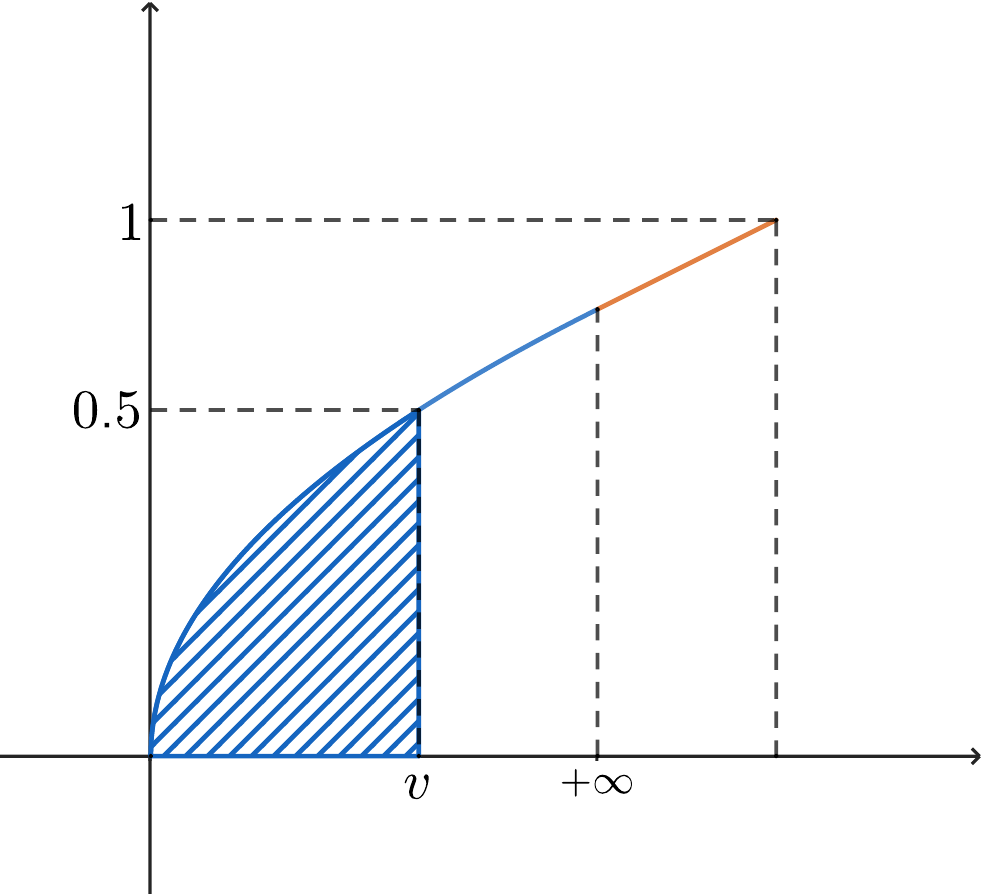}}
	\end{minipage}} \\
	\caption{Idea of Distribution Calibration}
	\label{fig::cali}
	\vspace{-0.2in}
\end{figure}

\textbf{Key Technique 3: Double Filtration (\S \ref{sec:per}-\ref{sec:tower})}. The majority of items in data streams are infrequent items \cite{faloutsos1996modeling, guo2025hourglasssketch}. However, their quantile can hardly be estimated precisely due to their low frequency. As a result, it is necessary to filter infrequent items to save memory for frequent items. In \ourname{}, we filter infrequent items in both \stageone{} (Stage 1) and \stagetwo{} (Stage 2). In \stageone{}, \ourname{} leverages the idea of \textbf{Early Screening}, which originates from Cold Filter \cite{coldfilter}. In \stagetwo{}, \ourname{} borrows the idea of \textbf{Ostracism} 
from Elastic Sketch \cite{yang2018elastic} to evict infrequent items. 
In this way, we drop infrequent items and keep the information of frequent items as accurately as possible, which can be used for quantile estimation.

\subsection{Key Contributions}
This paper makes the following contributions: 
\begin{itemize}[leftmargin=1em]
    \item We propose a two-stage data structure, namely the Candidate and the Representative, to estimate point-quantile in single-key situation. 
    \item We extend the algorithm to per-key situation and propose a novel data structure, namely \ourname{}, to estimate per-key point-quantile in data streams. 
    \item We show the error bound and time complexity of \ourname{} by strict derivation. Mathematical analysis shows the superiority of \ourname{} in both single-key and per-key situations. 
    \item We conduct extensive experiments on different datasets. Experimental results show that \ourname{} outperforms other algorithms in terms of error and still maintains high throughput in both single-key and per-key situations. 
\end{itemize}

\section{Problem Statement And Related Work}
\label{sec:prorel}

\subsection{Problem Statement}

The symbols frequently used in this paper are shown in Table \ref{table:symbol}.

\begin{definition}
\textbf{Data Stream.} A data stream $S$ is a series of items $\{e_1, e_2, \cdots, e_n, \cdots\}$ appearing in sequence. In this paper, every item $e$ is a key-value pair $(k, v)$. Items with different keys are called distinct items. 
\end{definition}

\begin{definition}
\textbf{Quantile.} Given a multiset of numbers $\mathcal{S} = \{a_1, a_2, \cdots, a_n\}$ and a percentage $w$, where $a_1\leq a_2\leq \cdots \leq a_n$ and $0\leq w\leq 1$, the $w$-quantile of multiset $\mathcal{S}$ is defined as $a_{\lfloor w(n-1) \rfloor + 1}$. The quantile function at $a_k$ is defined as $\frac{k-1}{n-1}$. 
\end{definition}

\begin{table}[H]
    \begin{center}
    \renewcommand\arraystretch{0.95}
    \caption{Symbols frequently used in this paper.}
    \label{table:symbol}
    \resizebox{.99\columnwidth}{!}{
    \begin{tabular}{|m{0.135\columnwidth}<{\centering}|m{0.75\columnwidth}|}
      \hline
      \textbf{Notation}&\textbf{Meaning}\\
      \hline
      $e$ & A distinct item in data streams \\
      $k$ & Key of a certain item \\
      $v$ & Value of a certain item \\
      $w$ & An arbitrary quantile \\
      $r$ & Maximum value samples in the Candidate \\
      $s$ & Maximum value samples in the Representative \\
      $T$ & Threshold for \stageone{} \\
      $\lambda$ & Threshold for ratio in \stagetwo{} \\
      $B[i][j]$ & $j^{th}$ cell in $i^{th}$ bucket in \stagetwo{} \\
      $h(.)$ & The hash function for \stagetwo{} \\
      $d$ & Number of cells in each bucket \\
      $u$ & Number of buckets in \stagetwo{} \\
      $vote^+$ & The positive vote field \\
      $vote^-$ & The negative vote field \\
      \hline
    \end{tabular}}
    \end{center}
\end{table}

We now officially state the definition of point-quantile estimation and all-quantile estimation respectively.

\begin{definition}
    \textbf{Point-quantile Estimation.} Given a quantile $w$ in advance, for $n$ numbers $t_1, \cdots, t_n$ in data streams, construct a data structure which estimates the $w$-quantile of all these numbers. 
\end{definition}

\begin{definition}
    \textbf{All-quantile Estimation.} For $n$ numbers $t_1, \cdots, t_n$ in data streams, construct a data structure which, for an arbitrary quantile $w$, estimates the $w$-quantile of all these numbers. 
\end{definition}

The difference between point-quantile and all-quantile estimation is that in point-quantile situation, we know the estimated quantile $w$ in advance, so we can specially design an algorithm which solves $w$-quantile estimation. Finally, we state the design goal of this paper. 

\begin{definition}
    \textbf{Per-key Point-quantile Estimation.} Given a fixed quantile $w$, for an arbitrary key $k$, the design goal of \ourname{} is to estimate the $w$-quantile of value for all items with key $k$.
\end{definition}

\subsection{Related Work}

\subsubsection{Quantile Estimation}~

Given a multiset $S$ with $N$ items, a quantile $0<w<1$ and an error parameter $0<\varepsilon<\frac{1}{2}$, the design goal of (single-key) quantile estimation is to return an item $x\in S$, \st{} the quantile function at $x$ (suppose it is $\hat{w}$) satisfies $|\hat{w}-w|<\varepsilon$. The most famous algorithm for this problem is GK algorithm \cite{greenwald2001space}. It solves this problem within $O(\frac{1}{\varepsilon}\log(\varepsilon N))$ space complexity. 
Other algorithms for quantile estimation uses randomization to guarantee $|\hat{w}-w|< \varepsilon$ with probability at least $1-\delta$. KLL algorithm \cite{karnin2016optimal} applies this technique and achieves $O(\frac{1}{\varepsilon}\log\log \frac{1}{\delta})$ space complexity for the first time. ReqSketch \cite{cormode2021relative, cormode2021theory} later guarantees $|\hat{w}-w|<\varepsilon w$ using $O(\frac{1}{\varepsilon}\log^{1.5}(\varepsilon N)\sqrt{\log\frac{1}{\delta}})$ space. 

Other sketching algorithms are also designed for single-key quantile estimation problem. 
DDSketch \cite{ddsketch} uses logarithm to divide all positive numbers into several discrete intervals, and records the frequency of each interval in the buckets. Moreover, DDSketch will merge adjacent buckets to save memory if it has too many buckets. 
Some algorithms utilize the fact that the data stream is often random-ordered (\ie{} i.i.d.) to save memory. 
Guha and Mcgregor \cite{guha2009stream} first provide a method that achieves $O\left(\log \frac{1}{\varepsilon} + \log\log \frac{1}{\delta}\right)$ space complexity for random-ordered streams. 
$t$-digest \cite{tdigest} is another work for random-ordered data streams. It divides the data into small clusters and aggregates and compresses the data within each cluster. The size of each cluster is adaptively determined based on the density of the data distribution. Moreover, $t$-digest employs a technique called ``centroids'' to group similar data points together and achieves accurate quantile estimation for $w$ close to 0 and 1.

Some algorithms are specifically designed for per-key quantile estimation. 
SQUAD \cite{shahout2022squad} combines both sketching and sampling methods to estimate per-key quantile for heavy-hitters. It uses Reservoir Sampling to sample several items from the data stream, and employs Space Saving to construct GK arrays for these items, which can be further used for quantile estimation. 
SketchPolymer \cite{guo2023sketchpolymer} is designed for estimating per-key tail quantile in data streams. It filters infrequent items in advance and applies Value Splitting and Sharing to count the frequency of every key in each interval for tail quantile estimation. 
M4 \cite{dongm4} is another framework in this field. For every single-key quantile estimation algorithm META, M4 constructs several layers of buckets with a META in each layer. Every distinct item will be mapped into several buckets by hash functions, and M4 aggregates information within different buckets to achieve per-key quantile estimation. 

However, none of these algorithms can be directly used for per-key point-quantile estimation in our problem setting. As to single-key algorithms, they achieve low space complexity in theory, but they have to make several copies of their data structures to support per-key point-quantile estimation, which is unacceptable in terms of memory. In addition, even if per-key algorithms (SQUAD, SketchPolymer, M4) can be directly utilized for our problem, SQUAD does not filter infrequent items in advance, and these infrequent items have a detrimental influence on overall accuracy; SketchPolymer is designed for estimating tail quantile, so it cannot be used for arbitrary point-quantile estimation. Finally, our experiments show that M4 cannot run within tight memory constraint, so it still has room for improvement in per-key quantile estimation.

\subsubsection{Frequency Estimation}~

Sketches are compact data structures which support approximate query with limited error, and they are widely used for frequency estimation in data streams. CM Sketch \cite{cmsketch} is the simplest sketch for frequency estimation. It is composed of $d$ counter arrays and each array is associated with a hash function. For every incoming item, CM Sketch uses $d$ hash functions to map it into $d$ counters and increments all these counters by 1. To report the frequency of an item, CM Sketch returns the minimum value of these $d$ counters. 

Tower Sketch \cite{towersketch} is a novel counting sketch originating from CM Sketch but uses different-sized counters for different arrays. Since Tower Sketch still allocates the same memory for different arrays, it has more small counters for infrequent items, which can keep the frequency of these items more accurately. Also, Tower Sketch has fewer counters for frequent items, and their frequencies are more likely to overflow in small counters. Therefore, frequent items suffer from overestimation error in practice, and Tower Sketch can separate frequent items from infrequent items more effectively in data streams compared to other algorithms \cite{fan2023finding}. 

Elastic Sketch \cite{yang2018elastic} consists of a hash table with several buckets. Each bucket records the key, positive votes $vote^+$ (equal to its frequency) and negative votes $vote^-$ (equal to the number of items colliding with it). To insert an item $e$, Elastic Sketch uses a hash function to map it to a bucket. If the item $e$ matches the key in the bucket, the positive votes $vote^+$ will be incremented by 1; otherwise, the negative votes $vote^-$ will be incremented by 1. Moreover, Elastic Sketch applies a technique called Ostracism to evict infrequent items: If $\frac{vote^-}{vote^+}$ exceeds a predefined threshold, the item will be viewed as an infrequent item and will be evicted from the bucket. 

\section{\ourname{} Algorithm}

\label{sec:alg}

In this section, we propose the data structure of \ourname{}. 
We start from the simplest situation when all items in the data stream share the same key and $w=0.5$. Then we introduce the idea of Distribution Calibration and support an arbitrary $w$ in single-key situation. We adapt the \ourname{} to per-key situation and introduce the \stagetwo{}. Finally we present an optimized version of \ourname{} to filter infrequent items in advance for higher accuracy. Due to space constraint, we put relevant pseudo-code of \ourname{} in Appendix \ref{appendix:code}.

\subsection{Single-key Situation when $w=0.5$: Value Focus}
\label{sec:single}

\noindent
\textbf{Value Focus Rationale}: The main idea of \textbf{Value Focus} is to maintain both the Candidate and the Representative in our data structure. Items in the data stream will be inserted into the Candidate first. When the Candidate is full, we select two value samples from the Candidate and insert them into the Representative. The Representative further evicts value samples far from the 0.5-quantile when it is full.

\noindent
\textbf{The Candidate Operation}: The design goal of the Candidate is to keep several value samples from the data stream, and only values which have the potential to be the 0.5-quantile can be selected into the Representative. Specifically, the Candidate can keep $r$ value samples in the data structure (we assume that $r$ is an even number). For every item $e=(k, v)$ in the data stream, we first insert its value into the Candidate. Then we check whether the Candidate is full after insertion. If the number of value samples in the Candidate is smaller than $r$, we just finish insertion procedure and return; Otherwise, we select two median value samples in the Candidate and send them to the Representative. Finally, we clear all value samples in the Candidate to make room for further items in the data stream. 

\begin{figure*}[htbp]
\centering
\includegraphics[width=0.9\linewidth]{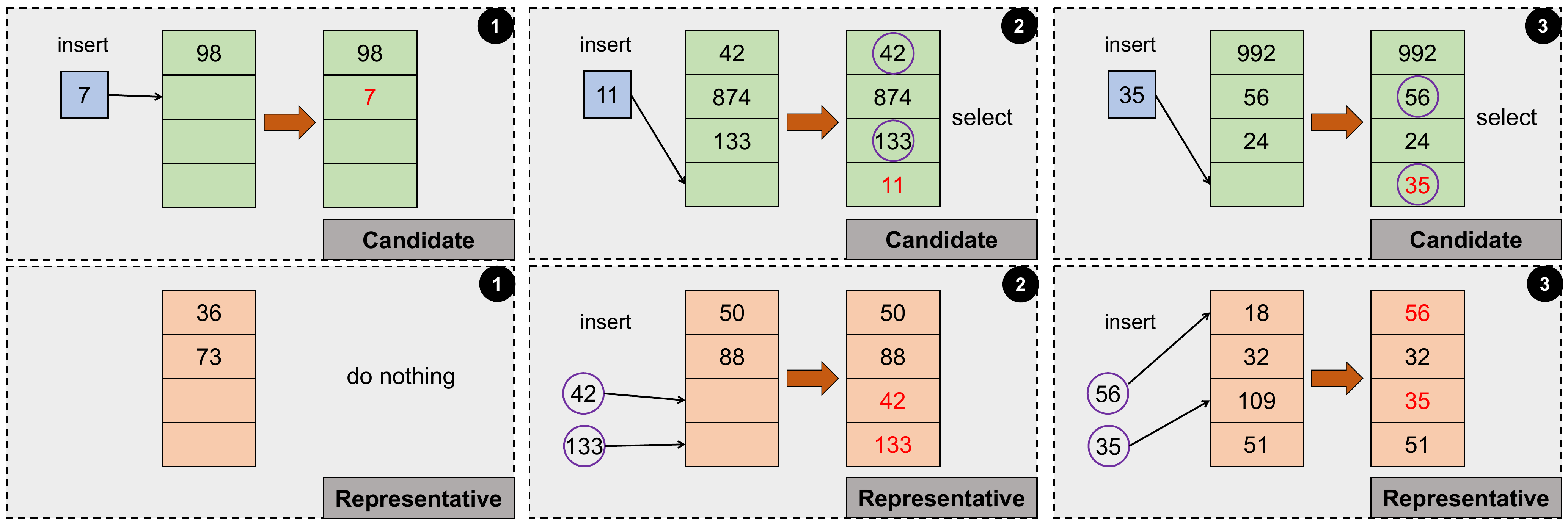}
\vspace{-0.05in}
\caption{Examples in Single-key Situation}
\label{fig::single-exp}
\vspace{-0.2in}
\end{figure*}

\noindent
\textbf{The Representative Operation}: The design goal of the Representative is to record value samples close to the target quantile. The Representative can keep $s$ value samples (we assume that $s$ is also an even number). To insert two value samples $v'$ and $v''$ (which are selected from the Candidate) into the Representative, we first check whether the Representative is full. If the Representative still has space for value samples, we just insert $v'$ and $v''$ into the Representative; Otherwise, we traverse the Representative and find the smallest and largest value samples from $s$ value samples in the Representative and the value samples to be inserted (suppose they are $v_m$ and $v_M$ respectively). The two smallest and largest value samples $v_M$ and $v_m$ will be evicted from the bucket and $v'$ and $v''$ will be inserted into the Representative. Finally, to query for the 0.5-quantile, we regard value samples in the Representative and return the median in the Representative as the 0.5-quantile of all items.

\noindent
\textbf{Running Examples:} For simplicity, we choose $r=s=4$. Figure \ref{fig::single-exp} shows three running examples in single-key situation. In the first example, to insert the value $v=7$, we insert it into the Candidate. Since the Candidate is not full, no value will be inserted into the Representative and we finish the insertion procedure. In the second example, to insert the value $v=11$, we insert 11 into the Candidate, and the Candidate will be full after insertion. Consequently, we select two median value samples from the Candidate (42 and 133 here). As the Representative still has space for these two value samples, 42 and 133 will be inserted into the Representative. In the last example, to insert the value $v=35$, we insert 35 into the Candidate. The Candidate is full after insertion, so we select two median value samples (56 and 35 here) and insert them into the Representative. Since the Representative is full, we select the maximum and the minimum values from these six numbers (18 and 109 here). 18 and 109 will be evicted from the Representative, and 56 and 35 will take their place in the Representative.

\subsection{Idea of Distribution Calibration}
\label{sec:cali}
Suppose the value of all items in the data stream follows an arbitrary distribution $F$. To support arbitrary quantile estimation in single-key situation, we apply \textbf{Distribution Calibration} technique. Its key idea is to construct a new distribution $F'$, \st{} the $w$-quantile of the original distribution $F$ is just the 0.5-quantile of the new distribution $F'$. Specifically, to ``calibrate'' the distribution, we apply probability method to generate several positive (negative) infinities and insert them into the data structure. When $w> 0.5$, we set a random variable $Z$ following geometric distribution with parameter $\frac{1}{2w}$. We insert $Z-1$ positive infinities and $v$ into \ourname{}. In this way, every value sample taken from the distribution $F'$ follows the distribution $F$ with probability $\frac{1}{2w}$, and the value sample will be positive infinity with probability $\frac{2w-1}{2w}$. The situation when $w< 0.5$ is similar, except that $Z$ follows geometric distribution with probability $\frac{1}{2-2w}$, and we insert $v$ together with $Z-1$ negative infinities instead of positive infinities into \ourname{}. We will prove that the expectation of quantile function at the query result is 0.5 \wrt{} $F'$ in Section \ref{sec:math}, hence \ourname{} gives an unbiased estimation of quantiles \wrt{} $F'$. 

\subsection{Per-key Situation: Value Sketch}
\label{sec:per}
\noindent
\textbf{\stagetwo{} Data Structure:} We find that the technique called Ostracism, which is first proposed by Elastic Sketch \cite{yang2018elastic} for frequency estimation, fits well for our problem setting in per-key quantile estimation. To cater for per-key situation in data streams, \ourname{} maintains a hash table with $u$ buckets in the \stagetwo{}. Each bucket has $d$ cells, and each cell records three fields: the key $k$, the positive votes $vote^+$, which is equal to its frequency numerically, and the value field, which includes the Representative and Candidate similar to single-key situation. Moreover, every bucket records the negative votes $vote^-$, which is equal to the number of items that collide in the bucket. 

\begin{figure*}[htbp]
\centering
\includegraphics[width=0.9\linewidth]{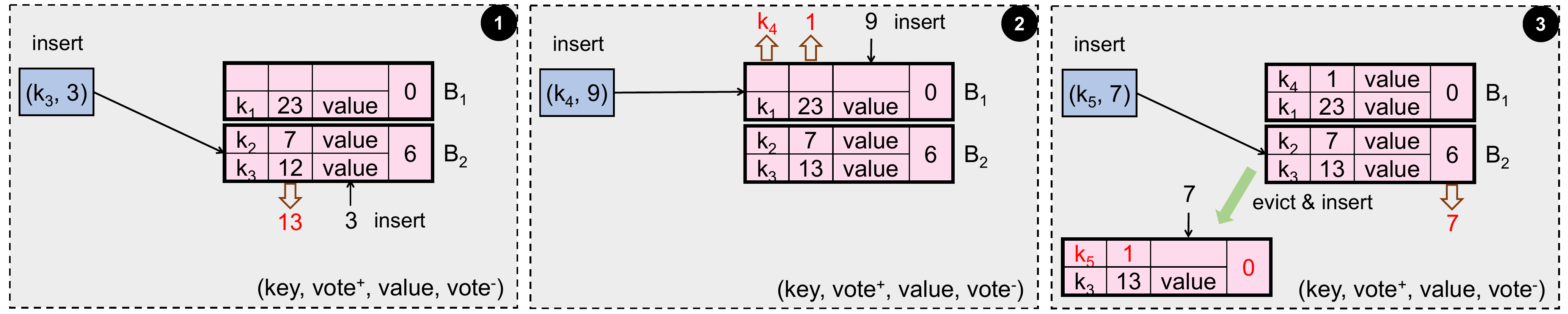}
\vspace{-0.05in}
\caption{Examples in Per-key Situation}
\label{fig::per-exp}
\vspace{-0.2in}
\end{figure*}

\noindent
\textbf{\stagetwo{} Insertion Operation:} To insert an item $e=(k, v)$, \stagetwo{} first uses the hash function $h(.)$ to map $e$ into the bucket $B[h(k)]$ and check the cells in the bucket. Specifically, there are three sub-cases:

\textit{Case 1:} The key $k$ of the item $e$ matches the key in some cell $B[h(k)][j]$. In this case, we just increment its positive vote $vote^+$ by 1 and insert the value $v$ of the item $e$ into the value field in $B[h(k)][j]$. Specifically, we generate positive (negative) infinities by probability method, and insert these infinities and the value into the Candidate. If the Candidate is full, we select two medians from the Candidate, insert them into the Representative and clear the Candidate. Moreover, if the Representative is full, we evict the largest and smallest value samples to make room for new value samples, which is just similar to single-key situation.

\textit{Case 2:} The key $k$ of the item $e$ does not match any key in bucket $B[h(k)]$, but the bucket still has empty cells. In this case, we insert the item $e=(k,v)$ into one empty cell (suppose it is $B[h(k)][j]$). The key in $B[h(k)][j]$ is set to $k$, and the positive vote $vote^+$ is set to 1. Finally, we insert the value $v$ of the item $e$ into the value field in $B[h(k)][j]$. 

\textit{Case 3:} The key $k$ of the item $e$ does not match any key in bucket $B[h(k)]$, and the bucket does not have empty cells. In this case, we increment the negative votes $vote^-$ of bucket $B[h(k)]$ by 1, which shows that hash collision happens. Then, we find the cell in bucket $B[h(k)]$ with minimum positive vote $vote^+$ (suppose it is $B[h(k)][j]$). We now check whether $\frac{B[h(k)].vote^-}{B[h(k)][j].vote^+}\geq \lambda$ holds: If the inequality holds after we increment $B[h(k)].vote^-$, we treat this item as an infrequent item: we evict the item from \stagetwo{} and insert $e=(k,v)$ into the cell by clearing $B[h(k)][j]$, initializing $B[h(k)][j].k$ to $k$, setting $B[h(k)][j].vote^+$ to 1, inserting the value $v$ into the value field $B[h(k)][j].value$ and resetting $B[h(k)].vote^-$ to 0; If the inequality does not hold, then nothing happens: no item will be evicted from \stagetwo{}, and $e$ will not be inserted into \stagetwo{} either.

\noindent
\textbf{\stagetwo{} Query Operation:} The query operation is simple. To query the $w$-quantile of value for items with key $k$, we again use the hash function $h(.)$ to locate bucket $B[h(k)]$. \stagetwo{} then traverses the bucket to find the cell which records $k$. \stagetwo{} will return the query result of the value field as the $w$-quantile of value for items with key $k$.

\noindent
\textbf{Running Examples:} For simplicity, we choose $d=u=2$ and $\lambda=1$. Figure \ref{fig::per-exp} shows three running examples of \stagetwo{} in per-key situation. In the first example, to insert item $e=(k_3, 3)$, we use a hash function $h(.)$ to map it into the bucket $B[2]$. $k_3$ matches the key stored in the second cell, so we increment the $vote^+$ by 1 and insert the value 3 into the value field. In the second example, to insert item $f=(k_4, 9)$, we map it into the bucket $B[1]$. The key $k_4$ does not match any key in the bucket, but the first cell $B[1][1]$ is empty. Hence, we insert $(k_4, 9)$ into $B[1][1]$: the key field is set to $k_4$, the $vote^+$ is set to 1 and the value 9 is inserted into the value field. In the last example, to insert item $g=(k_5, 7)$, we map it into the bucket $B[2]$. The key $k_5$ does not match any key in the bucket, and the bucket does not have empty cell either. So we increment $vote^-$ by 1. Next we find the minimum $vote^+$ in the bucket $B[2]$, and check whether the inequality $\frac{vote^-}{vote^+}\geq \lambda$ holds. The cell $B[2][1]$ has the minimum $vote^+$, and $\frac{vote^-}{vote^+}=1$. Consequently, we clear the cell $B[2][1]$ and insert $g=(k_5, 7)$ into $B[2][1]$: the key field is initialized as $k_5$, $vote^+$ is set to 1, the value 7 is inserted into the value field and $vote^-$ is reset to 0.

\subsection{Optimization: Early Screening}
\label{sec:tower}
\noindent
\textbf{Idea of Early Screening:} To support per-key point-quantile estimation, \ourname{} records the value of both frequent and infrequent items. However, the quantile of infrequent items cannot be accurately estimated due to their low frequency. In addition, it is widely known that the distribution of real datasets is usually skewed, and items with low frequency make up the majority of the data stream. Inspired by the Cold Filter \cite{coldfilter}, \ourname{} proposes \textbf{Early Screening} to effectively filter infrequent items in advance: \ourname{} uses a \stageone{} which records the frequency of every item, and only items with frequency exceeding the predefined threshold are allowed to enter \stagetwo{}. In this way, \stageone{} automatically separates frequent items from infrequent items, and allows \ourname{} to keep the information of frequent items in \stagetwo{} as accurately as possible.

\noindent
\textbf{\ourname{} Insertion Operation:} To insert an item $e=(k,v)$, \ourname{} first uses its key $k$ to query \stageone{} to get its frequency. If its frequency exceeds the threshold $T$, we insert $e=(k,v)$ into \stagetwo{}; Otherwise, we simply insert $e$ into \stageone{} and return.

\noindent
\textbf{\ourname{} Query Operation:} The query operation is simple: To query the $w$-quantile of value for items with key $k$, \ourname{} just query \stagetwo{} to get the estimation: namely, use the hash function $h(.)$ to locate the bucket $B[h(k)]$ in \stagetwo{}, get the cell which records the information of key $k$, and query the value field to give an estimation of the quantile.

\section{Mathematical Analysis}
\label{sec:math}

In this section, we conduct mathematical analysis for \ourname{}. We first prove that the query result of \ourname{} is unbiased on the new distribution in Section \ref{math::unbias}. Then we derive the error bound for both the Candidate and the Representative, and give the space complexity in single-key situation in Section \ref{math::single}. We give an error bound for \stagetwo{} in per-key situation on Section \ref{math::per}. Finally we analyze the overall time complexity of \ourname{} in Section \ref{math::time}. 

\subsection{Proof of Unbiasedness}
In this part, we assume that the value of all items in the data stream follows an arbitrary continuous distribution $F$, and the calibrated distribution is $F'$. All quantiles are \wrt{} the calibrated distribution $F'$. We prove that the expectation of quantile function at the query result \wrt{} $F'$ is just 0.5 in single-key situation. 
\label{math::unbias}

\begin{theorem}
\label{thm:uniform}
    For every value or infinity $v$ inserted into \stagetwo{}, the distribution of the quantile function at $v$ (\wrt{} $F'$) is uniform distribution on the interval $[0,1]$. 
\end{theorem}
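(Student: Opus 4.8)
The plan is to reduce the statement to the probability integral transform, after making precise what ``the quantile function at $v$ with respect to $F'$'' means. Assume first $w\ge\tfrac12$; the case $w=\tfrac12$ is vacuous (no infinities are ever inserted, so $F'=F$ and the claim is the classical fact that $F(v)$ is uniform when $v\sim F$ and $F$ is continuous), and $w<\tfrac12$ is symmetric via reflecting values and replacing positive infinities by negative ones. Write $p=\tfrac1{2w}\in(0,1]$. Distribution Calibration inserts, for each incoming item, one value drawn from $F$ together with $Z-1$ copies of $+\infty$, where $Z\sim\mathrm{Geom}(p)$. The first step is the observation that, read one inserted token at a time, the whole calibrated stream is distributed exactly as an i.i.d.\ sequence from the mixed law $F'$ that returns a draw of $F$ with probability $p$ and returns $+\infty$ with probability $1-p$: the number of infinities between two consecutive finite values is $Z-1$, which is precisely the number of failures between consecutive successes in i.i.d.\ $\mathrm{Bernoulli}(p)$ trials.

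Next I would compute the quantile function of a token, interpreting ``quantile function at $v$ w.r.t.\ $F'$'' as the $n\to\infty$ limit of the combinatorial quantile function $\tfrac{k-1}{n-1}$ applied to the first $n$ inserted tokens (with ties among infinities broken by insertion order). Since all finite values precede all infinities in sorted order, there are two cases. If $v$ is a finite token, then by the law of large numbers its normalized rank converges a.s.\ to $\Pr_{X\sim F'}[X<v]=p\,F(v)$; as $v\sim F$ with $F$ continuous, $F(v)$ is uniform on $[0,1]$, so the quantile function at $v$ is uniform on $[0,p]$. If $v$ is the $j$-th inserted infinity, it is preceded in sorted order by the $\approx pn$ finite tokens and by the $j-1$ earlier infinities, so its normalized rank is $p+j/n+o(1)$; letting $n\to\infty$ with $j$ uniform over $\{1,\dots,(1-p)n\}$ shows that the quantile function at a uniformly chosen infinity token is uniform on $[p,1]$. (Equivalently, one may replace the atom at $+\infty$ by $\mathrm{Unif}[M,2M]$ to make $F'$ continuous, apply the probability integral transform directly, and let $M\to\infty$.)

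The final step is the mixture. By the token-by-token i.i.d.\ correspondence of the first step, a uniformly chosen inserted token is finite with probability $p$ and an infinity with probability $1-p$ --- crucially this counts tokens, not batches, so the geometric batch sizes create no size-bias. Hence the density of the quantile function at a uniformly chosen inserted token is $p\cdot\tfrac1p=1$ on $[0,p]$ (the finite case) and $(1-p)\cdot\tfrac1{1-p}=1$ on $[p,1]$ (the infinity case), i.e.\ it is uniform on $[0,1]$. Combining with the symmetric treatment of $w<\tfrac12$ finishes the proof.

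The main obstacle I expect is not a computation but the formalization: the quantile function is defined only for finite multisets, so ``quantile function at $v$ w.r.t.\ $F'$'' must be assigned a precise meaning (the limiting normalized rank above, or the continuous-embedding trick), and one must handle the ties among the infinities carefully, since an individual $+\infty$ has a well-defined rank only once tokens rather than values are tracked. The secondary subtlety is that the mixture weights $p$ and $1-p$ must refer to tokens rather than to per-item batches, which is exactly what the i.i.d.\ reformulation in the first step secures.
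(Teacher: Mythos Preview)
Your proposal is correct and follows essentially the same approach as the paper: both argue that a token is finite with probability $p=\tfrac{1}{2w}$, in which case its $F'$-quantile is uniform on $[0,p]$ by the probability integral transform, and is infinite with probability $1-p$, in which case the quantile is uniform on $[p,1]$, so the mixture is uniform on $[0,1]$. The paper's version is much terser (it simply writes $\mathbb{P}(\tilde w\le x)=\tfrac{1}{2w}\cdot 2wx=x$ for $x\le \tfrac{1}{2w}$ and says ``similarly'' for the upper range), whereas you are more explicit about the two genuine subtleties the paper glosses over---the Bernoulli/geometric correspondence that justifies the token-level probability $p$ without size-bias, and a tie-breaking convention (insertion order or continuous embedding) that gives the infinities distinct ranks.
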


\begin{proof}
    Without loss of generality, we assume $w> 0.5$. Suppose the quantile functions at $v$ \wrt{} $F$ and $F'$ are $\hat{w}$ and $\tilde{w}$ respectively, then for $0\leq x\leq \frac{1}{2w}$, 
    \[
    \mathbb{P}(\tilde{w}\leq x) = \mathbb{P}(\text{$v$ is not infinity}) \cdot \mathbb{P}(\hat{w}\leq 2wx) = \frac{1}{2w} \cdot 2wx = x. 
    \]
    Similarly, we can prove that $\mathbb{P}(\tilde{w}\leq x)=x$ for $\frac{1}{2w}\leq x\leq 1$, hence $\tilde{w}$ obeys a uniform distribution on $[0,1]$.  
\end{proof}

\begin{theorem}
\label{thm:unbiased}
    Suppose \ourname{} returns $v$ as the query result, and the quantile function at $v$ is $\tilde{w}$, then the expectation of $\tilde{w}$ is 0.5. 
\end{theorem}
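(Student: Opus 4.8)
The plan is to lift the whole argument from raw values to \emph{quantile-function values} and then win by a reflection symmetry. By the calibration construction (\S\ref{sec:cali}), the elements actually fed into the value field are i.i.d.\ samples from the calibrated distribution $F'$ (each sample is a draw from $F$ with probability $\tfrac{1}{2w}$ and is $\pm\infty$ otherwise), so by Theorem~\ref{thm:uniform} the sequence of quantile functions $q_1,q_2,\dots,q_N$ of these inserted elements (all w.r.t.\ $F'$, $N$ the random number of insertions) is a sequence of i.i.d.\ $\mathrm{Uniform}[0,1]$ random variables; in particular it is atomless, so a.s.\ no two coincide. Since the quantile function is monotone in the value, every operation the algorithm performs on the value field is an order-statistic operation: inserting into the Candidate, extracting the two central order statistics of the full Candidate, feeding that pair into the Representative, evicting the current minimum and maximum when the Representative overflows, and finally reporting the median of the Representative. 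Hence running the algorithm on the raw values is a.s.\ the same as running it on the $q_i$, and the reported quantile $\tilde w$ is a fixed deterministic function $\tilde w=g(q_1,\dots,q_N)$ of the i.i.d.\ uniforms.

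Next I would exploit the order-reversing map $R\colon q\mapsto 1-q$. Each primitive above is self-dual under $R$: the two central order statistics of an even-size multiset are sent to the very same pair; ``evict the minimum and the maximum'' is sent to itself (minimum and maximum swap roles); and ``report the median of the Representative'' is sent to itself once we fix the convention that the median of an even-size multiset $\{Q_{(1)}\le\cdots\le Q_{(s)}\}$ is $\tfrac12\bigl(Q_{(s/2)}+Q_{(s/2+1)}\bigr)$ --- the natural choice here, since the Representative maintains both central slots. Therefore $g(Rq_1,\dots,Rq_N)=R\,g(q_1,\dots,q_N)=1-g(q_1,\dots,q_N)$. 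Because the $q_i$ are i.i.d.\ and each is symmetric about $\tfrac12$ (and $N$ is unchanged by merely relabelling values), $(q_1,\dots,q_N)$ and $(Rq_1,\dots,Rq_N)$ have the same joint law, so $\tilde w=g(q_1,\dots,q_N)$ and $1-\tilde w=g(Rq_1,\dots,Rq_N)$ are identically distributed. Hence $\mathbb{E}[\tilde w]=\mathbb{E}[1-\tilde w]$, i.e.\ $\mathbb{E}[\tilde w]=\tfrac12$. The per-key case is identical after conditioning on the multiset of items routed to the cell for key $k$, since Early Screening and Ostracism act only through keys and frequencies, never through values.

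I expect the main obstacle to be bookkeeping rather than the symmetry idea. First, one must make the reduction to $q$-values fully rigorous: confirming that the geometric ``block'' of $Z-1$ infinities plus one genuine value, concatenated across items, really does give an i.i.d.\ $F'$-stream, and checking that the monotone-but-not-strictly-monotone link between values and quantile functions --- in particular the atom of $F'$ at $\pm\infty$ and how ``the quantile function at an infinity'' is defined so that Theorem~\ref{thm:uniform} literally holds and remains $R$-invariant --- does not introduce ordering ambiguities that affect the output; this is exactly where atomlessness of the $q_i$ and the continuous-$F$ idealization the paper adopts are used. Second, one must commit to the even-size-median convention above so that the final extraction step is genuinely self-dual; with the averaged-median convention the conclusion $\mathbb{E}[\tilde w]=\tfrac12$ is exact, whereas a single ``lower-median'' pick would shift it by $O(1/s)$, so stating the convention explicitly is essential.
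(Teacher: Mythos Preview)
Your proposal is correct and uses essentially the same idea as the paper: both arguments hinge on the reflection symmetry $q\mapsto 1-q$ to show that $\tilde w$ and $1-\tilde w$ are identically distributed, from which $\mathbb{E}[\tilde w]=\tfrac12$ follows immediately. The paper's proof is a two-line sketch (``$\tilde w_1$ and $1-\tilde w_2$ have identical distribution, hence $\tilde w$ and $1-\tilde w$ do''), whereas you spell out the lift to uniform $q$-values, verify that each primitive of the algorithm is an order-statistic operation and is self-dual under $R$, and explicitly flag the even-$s$ median convention --- a genuine subtlety the paper leaves implicit, since with a single lower-median pick the conclusion would indeed be off by $O(1/s)$.
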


\begin{proof}
    The proof is based on symmetry: for every two value samples $v_1<v_2$ sent to the Representative, suppose the quantile function at $v_1$ and $v_2$ are $\tilde{w}_1$ and $\tilde{w}_2$ respectively, then $\tilde{w}_1$ and $1-\tilde{w}_2$ have identical distribution. As a result, $\tilde{w}$ and $1-\tilde{w}$ have identical distribution, and $\mathbb{E}\tilde{w} = \mathbb{E}(1-\tilde{w})$. Hence $\mathbb{E} \tilde{w} = 0.5$.
\end{proof}

\subsection{Error Bound in Single-key Situation}
\label{math::single}
In this part, we analyze the property of the Candidate and the Representative in single-key situation respectively. We first derive the error bound of the Candidate, then we use the error bound of Candidate to show the error bound of the Representative. Finally we prove that the space complexity of the Candidate and the Representative in single-key situation is $O\left(\frac{1}{\varepsilon}\sqrt{\log \frac{1}{\delta}}\right)$. We assume that the value of all items in the data stream follows an arbitrary distribution $F$, and the calibrated distribution is $F'$. All quantiles are \wrt{} the calibrated distribution $F'$ in this part.

\begin{theorem}
\label{thm:candidate}
For every value $v$ which is sent from the Candidate to the Representative, suppose the real quantile function at $v$ is $\hat{w}$. Then for any small positive number $\varepsilon$, 
\begin{equation}
    \mathbb{P}(\hat{w}-0.5<-\varepsilon) = \mathbb{P}(\hat{w}-0.5>\varepsilon) \leq e^{-\varepsilon ^2r}.
\end{equation}
\end{theorem}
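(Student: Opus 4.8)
The plan is to reduce the statement to a concentration bound for a median of i.i.d.\ uniform random variables. First I would invoke Theorem~\ref{thm:uniform}: at the instant the Candidate becomes full it holds the values of exactly $r$ items (it is cleared after every flush), and by Theorem~\ref{thm:uniform} each such item contributes a quantile function (w.r.t.\ $F'$) that is $\mathrm{Uniform}[0,1]$. Under the paper's i.i.d.\ stream model, and since the geometric calibration noise is drawn independently for each item, these $r$ quantile-function values are mutually independent; call them $U_1,\dots,U_r\overset{\text{iid}}{\sim}\mathrm{Uniform}[0,1]$.

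Second I would use monotonicity to identify the object of interest. The map sending a stored sample to its quantile function w.r.t.\ $F'$ is non-decreasing in the sample's value (this remains true even though $F'$ carries atoms at $\pm\infty$), so the two median values extracted from the Candidate---the $(r/2)$-th and $(r/2+1)$-th order statistics of the stored values---have quantile functions $U_{(r/2)}$ and $U_{(r/2+1)}$, the corresponding order statistics of the $U_i$. The symmetric equality $\mathbb{P}(\hat w-0.5<-\varepsilon)=\mathbb{P}(\hat w-0.5>\varepsilon)$ then follows from the reflection $U_i\mapsto 1-U_i$, under which $U_{(r/2)}$ and $1-U_{(r/2+1)}$ are equal in distribution; hence the left tail of the lower median coincides with the right tail of the upper median (and vice versa), which is exactly the symmetry claimed for ``a value sent to the Representative''.

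Third, for the quantitative estimate I would convert the order-statistic event into a binomial tail. For the lower median, $\{U_{(r/2)}<0.5-\varepsilon\}$ is precisely the event that at least $r/2$ of the $U_i$ lie below $0.5-\varepsilon$; writing $X_i=\mathbf{1}\{U_i<0.5-\varepsilon\}$, these are i.i.d.\ $\mathrm{Bernoulli}(0.5-\varepsilon)$, so $\mathbb{P}(U_{(r/2)}<0.5-\varepsilon)=\mathbb{P}\big(\sum_i X_i\ge r/2\big)=\mathbb{P}\big(\sum_i X_i-\mathbb{E}\sum_i X_i\ge r\varepsilon\big)$. Hoeffding's inequality bounds this by $\exp(-2r\varepsilon^2)\le\exp(-r\varepsilon^2)$, which is the claimed bound; the three remaining tail events ($U_{(r/2)}>0.5+\varepsilon$, $U_{(r/2+1)}<0.5-\varepsilon$, $U_{(r/2+1)}>0.5+\varepsilon$) are handled identically with the complementary indicator, and in each the deviation from the mean is at least $r\varepsilon$, so the same bound applies.

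The main obstacle is not the probability calculation---once phrased as a binomial tail it is a one-line Hoeffding/Chernoff estimate, in fact with a better constant than stated---but the bookkeeping: justifying that the Candidate really presents $r$ mutually independent $\mathrm{Uniform}[0,1]$ quantile-function values at each flush, and arguing cleanly that the quantile function of the $k$-th order statistic of the values equals the $k$-th order statistic of the quantile-function values when $F'$ has point masses at $\pm\infty$. I expect the cleanest route is to work throughout with the quantile-function values directly, as Theorem~\ref{thm:uniform} invites, rather than with the raw values.
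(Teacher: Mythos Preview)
Your proposal is correct and follows essentially the same route as the paper: both reduce the event $\{\hat w-0.5<-\varepsilon\}$ to the binomial tail $\{\sum_i \mathbf{1}\{U_i<0.5-\varepsilon\}\ge r/2\}$ for i.i.d.\ $U_i\sim\mathrm{Uniform}[0,1]$ (via Theorem~\ref{thm:uniform}) and then apply a concentration bound. The only difference is that the paper uses the multiplicative Chernoff bound and an approximation $\eta^2\mu/2\approx \varepsilon^2 r$, whereas you apply Hoeffding and obtain the sharper $e^{-2\varepsilon^2 r}$ without approximation.
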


To prove Theorem \ref{thm:candidate}, we first cite the famous Chernoff Bound. 

\begin{lemma}[Chernoff Bound]
\label{thm:chernoff}
Suppose $X_1, \cdots, X_n$ are i.i.d. random variables taking values in $\{0, 1\}$. Let $X=X_1+\cdots +X_n$ denote their sum and $\mu = \mathbb{E}X$ denote its expectation. For any positive number $\eta >0$, we have 
\begin{equation}
 \mathbb{P}(X\geq (1+\eta)\mu) \leq e^{-\eta^2\mu / 2}.    
\end{equation}
\end{lemma}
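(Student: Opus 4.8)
The plan is to establish Lemma~\ref{thm:chernoff} by the classical exponential-moment (Chernoff--Bernstein) method, whose only ingredients are Markov's inequality, independence, and the elementary bound $1+x \le e^x$. First I would introduce a free parameter $t>0$ and observe that, since $x\mapsto e^{tx}$ is increasing, the event $\{X\ge (1+\eta)\mu\}$ coincides with $\{e^{tX}\ge e^{t(1+\eta)\mu}\}$; applying Markov's inequality to the nonnegative random variable $e^{tX}$ gives
\[
\mathbb{P}\big(X \ge (1+\eta)\mu\big) \;\le\; e^{-t(1+\eta)\mu}\,\mathbb{E}\big[e^{tX}\big].
\]
Thus the whole problem reduces to controlling the moment generating function $\mathbb{E}[e^{tX}]$ and then choosing $t$ well.

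For the moment generating function I would use independence of the $X_i$ to factor $\mathbb{E}[e^{tX}] = \prod_{i=1}^{n}\mathbb{E}[e^{tX_i}]$. Writing $p = \mathbb{E}X_i$ (so that $\mu = np$), the fact that each $X_i$ takes values in $\{0,1\}$ yields the exact identity $\mathbb{E}[e^{tX_i}] = 1 + p(e^t-1)$, which by $1+x\le e^x$ is at most $e^{p(e^t-1)}$. Multiplying over $i$ gives $\mathbb{E}[e^{tX}] \le e^{\mu(e^t-1)}$, and therefore
\[
\mathbb{P}\big(X \ge (1+\eta)\mu\big) \;\le\; \exp\!\Big(\mu\big(e^t - 1 - t(1+\eta)\big)\Big).
\]

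It then remains to optimize the exponent over $t>0$. Differentiating $e^t - 1 - t(1+\eta)$ in $t$ and setting the derivative to zero gives $t = \ln(1+\eta)$, which is positive since $\eta>0$; substituting this value back produces the familiar bound $\left(\dfrac{e^{\eta}}{(1+\eta)^{1+\eta}}\right)^{\mu}$. The proof is finished by a routine single-variable calculus estimate on the function $\eta \mapsto e^{\eta}(1+\eta)^{-(1+\eta)}$ (comparing value and derivative at $\eta=0$, or matching Taylor expansions in the relevant range of $\eta$), which collapses the last display to $e^{-\eta^2\mu/2}$.

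There is no serious obstacle here, since Lemma~\ref{thm:chernoff} is a textbook tool; if I had to isolate the one nontrivial point, it is the choice of the exponential parameter $t=\ln(1+\eta)$ — the entire quantitative strength of the bound comes from this optimization — together with the terminal elementary inequality, while everything else is the mechanical Markov-plus-independence-plus-$(1+x\le e^x)$ pipeline.
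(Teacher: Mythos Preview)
The paper does not prove Lemma~\ref{thm:chernoff}; it simply cites it as the ``famous Chernoff Bound'' and uses it as a black-box tool in the proof of Theorem~\ref{thm:candidate}. So there is no argument in the paper to compare against, and your exponential-moment outline is indeed the standard textbook route.

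That said, your final step contains a genuine gap. The ``routine single-variable calculus estimate'' you invoke to collapse $\bigl(e^{\eta}/(1+\eta)^{1+\eta}\bigr)^{\mu}$ down to $e^{-\eta^{2}\mu/2}$ does not go through: taking logarithms, you would need $(1+\eta)\ln(1+\eta)-\eta \ge \eta^{2}/2$, but the left side has derivative $\ln(1+\eta)$ while the right side has derivative $\eta$, and since $\ln(1+\eta)\le\eta$ for all $\eta\ge 0$ the inequality actually fails for every $\eta>0$. Concretely, at $\eta=1$ the optimized bound is $(e/4)^{\mu}\approx e^{-0.386\mu}$, strictly larger than $e^{-\mu/2}$. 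The standard simplifications of the optimized Chernoff exponent give $e^{-\eta^{2}\mu/(2+\eta)}$ for all $\eta>0$, or $e^{-\eta^{2}\mu/3}$ for $0<\eta\le 1$; the constant $2$ in the denominator is correct for the \emph{lower} tail $\mathbb{P}\bigl(X\le(1-\eta)\mu\bigr)$, not the upper one. So either the lemma as stated carries a small misprint, or it is meant only as an approximation for small $\eta$---which is exactly how the paper applies it, with $\eta=\varepsilon/(0.5-\varepsilon)$ and $\varepsilon$ small. Your argument is fine up through the optimization $t=\ln(1+\eta)$; just do not describe the terminal inequality as routine, because in the stated direction it is false.
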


Then we apply this bound to prove Theorem \ref{thm:candidate}. 

\begin{proof}
Assume $r$ value samples in the Candidate are $v_1, \cdots, v_r$, and $w_1, \cdots, w_r$ are the quantile function at $v_1, \cdots, v_r$ respectively. We define indicative variables 
\[
X_j = 1_{\{w_j < 0.5-\varepsilon\}}, Y_j = 1_{\{w_j > 0.5+\varepsilon\}}, j=1, \cdots, r. 
\]
Let $X=X_1 + \cdots +X_r$ and $Y=Y_1+\cdots +Y_r$, then $\mathbb{E}X=\mathbb{E}Y=(0.5-\varepsilon)r$. Let $\eta = \frac{\varepsilon}{0.5-\varepsilon}$. Applying Lemma \ref{thm:chernoff}, we get 
\[
\mathbb{P}(X \geq (1+\eta) \mathbb{E}X) = \mathbb{P}(Y \geq (1+\eta) \mathbb{E}Y) \leq e^{-\frac{\eta^2 \mathbb{E}X}{2}} \approx e^{-\varepsilon^2 r}.
\]
Note that $\hat{w}-0.5<-\varepsilon$ if and only if $X \geq 0.5r$, and $\hat{w}-0.5>\varepsilon$ if and only if $Y\geq 0.5r$, so we get the result. 
\end{proof}

\begin{theorem}
\label{thm::representative}
    Assume that the number of items in the data stream is sufficiently large. Suppose the Representative reports $v$ as the query result, and the real quantile function at $v$ is $\tilde{w}$. Then for any small positive number $\varepsilon$, 
    \begin{equation}
        \mathbb{P}(|\tilde{w}-0.5|>\varepsilon) \leq 2e^{-\varepsilon^2rs}. 
    \end{equation}
\end{theorem}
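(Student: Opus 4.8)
The plan is to reduce the two-sided bound to a one-sided tail by symmetry, and then to control that tail by feeding Theorem~\ref{thm:candidate} (which governs a single value handed from the Candidate to the Representative) into a Chernoff-type estimate over the $s$ slots of the Representative. First I would use symmetry, exactly as in the proof of Theorem~\ref{thm:unbiased}: for each pair of values sent to the Representative the quantile functions at the smaller and the larger one are mirror images of one another about $0.5$, so the reported median $\tilde w$ and $1-\tilde w$ are equidistributed. Hence $\{\tilde w > 0.5+\varepsilon\}$ and $\{\tilde w < 0.5-\varepsilon\}$ are disjoint and equiprobable, and it suffices to prove $\mathbb{P}(\tilde w > 0.5+\varepsilon) \le e^{-\varepsilon^2 rs}$.

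Next I would pass from the reported median to a counting event. Call a value \emph{high} if its quantile function w.r.t.\ $F'$ exceeds $0.5+\varepsilon$; by Theorem~\ref{thm:candidate} a value sent from a full Candidate to the Representative is high with probability at most $p := e^{-\varepsilon^2 r}$. Since the reported value is (essentially) the $\lceil s/2\rceil$-th order statistic of the $s$ values currently stored in the Representative, the event $\{\tilde w > 0.5+\varepsilon\}$ forces the number $M$ of high values held by the Representative to satisfy $M \ge s/2$, so it remains to bound $\mathbb{P}(M \ge s/2)$. The crucial structural fact is that, because every overflow evicts the current maximum (which is high whenever any high value is present), $M$ can increase on a given round only when \emph{both} values handed over by the full Candidate are high, and otherwise stays put or decreases by one, never dropping below $0$. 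The probability that both Candidate medians are high is itself at most $e^{-\varepsilon^2 r}=p$, by the very Chernoff bound on the $r$ Candidate entries used in Theorem~\ref{thm:candidate}. Thus $M$ is dominated by a reflected birth--death walk with up-probability $\le p$ and down-probability bounded away from $0$; once the stream is long enough for this walk to equilibrate, $\mathbb{P}(M \ge s/2) \le \big(\tfrac{p}{1-p}\big)^{s/2}$, and substituting $p \le e^{-\varepsilon^2 r}$ and absorbing the remaining constants (as the proof of Theorem~\ref{thm:candidate} already does with its ``$\approx$'') gives $\mathbb{P}(\tilde w > 0.5+\varepsilon) \le e^{-\varepsilon^2 rs}$. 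Doubling for the two symmetric tails finishes the argument.

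I expect the main obstacle to lie in making the eviction-dynamics step honest: the $s$ values stored in the Representative are a history-dependent, deterministic function of the stream, not an i.i.d.\ sample, so one must carefully justify that the eviction rule can only help -- i.e., that $M$ is stochastically dominated by the claimed birth--death walk -- and that the ``sufficiently large stream'' hypothesis supplies enough mixing for the stationary-tail estimate to be valid. Pinning down the exact constant in the exponent, as opposed to the leading $\varepsilon^2 rs$ behaviour, is the other place where either extra care or the same kind of approximation already used in Theorem~\ref{thm:candidate} is needed.
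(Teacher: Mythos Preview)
Your overall strategy matches the paper's: both arguments track the count of ``bad'' values in the Representative, model it as a reflected birth--death chain (the paper states this as Lemma~\ref{thm:markov}), appeal to stationarity for long streams, and plug in the Candidate bound from Theorem~\ref{thm:candidate}. Your symmetry reduction and the paper's separate treatment of the two tails are equivalent.

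There is, however, a quantitative slip in your final step. With up-probability bounded by $p=e^{-\varepsilon^2 r}$ and down-probability bounded below by a constant (or by $1-p$), the stationary tail of the birth--death chain is
\[
\mathbb{P}(M\ge s/2)\;\lesssim\;\Bigl(\tfrac{p}{1-p}\Bigr)^{s/2}\;\approx\;p^{s/2}\;=\;e^{-\varepsilon^2 rs/2},
\]
not $e^{-\varepsilon^2 rs}$. ``Absorbing constants'' cannot repair a factor of two in the exponent. The paper recovers the full exponent $\varepsilon^2 rs$ precisely because it sets the up-probability to $x=p^2$ and the down-probability to $y=(1-p)^2$: then $u=x/y=(p/(1-p))^2$ and the stationary tail is $u^{s/2}=(p/(1-p))^{s}\approx p^s$. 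In other words, the paper treats the two medians handed over in a single round as if each were independently low with probability $p$, so that ``both low'' has probability $p^2$; your (more cautious) observation that ``both high'' is a single Chernoff event of probability $\le p$ costs you exactly this squaring. To reproduce the stated bound you must adopt the paper's parameterization $x=p^2$, $y=(1-p)^2$ for the walk, not $x\le p$.
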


To prove Theorem \ref{thm::representative}, we first show a lemma in stochastic process and prove it. 

\begin{lemma}[Random Walk with Reflecting Barrier]
\label{thm:markov}
    Consider a random walk in $\{0, 1, \cdots, s\}$, with transition probability 
    \begin{align*}
        &\mathbb{P}(X_{l+1}=i+1|X_l=i) = x, & 0\leq i\leq s-1, \\
        &\mathbb{P}(X_{l+1}=i-1|X_l=i) = y, & 1\leq i\leq s,  \\
        &\mathbb{P}(X_{l+1}=i|X_l=i) = 1-x-y, &1\leq i\leq s-1, 
    \end{align*}
    and 
    \[
    \mathbb{P}(X_{l+1}=0|X_l=0) = 1-x, \mathbb{P}(X_{l+1}=s|X_l=s) = 1-y, 
    \]
    where $X_l$ denotes the position at time $l$. Its stationary distribution is 
    \begin{equation}
    \pi_i = \frac{\left(\frac{x}{y}\right)^i}{\sum _{j=0}^s \left(\frac{x}{y}\right)^j}.    
    \end{equation}
\end{lemma}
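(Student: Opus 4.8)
The plan is to exploit the fact that the chain described is a birth–death chain: every transition moves to a nearest neighbour (or stays put), so the chain is reversible, and its stationary distribution can be found by solving the detailed balance equations rather than the full linear system $\pi = \pi P$. Concretely, I would first write the detailed balance relation across each edge $\{i,i+1\}$, namely $\pi_i\,\mathbb{P}(X_{l+1}=i+1\mid X_l=i) = \pi_{i+1}\,\mathbb{P}(X_{l+1}=i\mid X_l=i+1)$, which here reads $\pi_i x = \pi_{i+1} y$ for every $0\le i\le s-1$. This yields the recurrence $\pi_{i+1} = (x/y)\pi_i$ and hence $\pi_i = (x/y)^i\pi_0$; since $x/y>0$, all entries are positive, so the candidate is a genuine probability vector once normalized.

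Second, I would fix the free constant $\pi_0$ by the normalization $\sum_{i=0}^{s}\pi_i = 1$, giving $\pi_0 = \bigl(\sum_{j=0}^{s}(x/y)^j\bigr)^{-1}$ and therefore exactly the claimed formula. The geometric sum need not be simplified, and the degenerate case $x=y$ is subsumed automatically.

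Third, I would justify that this candidate is indeed \emph{the} stationary distribution. A probability vector satisfying detailed balance automatically satisfies $\pi = \pi P$ (sum the balance equations over one row), so $\pi$ is stationary; and because $x,y>0$ the chain is irreducible on a finite state space, so the stationary distribution is unique and equals our $\pi$.

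The one spot that needs a little care — the closest thing to an obstacle here — is the two boundary states $0$ and $s$, whose transition rows are different (self-loops of probability $1-x$ and $1-y$). I would note that the self-loop terms contribute nothing to detailed balance across an edge, so the relation $\pi_i x = \pi_{i+1} y$ remains valid for $i=0$ and $i=s-1$ with no change; equivalently, one can check the global balance equations at $0$ and $s$ directly, obtaining $\pi_0 x = \pi_1 y$ and $\pi_s y = \pi_{s-1} x$, which coincide with the interior recurrence. After this observation everything reduces to a one-line substitution, so no genuine difficulty remains.
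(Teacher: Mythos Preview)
Your argument is correct and is the standard derivation: this is a birth--death chain, so detailed balance $\pi_i x = \pi_{i+1} y$ immediately gives the geometric form, normalization fixes the constant, and irreducibility on a finite state space gives uniqueness. Your remark about the boundary states is also right --- the modified self-loop probabilities at $0$ and $s$ do not affect the edge-balance relations.

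As for comparison with the paper: despite the sentence ``we first show a lemma in stochastic process and prove it,'' the paper does not actually include a proof of this lemma. The \texttt{proof} environment that follows is the proof of Theorem~\ref{thm::representative}, which \emph{uses} the lemma; the lemma itself is simply quoted as a classical fact about reflecting random walks. So your proposal supplies what the paper omits, and the detailed-balance route you chose is exactly the argument one would expect here.
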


Then we use the lemma to prove Theorem \ref{thm::representative}. 

\begin{proof}
    Define $X_k$ as the number of value samples in the Representative which are smaller than the $(0.5-\varepsilon)$-quantile after $k$ value samples are sent to the Representative, and $x=p^2, y=(1-p)^2$, where $p$ is equal to $\mathbb{P}(\hat{w}-0.5<-\varepsilon)$ and $\mathbb{P}(\hat{w}-0.5>\varepsilon)$ in Theorem \ref{thm:candidate}. It is easy to verify that $X_k$ obeys the random walk with reflecting barrier in Lemma \ref{thm:markov}. Since the number of items in the data stream is sufficiently large, we assume that the probability distribution of $X_k$ converges to the stationary distribution. Notice that $\tilde{w}-0.5< \varepsilon$ if and only if $X_k \geq 0.5s$. Hence 
    \[
    \begin{aligned}
    & \mathbb{P}(\tilde{w}-0.5< -\varepsilon) = \mathbb{P}(X_k \geq 0.5s) \approx \sum _{i=0.5s}^s \pi_i \\ 
    & = \frac{u^{0.5s}(1-u^{s-0.5s})}{1-u^s} \approx u^{0.5s} = \left(\frac{p}{1-p}\right)^{s} \approx p^{s}, 
    \end{aligned}
    \]
    where $u=\frac{x}{y}$. Similarly, we can prove 
    \[
    \mathbb{P}(\tilde{w}-0.5>\varepsilon) \leq p^{s}.
    \]
    Applying union bound, we get 
    \[
    \mathbb{P}(|\tilde{w}-0.5|>\varepsilon) \leq 2p^{s} \leq 2e^{-\varepsilon^2rs}. \qedhere
    \]
\end{proof}

\begin{remark}
According to the Dobrushin Theorem \cite{dobrushin1956central}, the convergence rate of an irreducible aperiodic Markov chain with finite states is exponential, so the distribution of the Representative converges quickly. 
\end{remark}

\begin{corollary}
\label{thm:space}
    Given two small positive number $\varepsilon$ and $\delta$, to guarantee that $|\tilde{w}-0.5|\leq \varepsilon$ with probability at least $1-\delta$, the Candidate and the Representative needs $O\left(\frac{1}{\varepsilon}\sqrt{\log\frac{1}{\delta}}\right)$ memory. 
\end{corollary}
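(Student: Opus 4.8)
The plan is to read off the required size parameters directly from Theorem~\ref{thm::representative} and then perform a one-line optimization. Theorem~\ref{thm::representative} gives $\mathbb{P}(|\tilde{w}-0.5|>\varepsilon)\leq 2e^{-\varepsilon^2 rs}$, so to force this probability below the target $\delta$ it suffices to require $2e^{-\varepsilon^2 rs}\leq\delta$, i.e. $rs\geq\frac{1}{\varepsilon^2}\log\frac{2}{\delta}$. This single inequality is the only constraint that the accuracy guarantee places on $r$ and $s$; everything else in the corollary is a matter of counting memory.

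Next I would observe that the Candidate stores exactly $r$ value samples and the Representative stores exactly $s$ value samples, so the total space (measured in machine words, each holding one sample or a $\pm\infty$ sentinel) is $\Theta(r+s)$. Minimizing $r+s$ subject to $rs\geq\frac{1}{\varepsilon^2}\log\frac{2}{\delta}$ is solved by AM--GM: the sum is smallest when the two factors are equal, so choosing $r=s=\left\lceil\frac{1}{\varepsilon}\sqrt{\log\frac{2}{\delta}}\right\rceil$ (rounded up to even integers as the construction in Section~\ref{sec:single} requires) satisfies the product bound and gives $r+s=O\!\left(\frac{1}{\varepsilon}\sqrt{\log\frac{1}{\delta}}\right)$, since $\log\frac{2}{\delta}=\Theta\!\left(\log\frac{1}{\delta}\right)$ for small $\delta$. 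That is exactly the claimed memory bound.

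I do not expect a substantive obstacle: the corollary is essentially a restatement of Theorem~\ref{thm::representative} plus the trivial optimization above. The only points requiring care are (i) making explicit that the space cost is genuinely additive in $r$ and $s$, so no hidden dependence on $\varepsilon$ or $\delta$ creeps in through, say, counter widths; and (ii) flagging that Theorem~\ref{thm:candidate} and hence Theorem~\ref{thm::representative} were derived with the small-$\varepsilon$ approximation $e^{-\eta^2\mathbb{E}X/2}\approx e^{-\varepsilon^2 r}$, so the corollary is likewise an asymptotic statement in $\varepsilon$; a fully non-asymptotic constant could be recovered by keeping the exact Chernoff exponent $\frac{\varepsilon^2 r}{2(0.5-\varepsilon)}$, but this is unnecessary for the $O(\cdot)$ claim.
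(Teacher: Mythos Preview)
Your proposal is correct and follows essentially the same approach as the paper: set $2e^{-\varepsilon^2 rs}\leq\delta$, choose $r=s$, and read off $r+s=O\!\left(\frac{1}{\varepsilon}\sqrt{\log\frac{1}{\delta}}\right)$. The paper's proof is even terser---it simply asserts $r=s$ without the AM--GM remark---so your additional justification and caveats about rounding and the small-$\varepsilon$ approximation are welcome but not strictly needed.
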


\begin{proof}
    Let $r=s$ and $2e^{-\varepsilon^2rs} \leq \delta$, we get 
    \[
    s \geq \frac{1}{\varepsilon}\sqrt{\log\frac{2}{\delta}}, 
    \]
    which shows that $r+s = 2s = O\left(\frac{1}{\varepsilon}\sqrt{\log\frac{1}{\delta}}\right)$. 
\end{proof}

\subsection{Error Bound in Per-key Situation}
\label{math::per}

In this part, we first derive the probability of hash collision in \stagetwo{}, then we use this conclusion to give an error bound for \stagetwo{} in per-key situation.

\begin{theorem}
\label{thm:per}
    For any bucket in \stagetwo{}, the probability of hash collision is 
    \begin{equation}
        P_{hc} = 1-e^{-\frac{H}{u}} \sum _{i=0} ^d \frac{1}{i!}\left(\frac{H}{u}\right)^i,
    \end{equation}
    where $H$ is the number of distinct items entering \stagetwo{}. 
\end{theorem}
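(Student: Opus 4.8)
The plan is to model the placement of distinct items into buckets as a balls-into-bins process and then translate "no hash collision in a bucket" into "the bucket received at most $d$ distinct items." Concretely, I would fix an arbitrary bucket and let $N$ denote the number of distinct items among the $H$ items that enter \stagetwo{} which hash to this bucket. Since $h(\cdot)$ distributes each distinct key uniformly and independently over the $u$ buckets, $N$ follows a Binomial distribution $\mathrm{Bin}(H, 1/u)$. The event that no hash collision ever occurs in this bucket is exactly the event $\{N \leq d\}$: as long as at most $d$ distinct keys land here, Case 2 of the insertion operation always finds an empty cell and $vote^-$ is never incremented; conversely, the $(d{+}1)$-st distinct key triggers Case 3, which is precisely a hash collision.

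The next step is the standard Poisson approximation: when $H$ is large and $1/u$ is small, $\mathrm{Bin}(H, 1/u)$ is well approximated by a Poisson random variable with mean $\lambda_0 = H/u$, so
\[
\mathbb{P}(N \leq d) \approx \sum_{i=0}^{d} e^{-H/u}\,\frac{1}{i!}\left(\frac{H}{u}\right)^{i}.
\]
Therefore the hash-collision probability is $P_{hc} = 1 - \mathbb{P}(N \leq d)$, which yields the claimed formula. I would present the binomial-to-Poisson passage either as a limit ($H \to \infty$, $u \to \infty$ with $H/u$ fixed) or as an approximation consistent with the paper's style elsewhere (the paper already uses "$\approx$" freely, e.g.\ in Theorem~\ref{thm:candidate} and Theorem~\ref{thm::representative}), so that the statement's equality is read as an approximate identity.

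The main obstacle — really the only subtle point — is justifying the reduction "collision $\iff$ more than $d$ distinct keys map to the bucket." One must be slightly careful that the Ostracism mechanism in Case~3 does not retroactively change which keys are "present": once a bucket is full with $d$ distinct keys, any further distinct key arriving is a collision regardless of whether it then evicts an incumbent, and evictions only happen after a collision is already counted. So the count of distinct keys hashing to the bucket is the right quantity, and its exceeding $d$ is equivalent to at least one collision occurring in that bucket over the lifetime of the stream. A secondary point is the independence/uniformity assumption on $h(\cdot)$, which I would state explicitly as the modeling hypothesis (fully random hashing), mirroring the assumptions already implicit in the paper's Tower Sketch and Elastic Sketch discussions. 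Once these are pinned down, the remainder is the routine Binomial/Poisson computation.
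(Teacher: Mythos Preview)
Your proposal is correct and follows essentially the same route as the paper: model the number of distinct keys landing in a fixed bucket as $\mathrm{Bin}(H,1/u)$, approximate by $\mathrm{Poisson}(H/u)$, and identify ``hash collision'' with the event that this count exceeds $d$. Your explicit justification of the equivalence ``collision $\iff$ more than $d$ distinct keys'' (taking into account the Ostracism eviction mechanism) is actually more careful than the paper, which simply asserts it.
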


\begin{proof}
    The proof is first given in \cite{yang2018elastic} when $d=1$. We now extend the conclusion to $d\geq 2$. There are totally $H$ distinct items which enter \stagetwo{}, and every distinct item is randomly mapped into a bucket by the hash function $h(.)$. Given an arbitrary bucket and an item $e=(k, v)$, the probability that $e$ is mapped to the bucket is $\frac{1}{u}$. Therefore, for any bucket, the number of distinct items mapped to the bucket $W$ follows a Binomial distribution $B\left(H, \frac{1}{u}\right)$. When both $H$ and $u$ are large, the distribution of $W$ can be approximated by Poisson distribution $\pi\left(\frac{H}{u}\right)$, hence
    \[
    P(W=i) = \frac{1}{i!}\left(\frac{H}{u}\right)^i e^{-\frac{H}{u}}.
    \]
    Notice that hash collision happens if and only if $W\geq d+1$, hence
    \begin{align*}
    P_{hc} & = P(W\geq d+1) = 1-P(W\leq d) \\
    & = 1-e^{-\frac{H}{u}} \sum _{i=0} ^d \frac{1}{i!}\left(\frac{H}{u}\right)^i.       \qedhere
    \end{align*} 
\end{proof}

\begin{corollary}
    Let $r=s=\frac{1}{\varepsilon}\sqrt{\log\frac{2}{\delta}}$ in Corollary \ref{thm:space}. For an arbitrary key $k$, suppose \stagetwo{} reports $v$ as the query result, and the real quantile function at $v$ is $\tilde{w}$. Then \stagetwo{} guarantees $|\tilde{w}-0.5|\leq \varepsilon$ with probability at least $(1-\delta)(1-P_{hc})$.
\end{corollary}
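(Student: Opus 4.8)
The plan is to combine the single-key bound from Corollary~\ref{thm:space} with the hash-collision probability from Theorem~\ref{thm:per} by conditioning on whether the queried key $k$ survives in its cell without being disturbed by a colliding item. First I would fix an arbitrary key $k$ that enters \stagetwo{} and consider the bucket $B[h(k)]$. Define the event $\mathcal{C}$ that no hash collision ever forces an eviction involving the cell holding $k$; by Theorem~\ref{thm:per} (applied to the bucket $B[h(k)]$, which contains $k$ together with the other distinct items mapped there), we have $\mathbb{P}(\mathcal{C}) \geq 1 - P_{hc}$, since a collision event is exactly the situation $W \geq d+1$ that can trigger the Ostracism rule. On the complement of $\mathcal{C}$, the value field for $k$ may be cleared and we make no accuracy guarantee.

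Next I would argue that, conditioned on $\mathcal{C}$, the Candidate and Representative stored in the cell for $k$ behave exactly as in the single-key situation: every item $(k, v)$ in the stream contributes its value (after Distribution Calibration) to the Candidate, the Candidate flushes two medians to the Representative when full, and the Representative applies the same replacement strategy. Thus the value field is statistically identical to the single-key data structure analyzed in Sections~\ref{math::unbias}--\ref{math::single}. Applying Corollary~\ref{thm:space} with $r = s = \frac{1}{\varepsilon}\sqrt{\log\frac{2}{\delta}}$, the query result $v$ satisfies $|\tilde{w} - 0.5| \leq \varepsilon$ with probability at least $1 - \delta$ conditioned on $\mathcal{C}$.

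Finally I would combine the two estimates. Writing $\mathcal{A}$ for the event $\{|\tilde{w} - 0.5| \leq \varepsilon\}$, we have
\[
\mathbb{P}(\mathcal{A}) \geq \mathbb{P}(\mathcal{A} \mid \mathcal{C})\,\mathbb{P}(\mathcal{C}) \geq (1-\delta)(1 - P_{hc}),
\]
which is the claimed bound. The main obstacle is justifying the independence/conditioning step cleanly: one must check that the event $\mathcal{C}$ (a statement about which \emph{keys} land in the bucket, driven by the key hash $h(\cdot)$) is genuinely independent of — or at least does not degrade — the accuracy event $\mathcal{A}$ (a statement about the \emph{values} associated with key $k$, driven by the value distribution $F'$ and the internal randomness of Distribution Calibration). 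Since the values of items with key $k$ are modeled as i.i.d.\ draws from $F$ independent of the key-hashing, this independence holds, but it is worth stating explicitly; one should also note the mild technical point that the ``sufficiently large stream'' hypothesis of Theorem~\ref{thm::representative} is inherited here, so the convergence-to-stationarity assumption applies to the per-key Representative as well.
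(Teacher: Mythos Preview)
Your proposal is correct and matches the paper's intended argument. In fact, the paper states this corollary without any proof, treating it as an immediate consequence of combining Corollary~\ref{thm:space} with Theorem~\ref{thm:per}; your conditioning argument on the no-collision event $\mathcal{C}$ is exactly the natural way to spell this out, and your remarks on the independence between the key-hashing randomness and the value randomness, as well as the inherited ``sufficiently large stream'' hypothesis, are appropriate caveats that the paper leaves implicit.
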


\vspace{-0.1in}
\subsection{Time Complexity}
\label{math::time}
In this part, we analyze the amortized time complexity of \ourname{} in both single-key and per-key situations. Since \ourname{} has a small probability to insert a large number of infinities, we will only show the \textit{expectation} of amortized time complexity of \ourname{}. 
\begin{theorem}
\label{thm:time}
The expectation of amortized time complexity to insert an arbitrary item $e=(k, v)$ is at most $O\left(\frac{s}{r}\right)$ in single-key situation. 
\end{theorem}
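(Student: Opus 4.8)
The plan is to bound the expected amortized cost of a single insertion by separately accounting for (i) the cost of pushing the incoming value into the Candidate, (ii) the cost incurred each time the Candidate overflows and two medians must be extracted and sent to the Representative, and (iii) the cost of the calibration infinities that are occasionally inserted. First I would fix the calibration random variable $Z$ (geometric with parameter at least $\tfrac{1}{2}$, since $w$ is a fixed constant bounded away from $0$ and $1$), so that $\mathbb{E}Z = O(1)$; this shows that in expectation only a constant number of symbols (the value $v$ plus $Z-1$ infinities) are fed into the Candidate per item. Hence the expected number of symbols inserted over a long stream is $\Theta(1)$ times the number of items, and it suffices to bound the per-symbol amortized cost.

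Next I would analyze the Candidate. Each symbol insertion into the Candidate is $O(1)$ if the Candidate is maintained as an unsorted array of size $r$. Every $r$ symbols, the Candidate fills up and we must select the two median elements; using a linear-time selection (or even an $O(r\log r)$ sort, which only changes constants up to a log and the paper is stating an $O(s/r)$ bound anyway — here I would use linear-time median selection) this costs $O(r)$, amortized $O(1)$ per symbol. So far everything is $O(1)$ amortized, independent of $s$.

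The term $O(s/r)$ comes from the Representative. Each Candidate overflow sends exactly two values to the Representative; inserting them when the Representative is full requires scanning all $s$ cells to locate the current minimum and maximum, which costs $O(s)$. Since a Representative insertion happens once per $r$ symbols fed into the Candidate, the amortized cost per symbol is $O(s/r)$, and since the expected number of symbols per item is $O(1)$, the expected amortized cost per item is $O(s/r)$, dominating the other $O(1)$ contributions. I would combine these three pieces with linearity of expectation: the total expected work over the first $m$ items is $O(m) + O(m) + O(m \cdot s/r)$, giving the claimed amortized bound $O(s/r)$ per item.

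The main obstacle, and the place needing care, is handling the heavy tail of the number of infinities: although $\mathbb{E}Z = O(1)$, a single item can trigger an arbitrarily long run of infinity insertions, so one cannot bound the worst-case cost per item. The resolution is precisely why the statement speaks of the \emph{expectation} of amortized complexity — I would argue that over any prefix of the stream the total number of inserted symbols concentrates at its mean by linearity of expectation, so the expected aggregate cost is $O(m\cdot s/r)$ even though individual items may be expensive. A secondary subtlety is that a long run of infinities can cause multiple consecutive Candidate overflows (hence multiple Representative scans) within the processing of one item; but each such overflow still consumes $r$ freshly inserted symbols, so the amortization argument — total Representative scans $\le$ (total symbols)$/r$ — goes through unchanged, and taking expectations gives the result.
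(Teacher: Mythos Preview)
Your proposal is correct and follows essentially the same amortization argument as the paper: a constant expected number of symbols per item (the paper computes $\mathbb{E}Z-1=|2w-1|\le 1$), $O(1)$ per-symbol cost for Candidate insertion and $O(r)$ for median extraction every $r$ symbols, and an $O(s)$ Representative scan once per $r$ symbols, giving $O((r+s)/r)=O(s/r)$. Your discussion of the heavy-tail subtlety and of multiple Candidate overflows triggered by a long run of infinities is more careful than the paper's own terse proof, but the underlying decomposition is identical.
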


\begin{proof}
    To insert item $e$, \ourname{} sets $Z$ to generate positive (negative) infinities. 
    The expected number of positive (negative) infinities inserted into the Representative before $e$ is inserted is $|2w-1|$, which is a constant. 
    Every value or infinity will first be inserted into the Candidate, and if the Candidate is full, then two value samples in the Candidate will be inserted into the Representative. 
    The insertion into the Candidate costs $O(1)$ time, and finding the median among $r$ value samples costs $O(r)$ time. 
    If the Representative is full, then \ourname{} will find the maximum and minimum value in the Representative and evict them, which costs $O(s)$ time. 
    In conclusion, if we insert $r$ values or infinities into the Candidate, then the time complexity is at most $O(r+s)$. 
    Hence the expectation of amortized time complexity to insert $e$ is at most $O\left(\frac{r+s}{r}\right) = O\left(\frac{s}{r}\right)$ in single-key situation. 
\end{proof}

\begin{remark}
Although \ourname{} occasionally inserts a large number of infinities, the expectation number of generated infinities to insert a value is $|2w-1|$, and it is smaller than 1, so the insertion throughput of \ourname{} will only be halved due to generating infinities in the worst case (when $w$ is close to 0 or 1). In addition, by Kolmogorov's Strong Law of Large Numbers (SLLN), the median in the Representative will rarely be infinity, so \ourname{} gives a valid estimation in most cases. 
\end{remark}

\begin{theorem}
\label{thm:per-time}
The expectation of amortized time complexity to insert an arbitrary item $e=(k,v)$ is at most $O\left(\frac{s}{r}+d\right)$ in per-key situation. 
\end{theorem}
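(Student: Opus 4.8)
The plan is to decompose the cost of one insertion in the per-key setting into a ``bucket part'' and a ``value-field part'', bound each separately, and then add them. To insert $e=(k,v)$, \ourname{} (in its optimized form) first queries the \stageone{}, which touches a constant number of counter arrays and hence costs $O(1)$; if $e$ is forwarded to the \stagetwo{}, it is hashed to a single bucket $B[h(k)]$ by $h(.)$, after which one of the three cases of the \stagetwo{} insertion occurs. I will show the bucket part is deterministically $O(d)$ and the value-field part is, in expectation and amortized over the sub-stream routed to a given cell, $O(s/r)$.

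For the bucket part: locating $B[h(k)]$ is $O(1)$; scanning its $d$ cells to test for a key match is $O(d)$; in Case~3 finding the cell with minimum $vote^+$ is another $O(d)$ scan; and all remaining bookkeeping --- incrementing $vote^+$ or $vote^-$, comparing $B[h(k)].vote^-/B[h(k)][j].vote^+$ with $\lambda$, and, on eviction, clearing a cell and reinitializing its key, $vote^+$, and value field --- is $O(1)$, where clearing a value field is charged as a single count/pointer reset, consistent with the convention used in the proof of Theorem~\ref{thm:time}. Thus the bucket part is $O(d)$ in the worst case, with no randomness involved.

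For the value-field part: in Cases~2 and~3 the value $v$ is inserted into a freshly initialized value field, which is $O(1)$; only in Case~1 is $v$ (together with the $Z-1$ generated infinities) fed into an already-populated value field. That value field is exactly the single-key structure of Section~\ref{sec:single} with the same parameters $r$ and $s$, and it only ever processes the sub-stream of values routed to that particular cell. Hence the amortized analysis behind Theorem~\ref{thm:time} applies verbatim to each cell: the expected amortized cost of the Case~1 value-field insertion --- including the $|2w-1|$ expected infinities, a constant --- is $O(s/r)$. Summing the three contributions, $O(1)$ for the \stageone{} lookup, $O(d)$ for the bucket work, and $O(s/r)$ (expected, amortized) for the value-field work, yields the claimed $O(s/r+d)$.

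The one point that needs care --- and the main obstacle --- is justifying that the ``amortized $O(s/r)$'' guarantee of Theorem~\ref{thm:time}, which is a statement about a full single-key stream, survives being restricted to the interleaved and eviction-interrupted sub-stream that actually reaches a cell. This holds because the amortization in Theorem~\ref{thm:time} is a local bound (at most $r+s$ total work per $r$ insertions into the Candidate, independent of what other cells or keys are doing) and because an eviction only resets a value field, which never induces extra work beyond the $O(1)$ reset already accounted for in the bucket part; so the per-cell amortized analysis goes through unchanged.
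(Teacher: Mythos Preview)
Your proposal is correct and follows essentially the same decomposition as the paper's proof: an $O(1)$ \stageone{} lookup, an $O(d)$ bucket traversal in \stagetwo{}, and an $O(s/r)$ expected amortized cost for the value-field insertion inherited from Theorem~\ref{thm:time}. Your treatment is in fact more careful than the paper's, which does not explicitly address why the per-cell amortization survives evictions; your observation that the $O(r+s)$-per-$r$-insertions bound is local to each Candidate and that an eviction only contributes an $O(1)$ reset is exactly the justification needed to close that gap.
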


\begin{proof}
    To insert item $e$, \ourname{} first query \stageone{} to get its frequency. 
    If its frequency does not exceed the threshold $T$, $e$ will only be inserted into \stageone{} and return, which can be done in $O(1)$ time. 
    Otherwise, \ourname{} uses a hash function $h(.)$ to locate a bucket and traverse the bucket in \stagetwo{}, which costs $O(d)$ time. 
    If $e$ is in the bucket or $e$ is not in the bucket but the bucket still has empty cells, then $e$ will be inserted into the cell, the expectation of amortized time complexity is $O\left(\frac{s}{r}\right)$. 
    If $e$ is not in the bucket and the bucket does not have empty cells, \ourname{} will just increment $vote^-$ and return. 
    In conclusion, the expectation of amortized time complexity to insert $e$ is at most $O\left(\frac{s}{r}+d\right)$ in per-key situation. 
\end{proof}

As a result, if $r$ and $s$ are close numbers, and $d$ is not very large, then the overall time complexity to insert an item $e$ can be viewed as $O(1)$, which shows that \ourname{} can catch up with the high-speed items in the data stream and achieve high throughput. 
\section{Experimental Results}
\label{sec:experiments}

In this section, we provide experimental results with \ourname{}. First, we describe the experimental setup in Section \ref{exp::setup}. Then, we show how parameter settings affect \ourname{} performance in Section \ref{exp::param}. We compare the performance of \ourname{} with state-of-the arts in both single-key and per-key situations in Section \ref{exp::single} and \ref{exp::per} respectively. Finally, we implement \ourname{} on RocksDB database to reduce quantile query latency in Section \ref{exp::rocksdb}. All related codes are released on GitHub \cite{source}. 

\subsection{Experimental Setup}
\label{exp::setup}

\begin{figure*}[!ht]
	\centering
	\begin{minipage}{0.216\textwidth}{
			\includegraphics[width=1\textwidth]{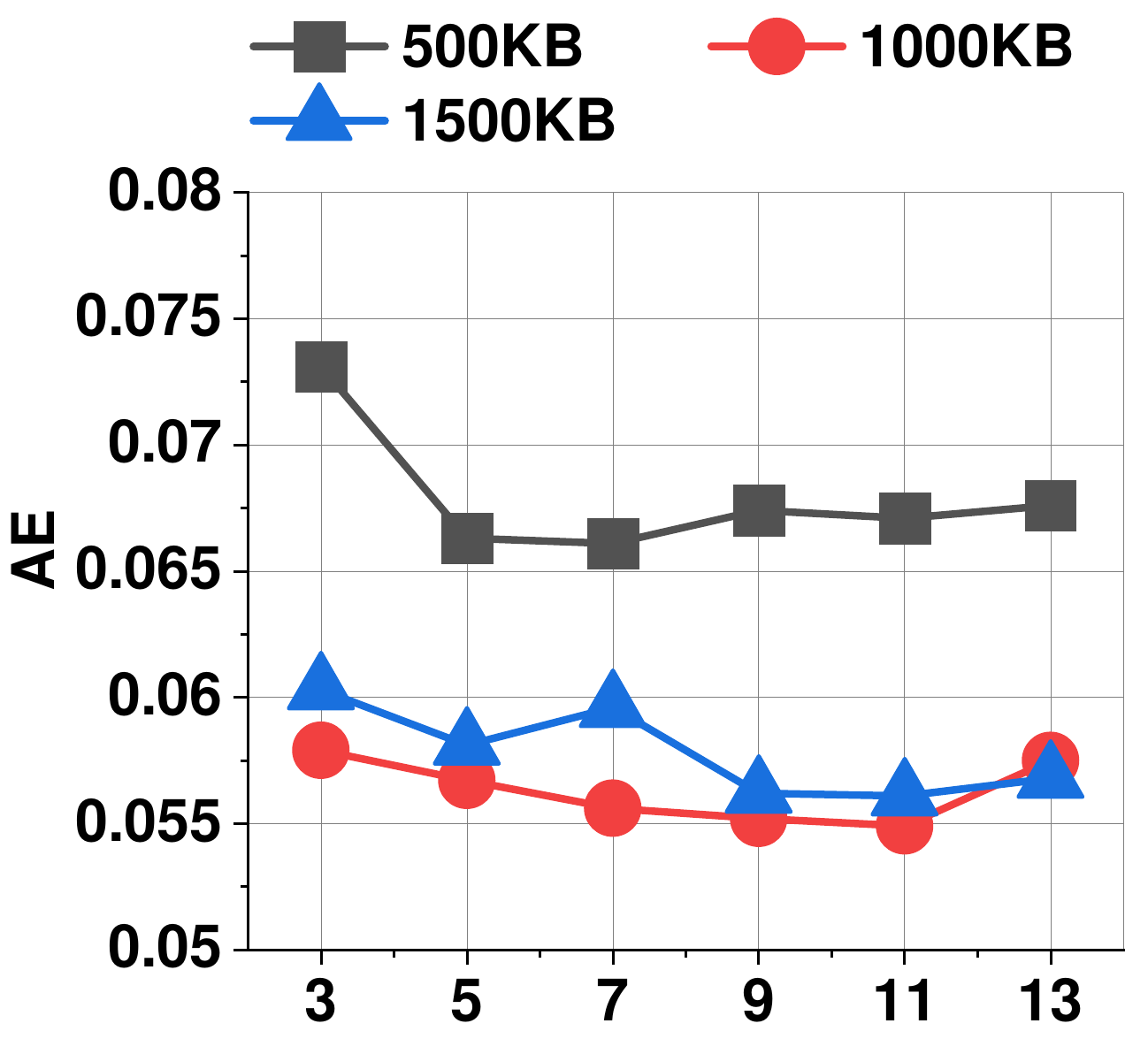}
            \vspace{-0.25in}
            \caption{Effects of $d$}
            \label{fig::d}
		}
	\end{minipage}
	\begin{minipage}{0.216\textwidth}{
			\includegraphics[width=1\textwidth]{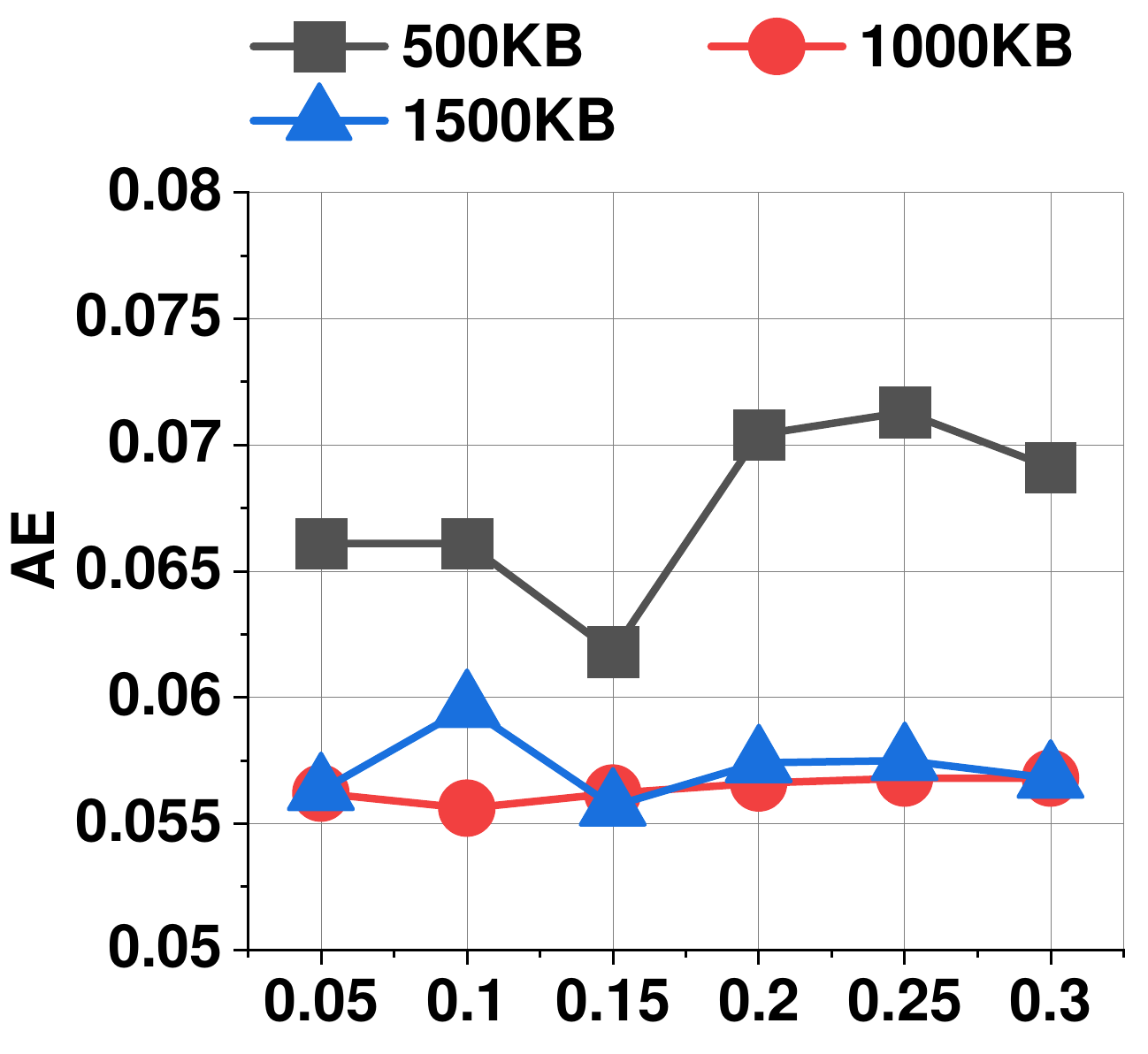}
            \vspace{-0.25in}
            \caption{Effects of $q$}
            \label{fig::q}
		}
	\end{minipage}
	\begin{minipage}{0.216\textwidth}{
			\includegraphics[width=1\textwidth]{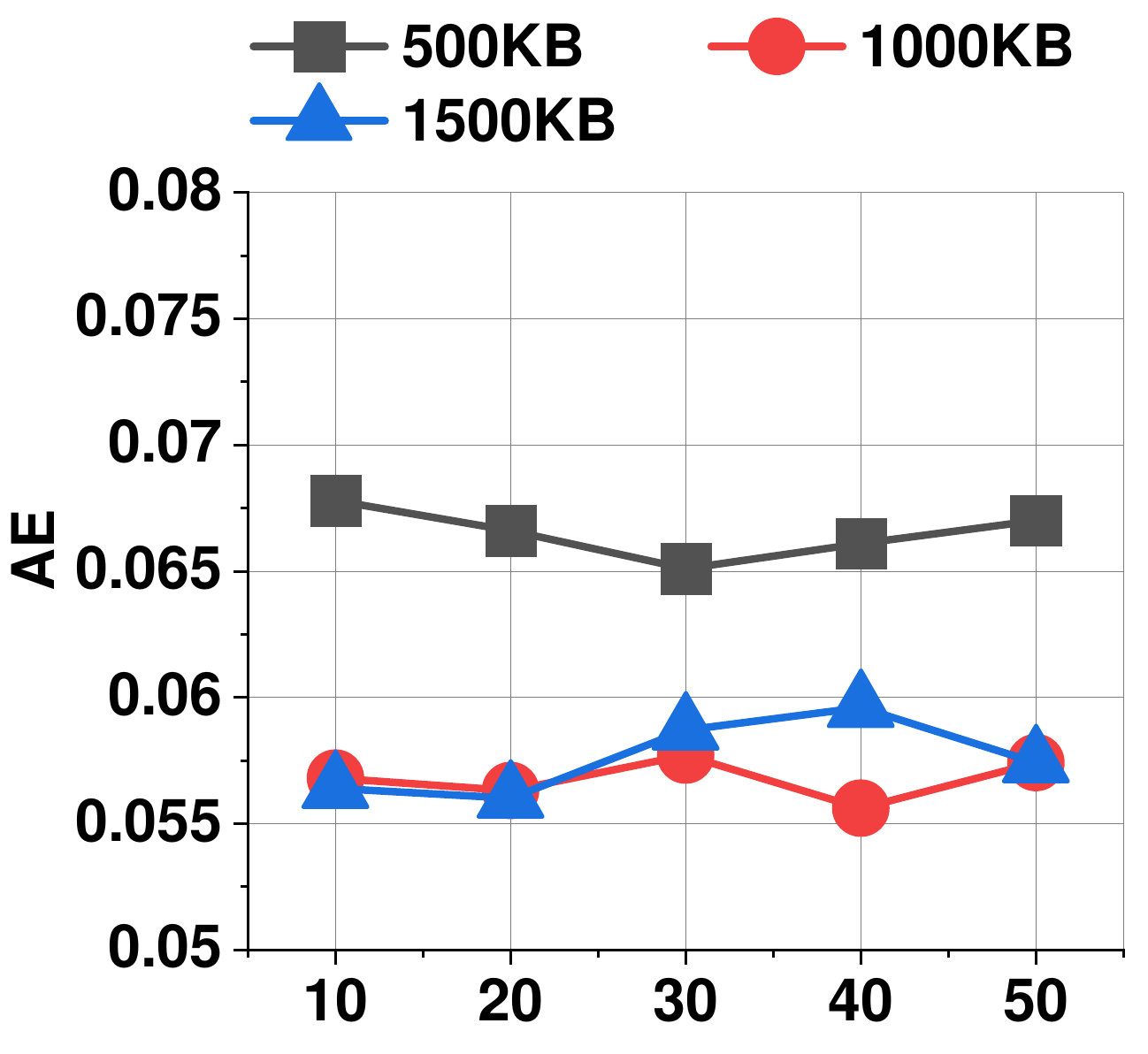}
            \vspace{-0.25in}
            \caption{Effects of $T$}
            \label{fig::t}
		}
	\end{minipage} \\
 \vspace{0.05in}
	\begin{minipage}{0.216\textwidth}{
			\includegraphics[width=1\textwidth]{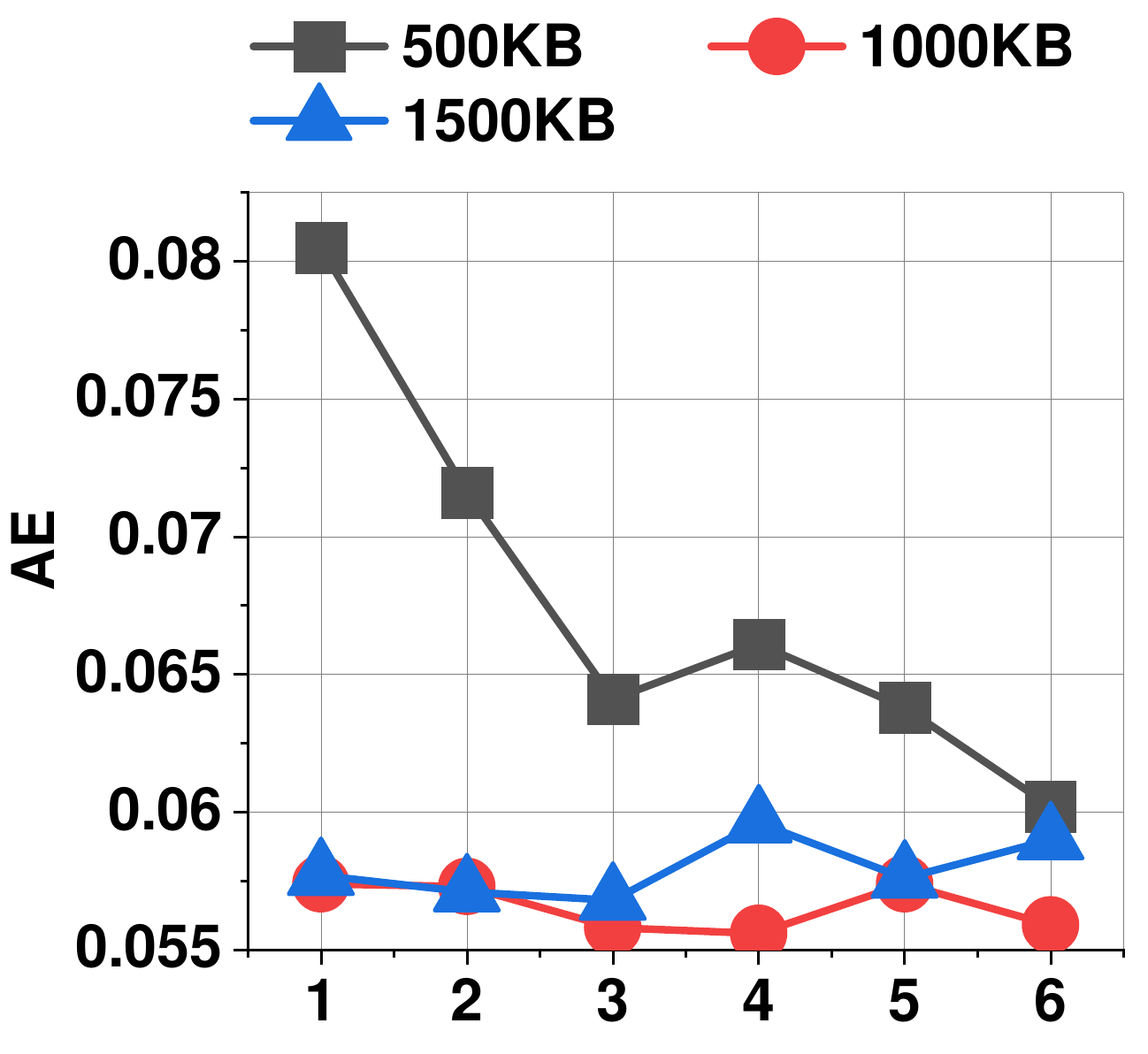}
			\vspace{-0.25in}
            \caption{Effects of $\lambda$}
            \label{fig::l}
		}
	\end{minipage}
	\begin{minipage}{0.216\textwidth}{
			\includegraphics[width=1\textwidth]{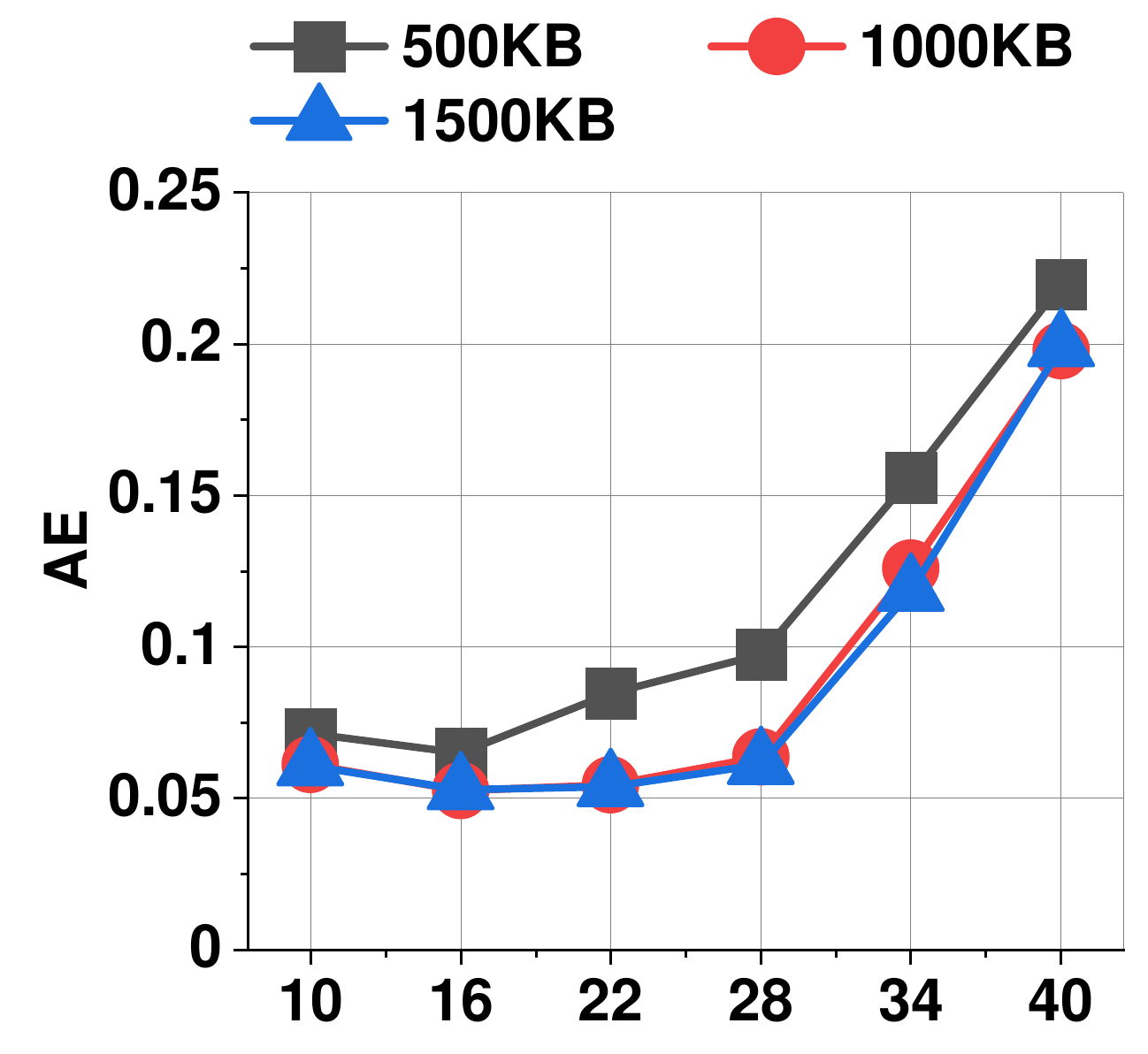}
			\vspace{-0.25in}
            \caption{Effects of $r$}
            \label{fig::r}
		}
	\end{minipage}
	\begin{minipage}{0.216\textwidth}{
			\includegraphics[width=1\textwidth]{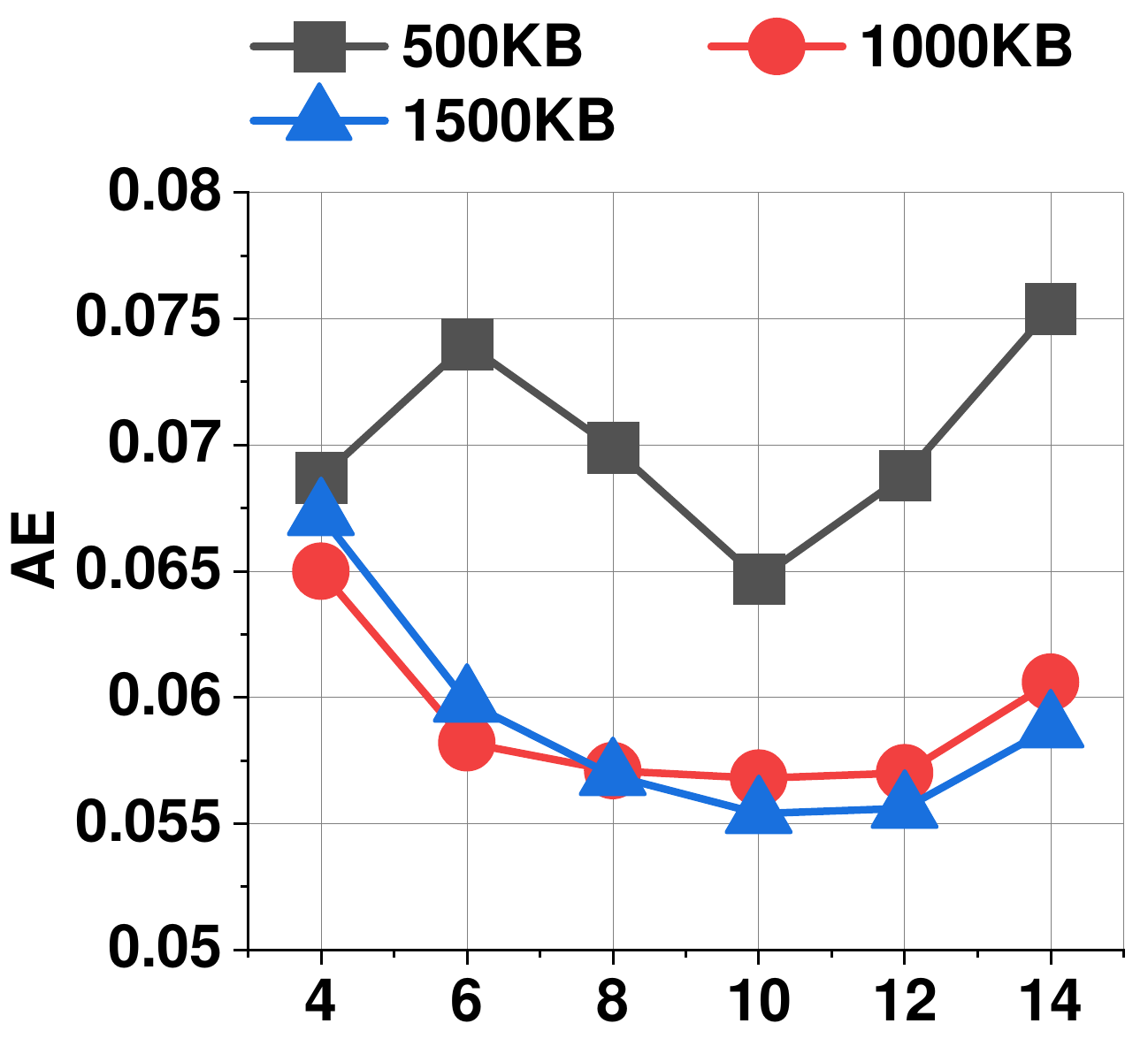}
			\vspace{-0.25in}
            \caption{Effects of $s$}
            \label{fig::s}
		}
	\end{minipage} \\
    \vspace{-0.2in}
 \end{figure*}

\noindent
\textbf{Implementation:} We implement \ourname{} and all other algorithms in C++. In all experiments, we use Bob Hash \cite{bobhash} with different hash seeds to implement the hash functions. 

\noindent
\textbf{Computation Platform:} We conducted all the experiments on a server with one 18-core processor (36 threads, Intel(R) Core(TM) i9-10980XE CPU @ 3.00GHz) and 128 GB DRAM memory. The processor has 64KB L1 cache, 1MB L2 cache for each core, and 24.75MB L3 cache shared by all cores.

\noindent
\textbf{Metrics:} 

\noindent
\textbf{1) Average Error (AE):} Suppose we query $w$-quantile of $k_1, \cdots, k_n$, and the quantile function at the query results are $\hat{w}_1, \cdots, \hat{w}_n$, the average error is defined as $\frac{1}{n}\sum _{i=1}^n |\hat{w}_i-w|$. 

\noindent
\textbf{2) Throughput:} We use million of operations (insertions and queries) per second (Mops) to measure the throughput. We repeat the experiment for 10 times and calculate the average results as our throughput.  

\noindent
\textbf{Datasets:}

\noindent
\textbf{1) CAIDA Dataset:} This Dataset is streams of anonymous IP traces collected from 2016 by CAIDA \cite{caida}. We regard the interval of two consecutive packets as its value. We use 20 million items. 

\noindent
\textbf{2) Campus Dataset:} The Campus Dataset is comprised of IP packets captured from the network of our campus. We regard the flow\footnote{A flow is generally defined as a part of the five-tuple: source IP address, destination IP address, source port, destination port, and protocol.} ID as the key, and latency as the value. We use 14 million items

\noindent
\textbf{3) Seattle Dataset:} The Seattle Dataset \cite{cappos2009seattle, zhu2016network} consists of round-trip times (RTT) between 99 nodes in the Seattle network systems. We treat RTTs from the same node as the same key, and RTTs as the value.

\noindent
\textbf{4) Synthetic Dataset:} We generate the Synthetic Dataset following the Pareto distribution with $\alpha=1$. We regard the interval between two consecutive items with the same key as its value. We use 20 million items.

\noindent
\textbf{Baseline Solution:} In single-key situation, we compare the performance of \ourname{} with GK, KLL, DDSketch, $t$-digest and ReqSketch. In per-key situation, we compare the performance of \ourname{} with GK, KLL, DDSketch, $t$-digest, ReqSketch, SQUAD and SketchPolymer. Since GK, KLL, DDSketch, $t$-digest and ReqSketch do not focus on estimating per-key quantiles, we make a small modification to apply these algorithms for our problem, which is similar to \cite{guo2023sketchpolymer}: For every algorithm, suppose it consumes $m$ memory in single-key situation, and we allocate $M$ memory on aggregate, then we construct $\frac{M}{m}$ buckets, and each bucket contains a copy of the data structure. For every coming item in the data stream, we use one hash function to map its key into a bucket, and record its value in the mapping bucket. Items mapped into the same bucket can be viewed to share the same key. Thus, to query the $w$-quantile of a key, we use the same hash function to map the key into a bucket, and calculate the quantile using these algorithms. 

 \begin{figure*}
 \begin{center}
     \begin{minipage}{.6\textwidth}
        \includegraphics[width=1\textwidth]{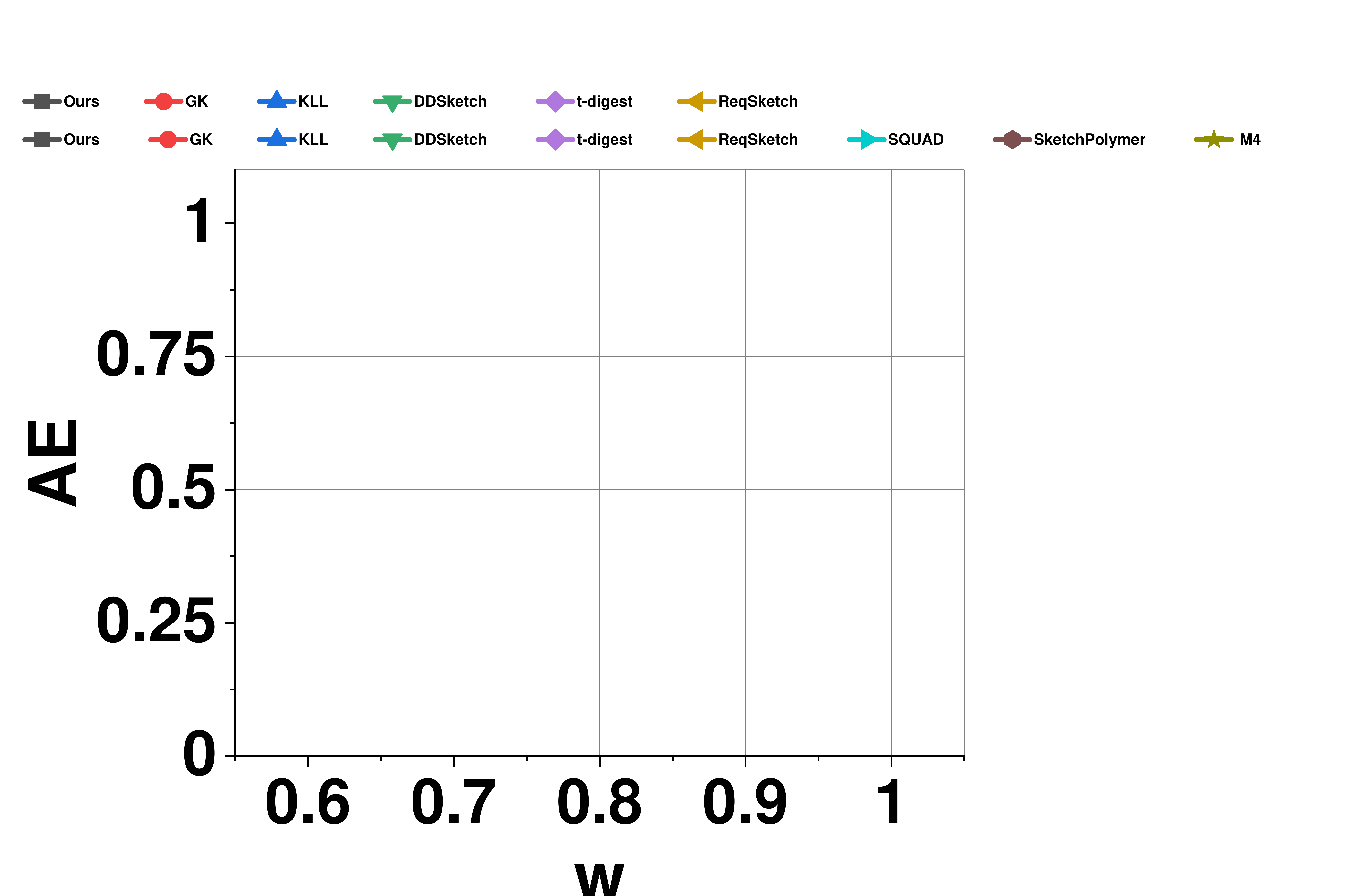}
    \end{minipage} 
 \end{center}
    \vspace{-0.1in}
 \begin{minipage}{.48\textwidth}
\flushright
    \subfigure[CAIDA]{
	\begin{minipage}{0.45\textwidth}{
			\includegraphics[width=1\textwidth]{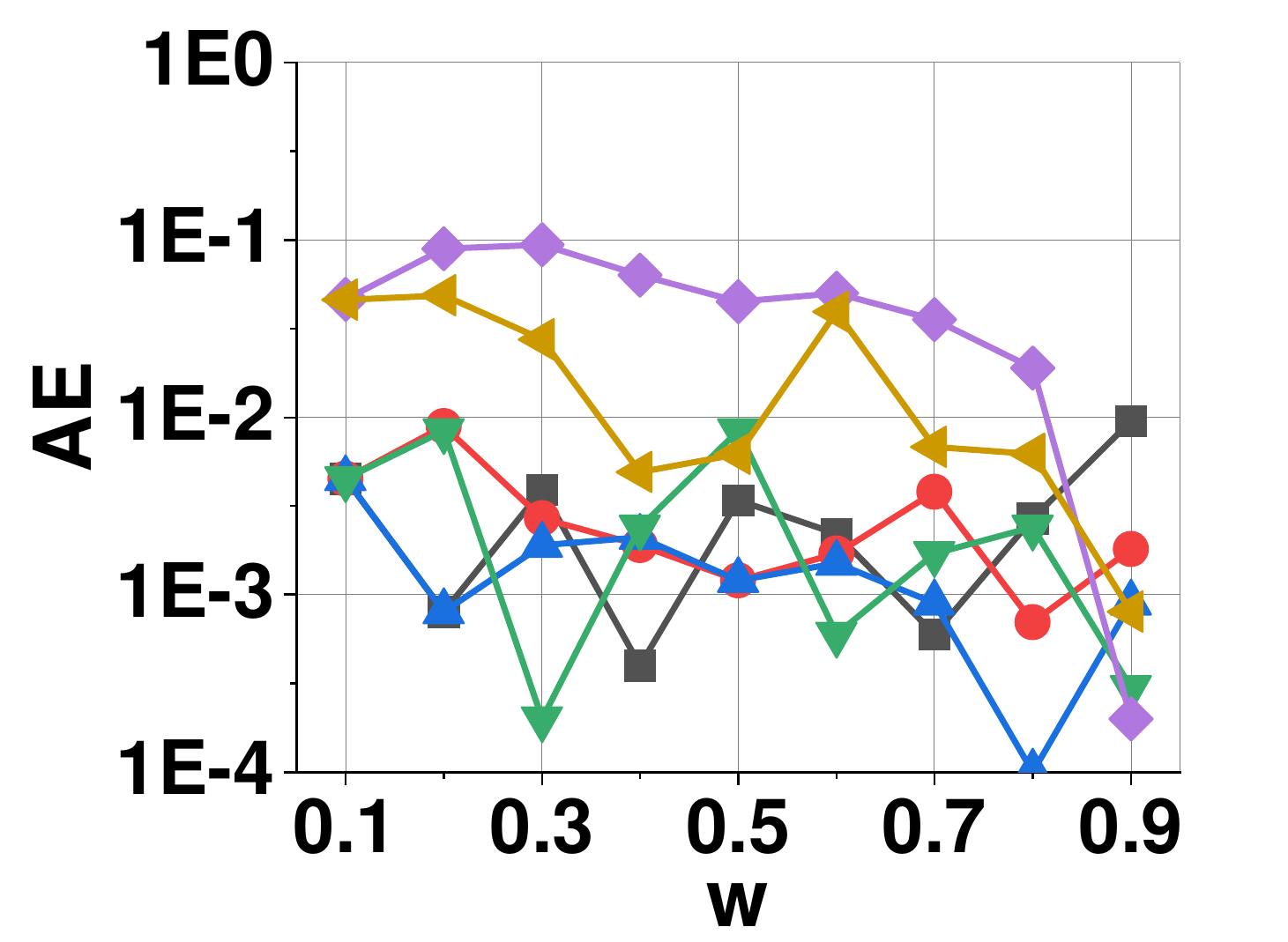}
            \vspace{-0.05in}
		}
	\end{minipage}}
    \subfigure[Synthetic]{
	\begin{minipage}{0.45\textwidth}{
			\includegraphics[width=1\textwidth]{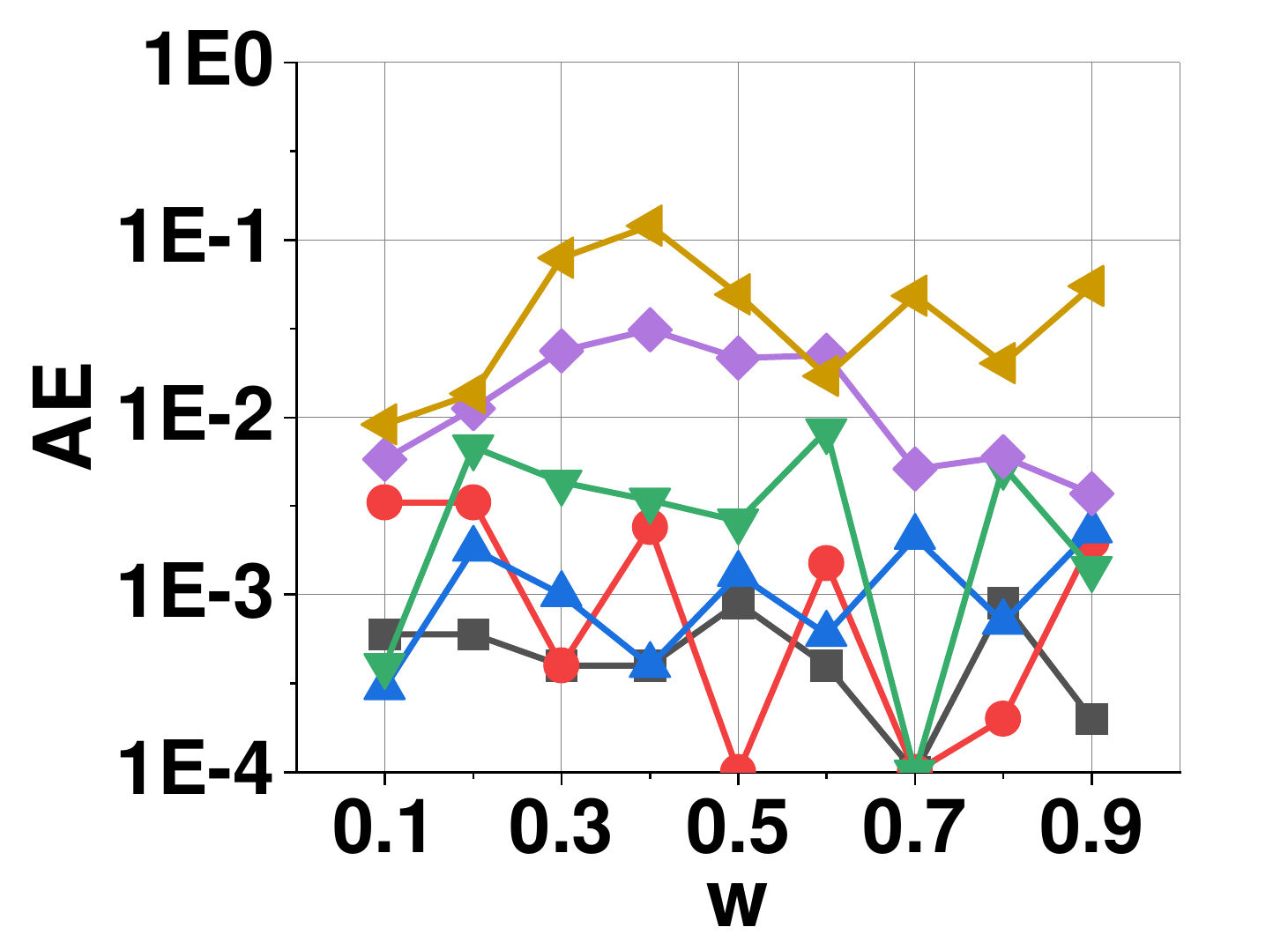}
            \vspace{-0.05in}
		}
	\end{minipage}}
    \caption{AE in Single-key Situation}
    \label{fig::single::ae}
    \end{minipage}
    \begin{minipage}{.48\textwidth}
    \subfigure[CAIDA]{
	\begin{minipage}{0.45\textwidth}{
			\includegraphics[width=1\textwidth]{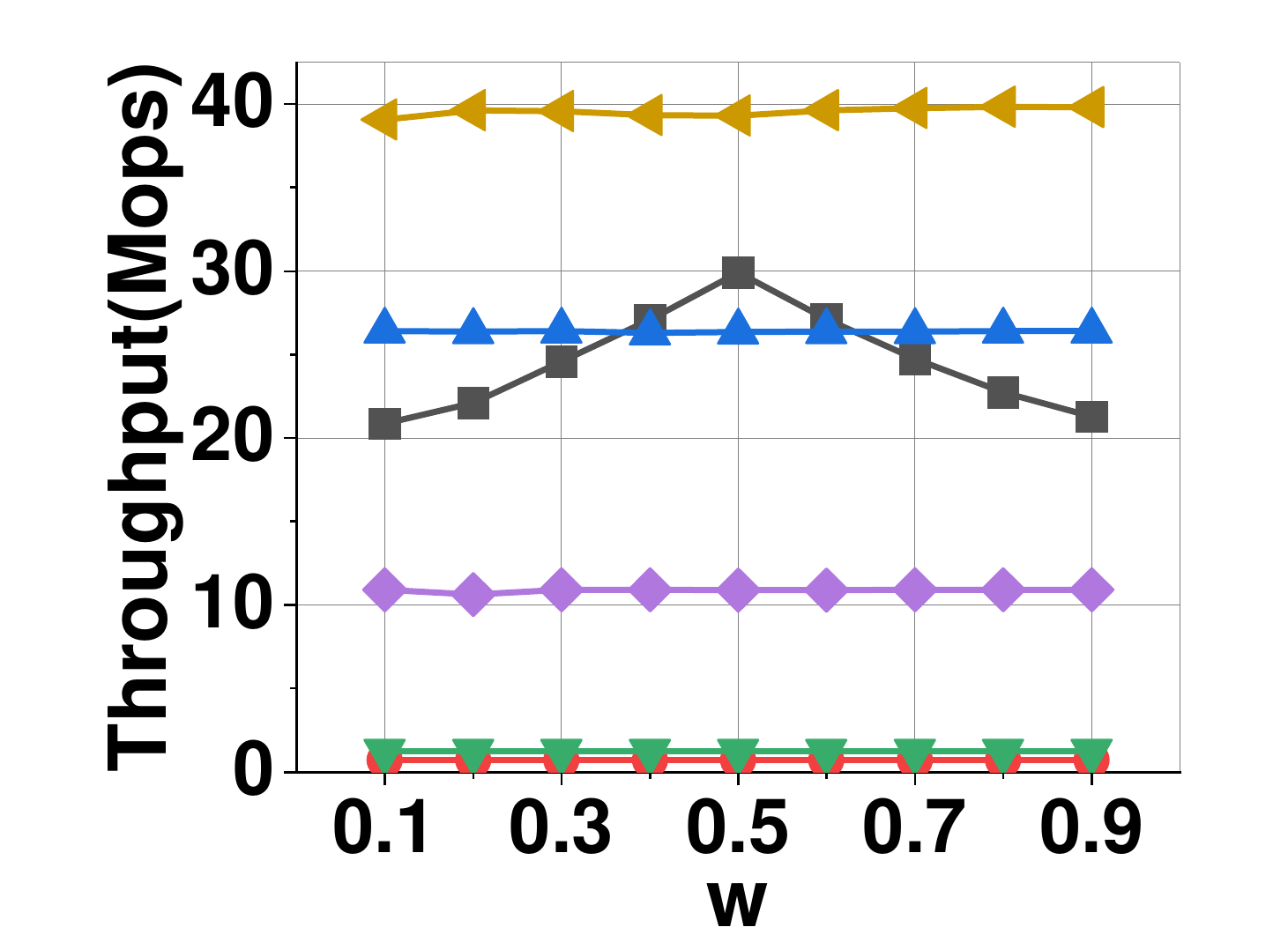}
            \vspace{-0.05in}
		}
	\end{minipage}}
    \subfigure[Synthetic]{
	\begin{minipage}{0.45\textwidth}{
			\includegraphics[width=1\textwidth]{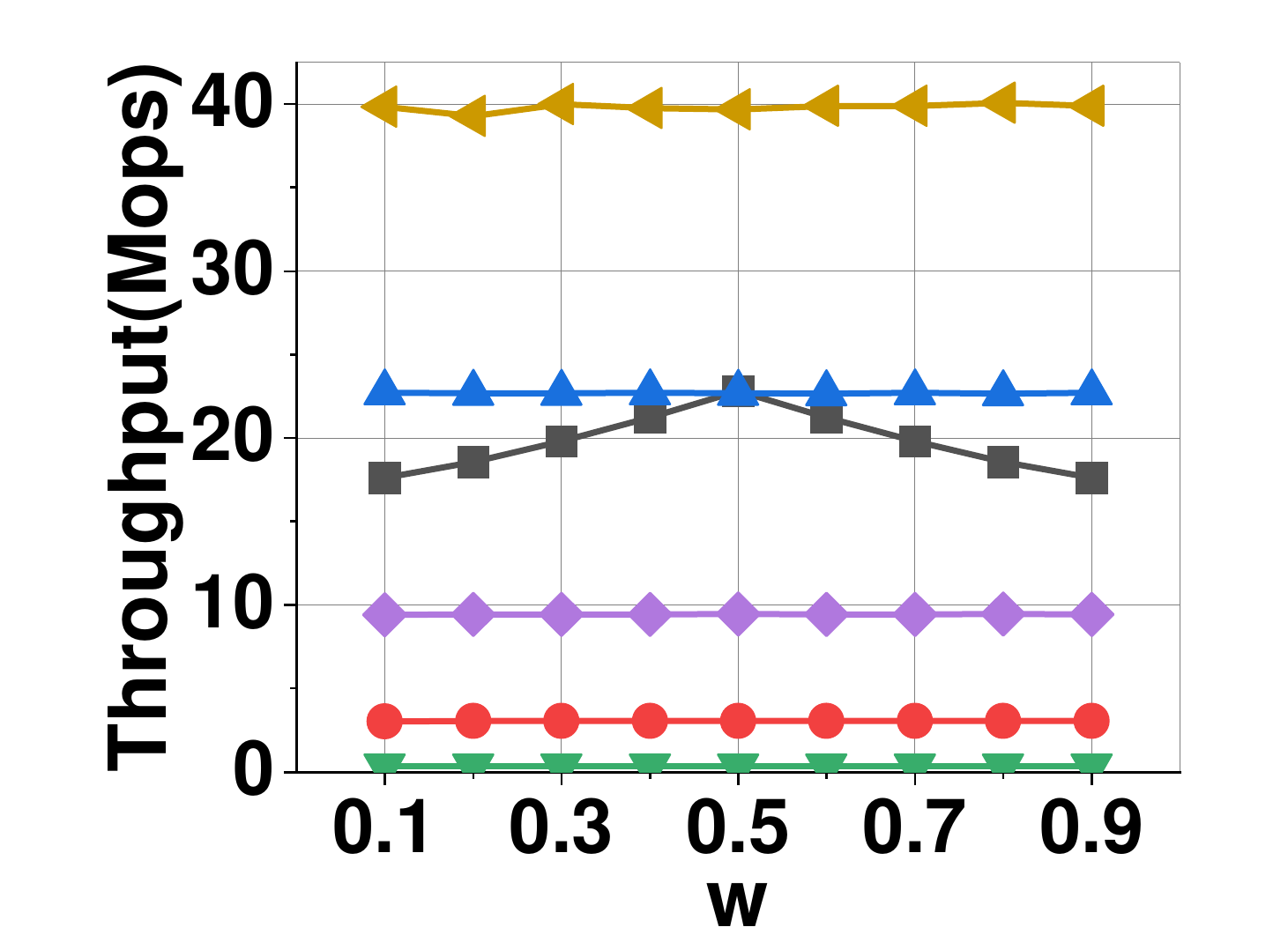}
            \vspace{-0.05in}
		}
	\end{minipage}}
    \caption{Insertion Throughput in Single-key Situation}
    \label{fig::single::thp}
    \end{minipage} \\
 \vspace{-0.2in}
\end{figure*}

\begin{figure*}[!ht]
	\centering
    \begin{minipage}{.8\textwidth}
        \includegraphics[width=1\textwidth]{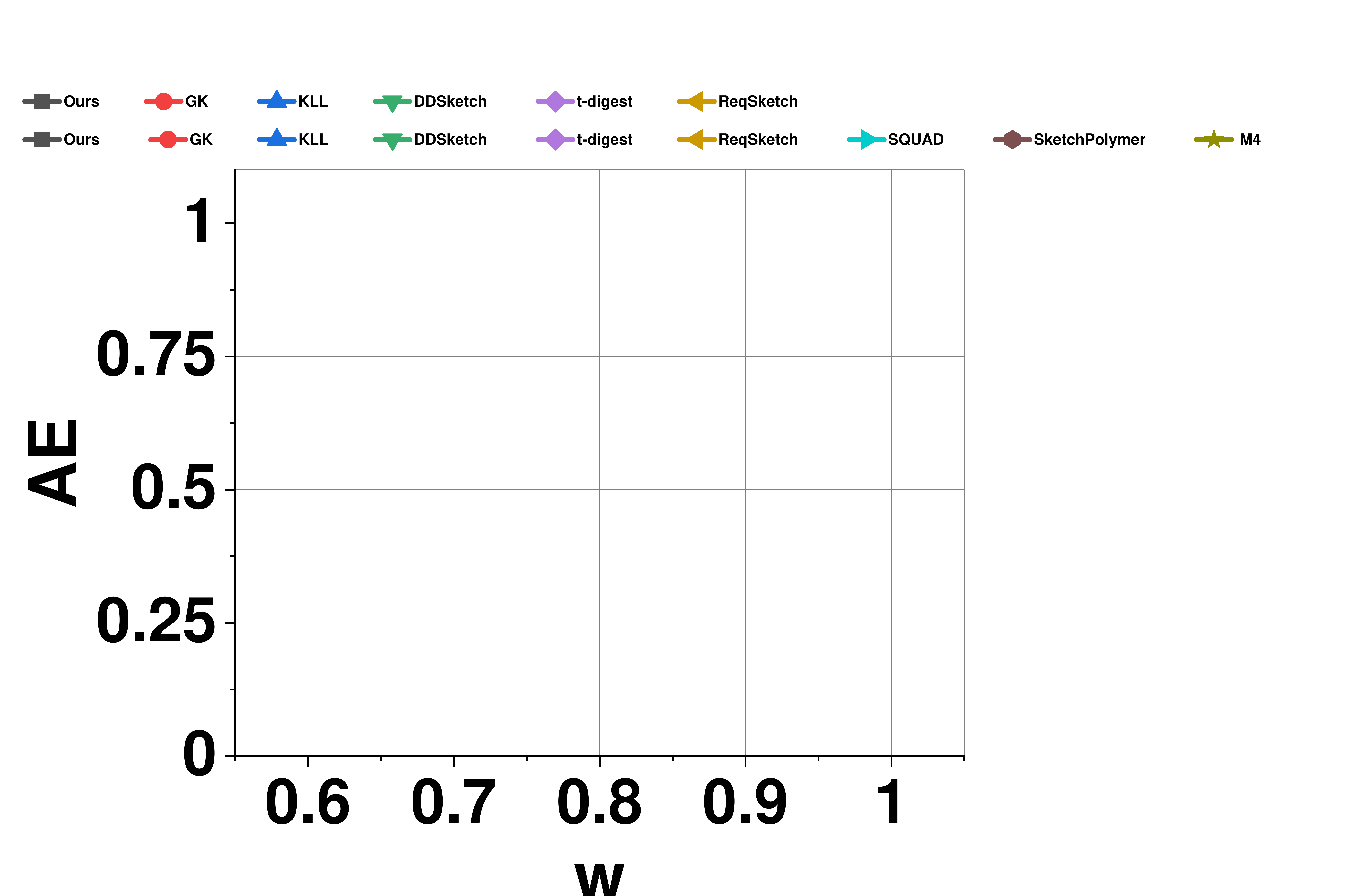}
    \end{minipage} 
    \vspace{-0.1in}
    \\
    \subfigure[900KB]{
	\begin{minipage}{0.216\textwidth}{
			\includegraphics[width=1\textwidth]{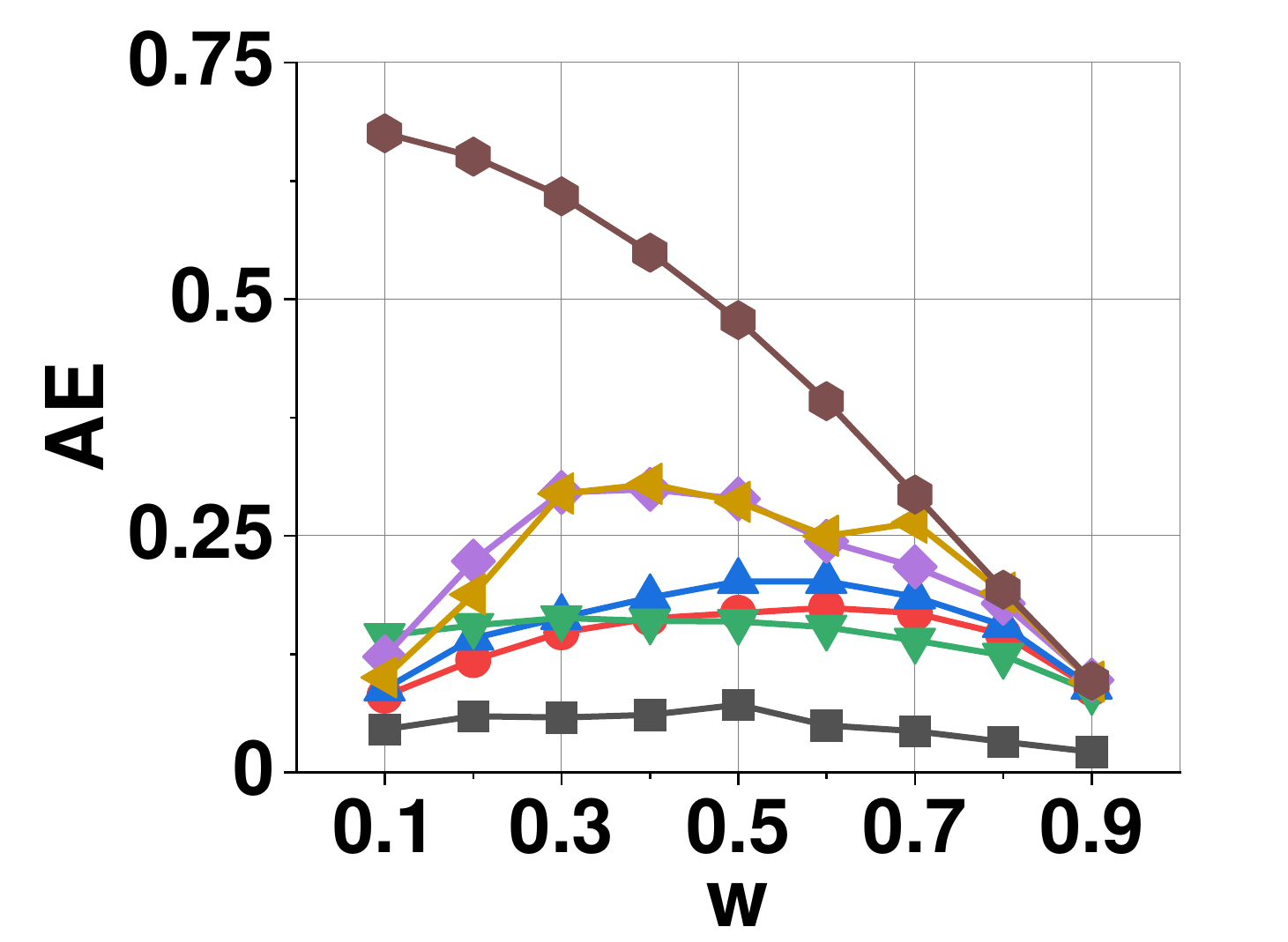}
            \vspace{-0.05in}
		}
	\end{minipage}}
    \subfigure[1100KB]{
	\begin{minipage}{0.216\textwidth}{
			\includegraphics[width=1\textwidth]{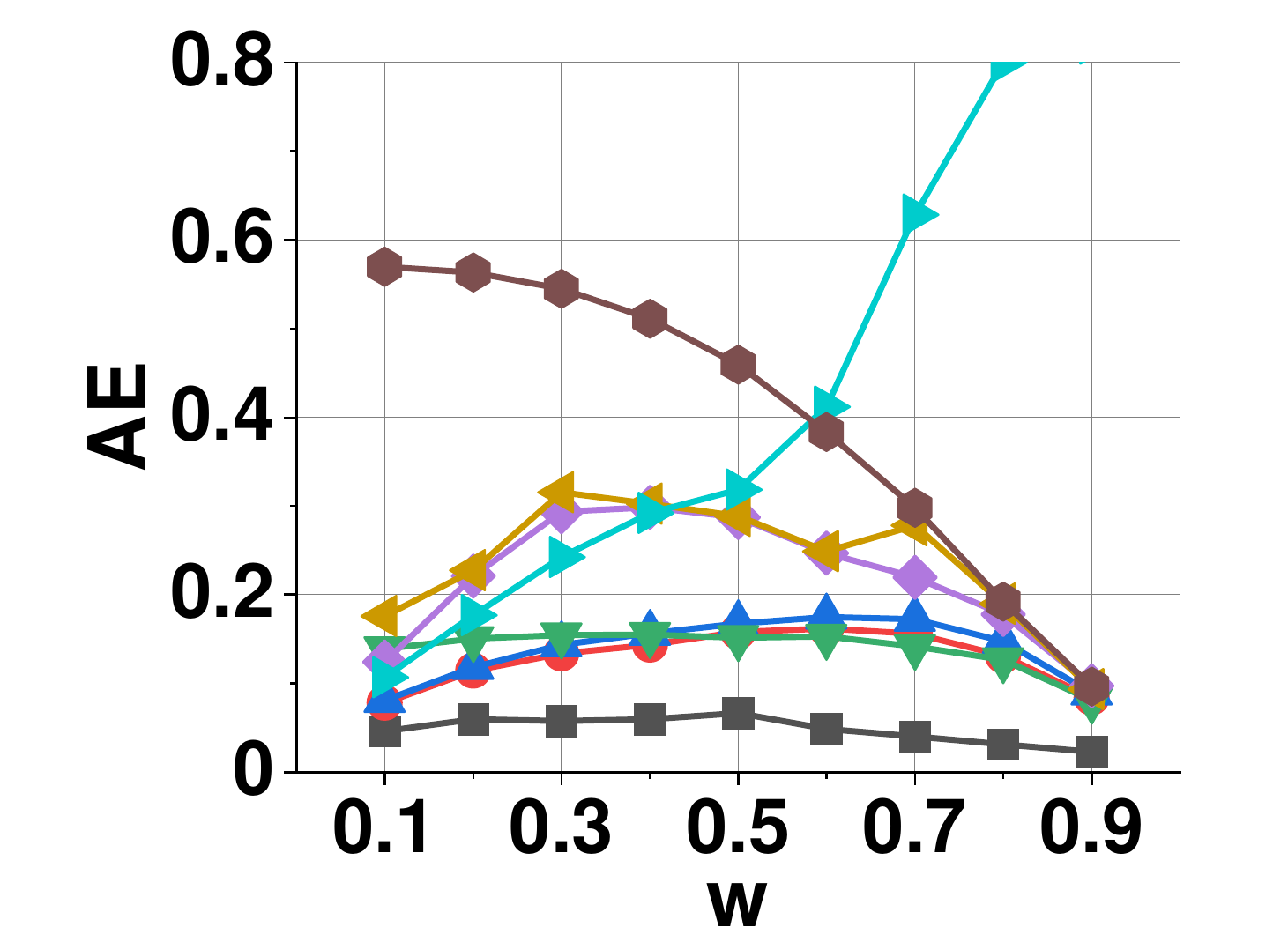}
            \vspace{-0.05in}
		}
	\end{minipage}}
    \subfigure[1300KB]{
	\begin{minipage}{0.216\textwidth}{
			\includegraphics[width=1\textwidth]{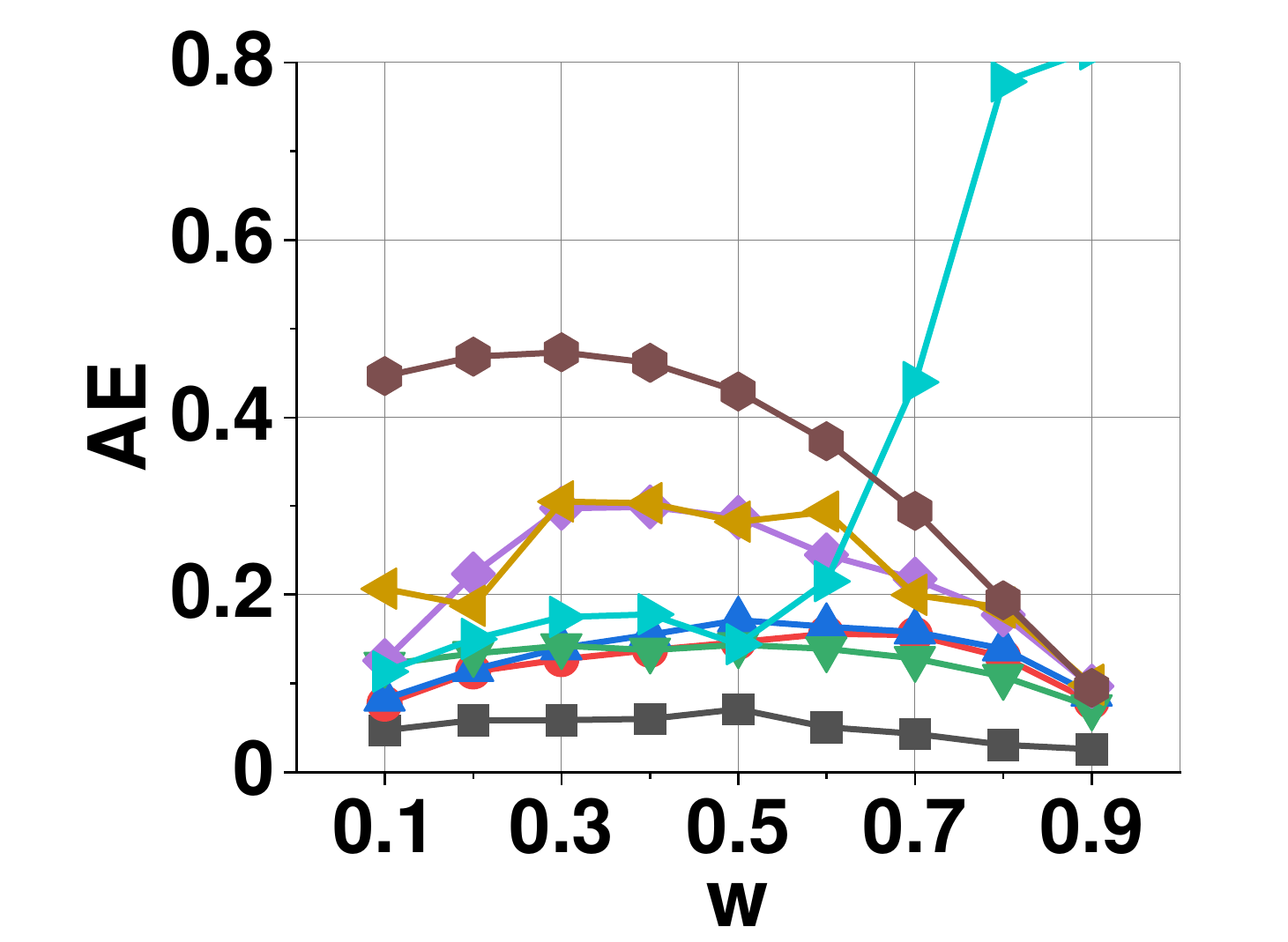}
            \vspace{-0.05in}
		}
	\end{minipage}}
    \subfigure[1500KB]{
	\begin{minipage}{0.216\textwidth}{
			\includegraphics[width=1\textwidth]{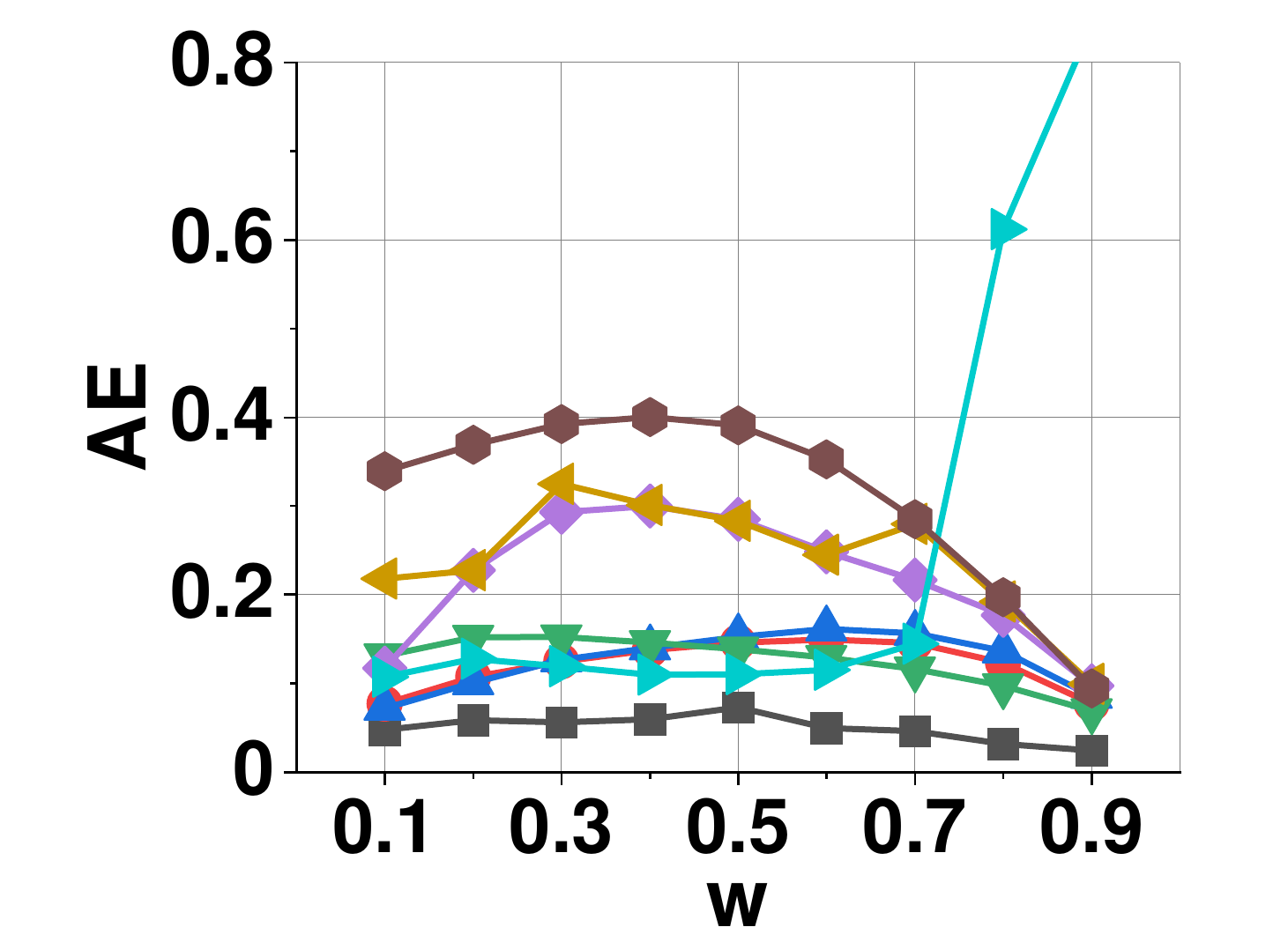}
            \vspace{-0.05in}
		}
	\end{minipage}}
    \caption{AE on CAIDA Dataset in Per-key Situation}
    \vspace{-0.15in}
    \label{fig::per::caida}
\end{figure*}

\begin{figure*}[!ht]
	\centering
    \begin{minipage}{.8\textwidth}
        \includegraphics[width=1\textwidth]{Exp_new/8legend.pdf}
    \end{minipage} 
    \vspace{-0.1in}
    \\
    \subfigure[900KB]{
	\begin{minipage}{0.216\textwidth}{
			\includegraphics[width=1\textwidth]{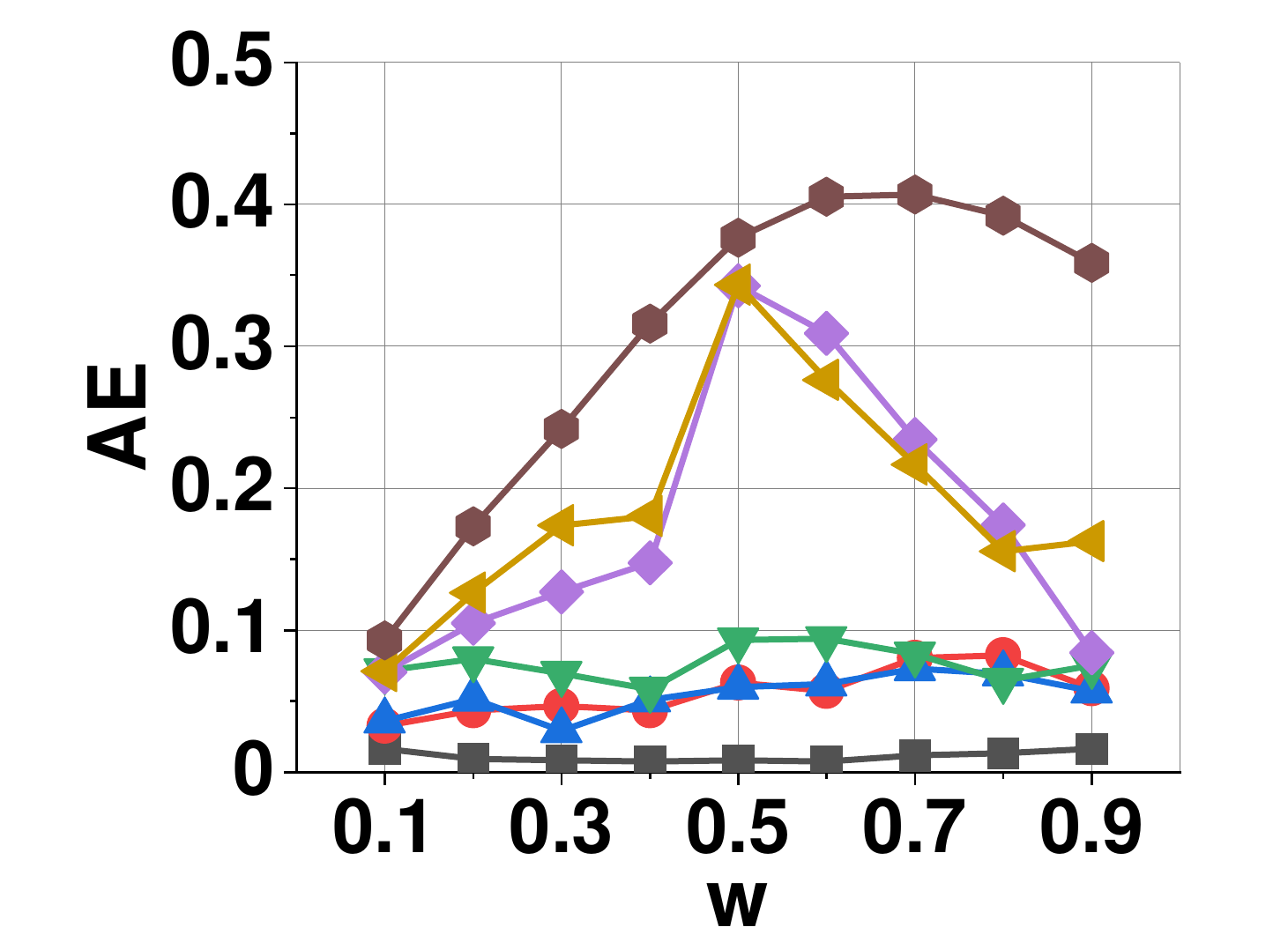}
            \vspace{-0.1in}
		}
	\end{minipage}}
    \subfigure[1100KB]{
	\begin{minipage}{0.216\textwidth}{
			\includegraphics[width=1\textwidth]{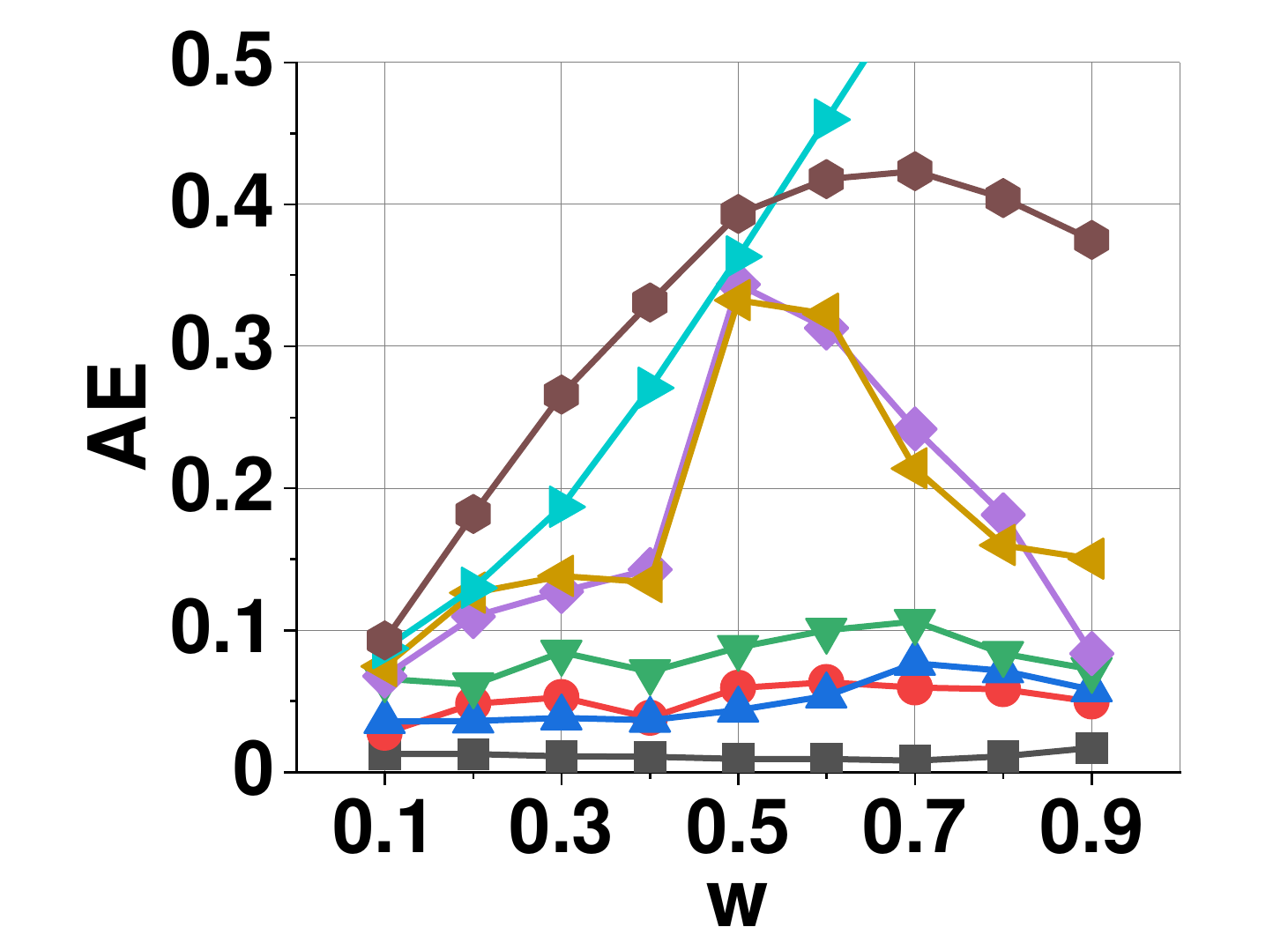}
            \vspace{-0.1in}
		}
	\end{minipage}}
    \subfigure[1300KB]{
	\begin{minipage}{0.216\textwidth}{
			\includegraphics[width=1\textwidth]{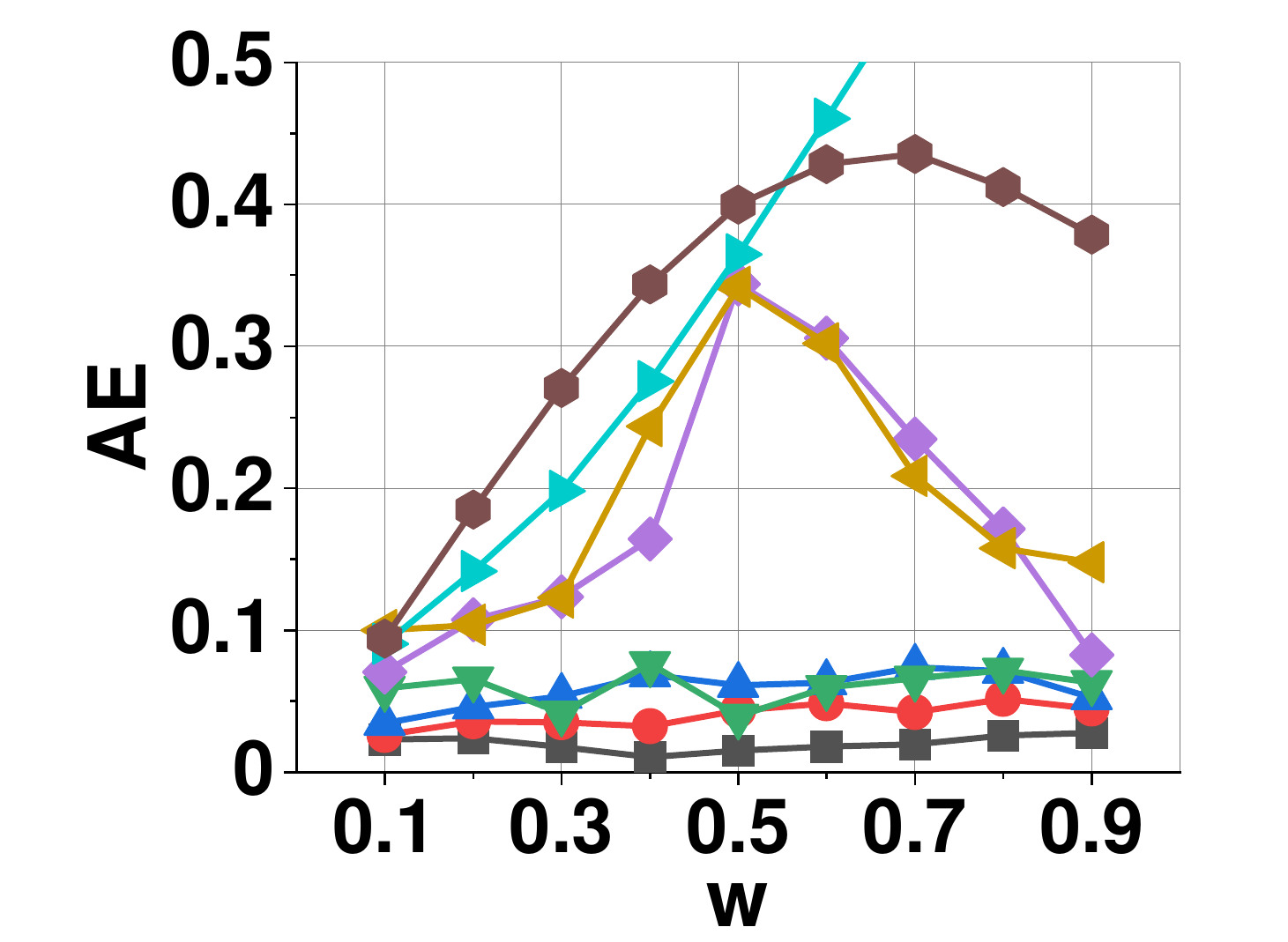}
            \vspace{-0.1in}
		}
	\end{minipage}}
    \subfigure[1500KB]{
	\begin{minipage}{0.216\textwidth}{
			\includegraphics[width=1\textwidth]{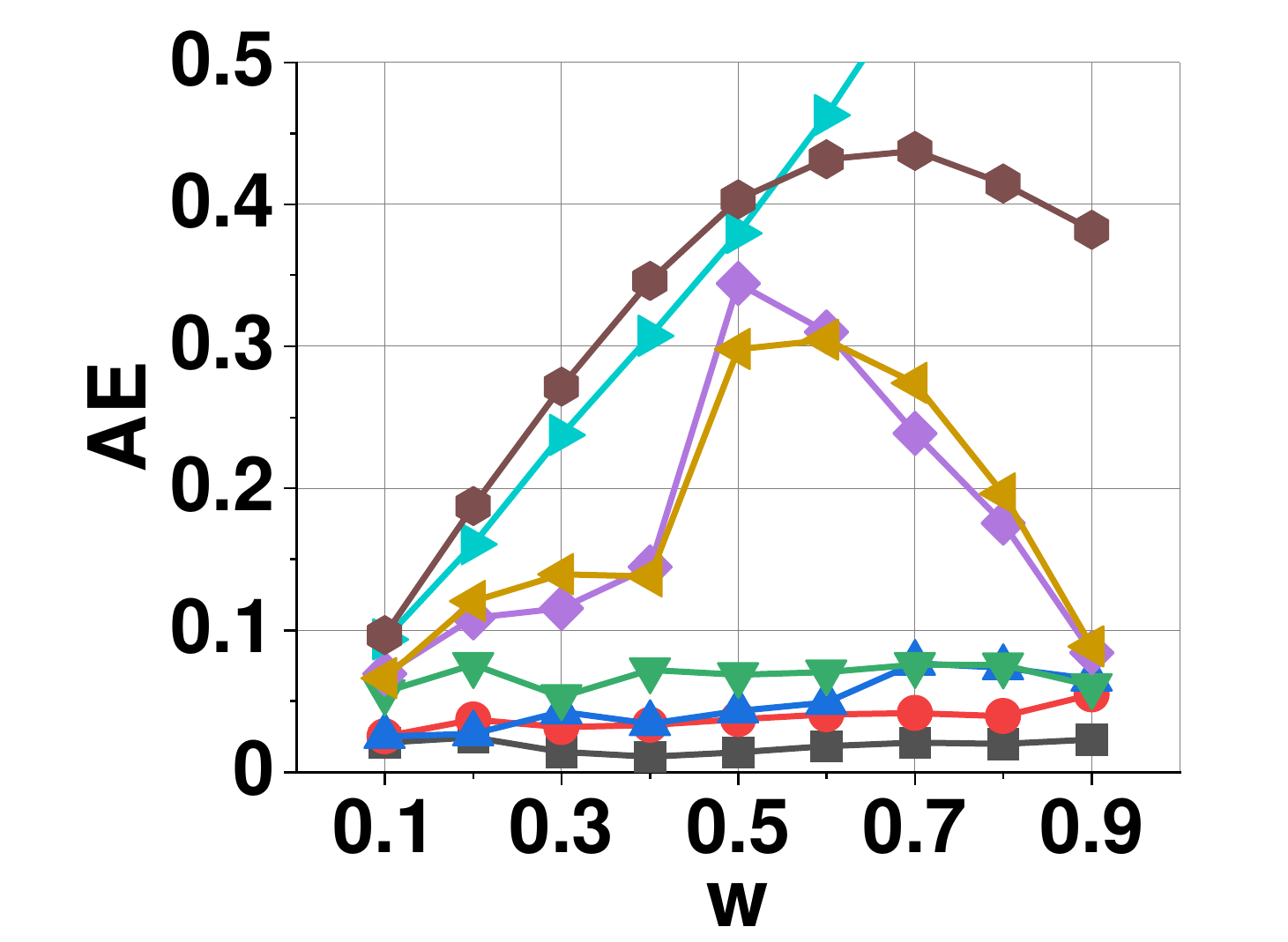}
            \vspace{-0.1in}
		}
	\end{minipage}}
    \caption{AE on Campus Dataset in Per-key Situation}
    \vspace{-0.15in}
    \label{fig::per::campus}
\end{figure*}

\subsection{Experiments on Parameter Settings}
\label{exp::param}
\vspace{-0.03in}

In this section, we measure the effects of some key parameters of
\ourname{}, namely, the number of cells in each bucket $d$,
the ratio of the memory size of \stageone{} to the memory size of the whole \ourname{} $q$, the threshold for \stageone{} $T$, the threshold for ratio in \stagetwo{} $\lambda$, the size of the Candidate $r$, and the size of the Representative $s$.
We do not measure the effect of the number of buckets in \stagetwo{} $u$, as $u$ can be calculated once the total memory and other parameters are fixed. 
We use CAIDA dataset in these experiments, and AE to measure these effects.

\textbf{Effects of the number of cells in each bucket $d$ (Figure \ref{fig::d}):} \textit{Experimental results show that the best option of $d$ is between 7 and 11.} We vary $d$ from $3$ to $13$ in each experiment, and results show that when the memory size is 500KB, \ourname{} achieves lowest AE when $d=7$. When the memory size is 1000/1500KB, \ourname{} performs best when $d=11$. In fact, users can tune the parameter $d$ to make a trade-off between accuracy and speed: a larger $d$ will make \ourname{} more accurate at the sacrifice of overall throughput, as \ourname{} has to traverse the bucket to find a cell when inserting an item in \stagetwo{}. Thus, we choose $d=7$ by default.

\textbf{Effects of the ratio of the memory size of \stageone{} to the memory size of the whole \ourname{} $q$ (Figure \ref{fig::q}):} \textit{Experimental results show that a small proportion of \stageone{} can take up the role of filtering infrequent items.} We vary $q$ from $0.05$ to $0.3$ in total memory from 500KB to 1500KB, and results show that the best choice of $q$ is between $0.1$ to $0.15$.  Taking into consideration that \stagetwo{} keeps the information of every frequent item, we allocate most memory to \stagetwo{}, so we set $q=0.1$ in other experiments. 

\textbf{Effects of the threshold for \stageone{} $T$ (Figure \ref{fig::t}):} \textit{Experimental results show that the best option of $T$ is among 20 to 40.} In each experiment, we try different values of $T$, and results show that the best option of $T$ is 30/40/20 within 500/1000/1500KB memory respectively. Since setting a larger $T$ can filter as many infrequent items as possible in \stageone{}, we choose $T=40$.

\textbf{Effects of the threshold for ratio in \stagetwo{} $\lambda$ (Figure \ref{fig::l}):} \textit{Experimental results show that the best option of $\lambda$ is between 3 to 6.} We set the memory from 500KB to 1500KB, and vary $\lambda$ from $1$ to $6$ in the experiment. The results show that the best option of $\lambda$ is 6/4/3 within 500/1000/1500KB memory respectively. If $\lambda$ is too small, items might be replaced too frequently, which results in the loss of information in the data stream. If $\lambda$ is too large, the mechanism of Ostracism wouldn't do much good. As a result, we choose $\lambda=4$.

\textbf{Effects of the size of the Candidate $r$ (Figure \ref{fig::r}):} \textit{Experimental results show that the best $r$ is 16.} In order to choose 2 median value samples to insert to the Representative when the Candidate is full, the size of the Candidate should be an even number. We vary $r$ from $10$ to $40$, and experimental results show that the best option of $r$ is always 16 when total memory is 500/1000/1500KB. In fact, if we keep the total memory of \stagetwo{} unchanged, then a larger $r$ will result in a smaller $u$, so hash collision will be inevitable. As a result, we set $r=16$ by default. 

\textbf{Effects of the size of the Representative $s$ (Figure \ref{fig::s}):} \textit{Experimental results show that the best choice of $s$ is 10.} Since we insert two numbers into the Representative every time, we should set $s$ to be an even number, and we vary it from 4 to 14. The experimental results show that the best option of $s$ is always 10 when the total memory is 500/1000/1500KB. In fact, the more value samples the Representative has, the fewer buckets \stagetwo{} contains. To ensure that there are enough buckets in \stagetwo{}, we set $s=10$ by default.

\begin{figure*}[!ht]
	\centering
    \vspace{-0.05in}
    \begin{minipage}{.8\textwidth}
        \includegraphics[width=1\textwidth]{Exp_new/8legend.pdf}
    \end{minipage} 
    \vspace{-0.1in}
    \\
    \subfigure[50KB]{
	\begin{minipage}{0.216\textwidth}{
			\includegraphics[width=1\textwidth]{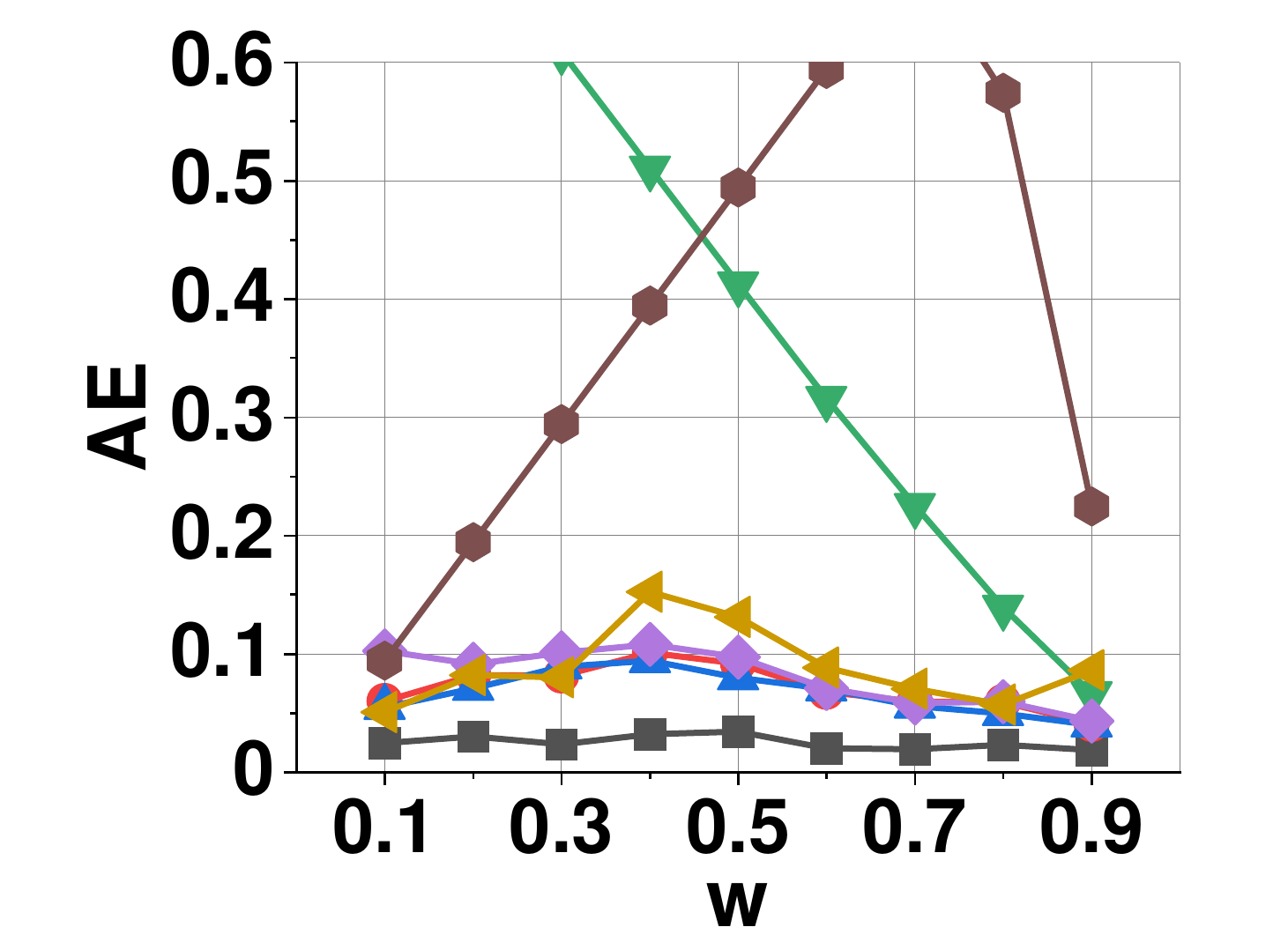}
            \vspace{-0.05in}
		}
	\end{minipage}}
    \subfigure[60KB]{
	\begin{minipage}{0.216\textwidth}{
			\includegraphics[width=1\textwidth]{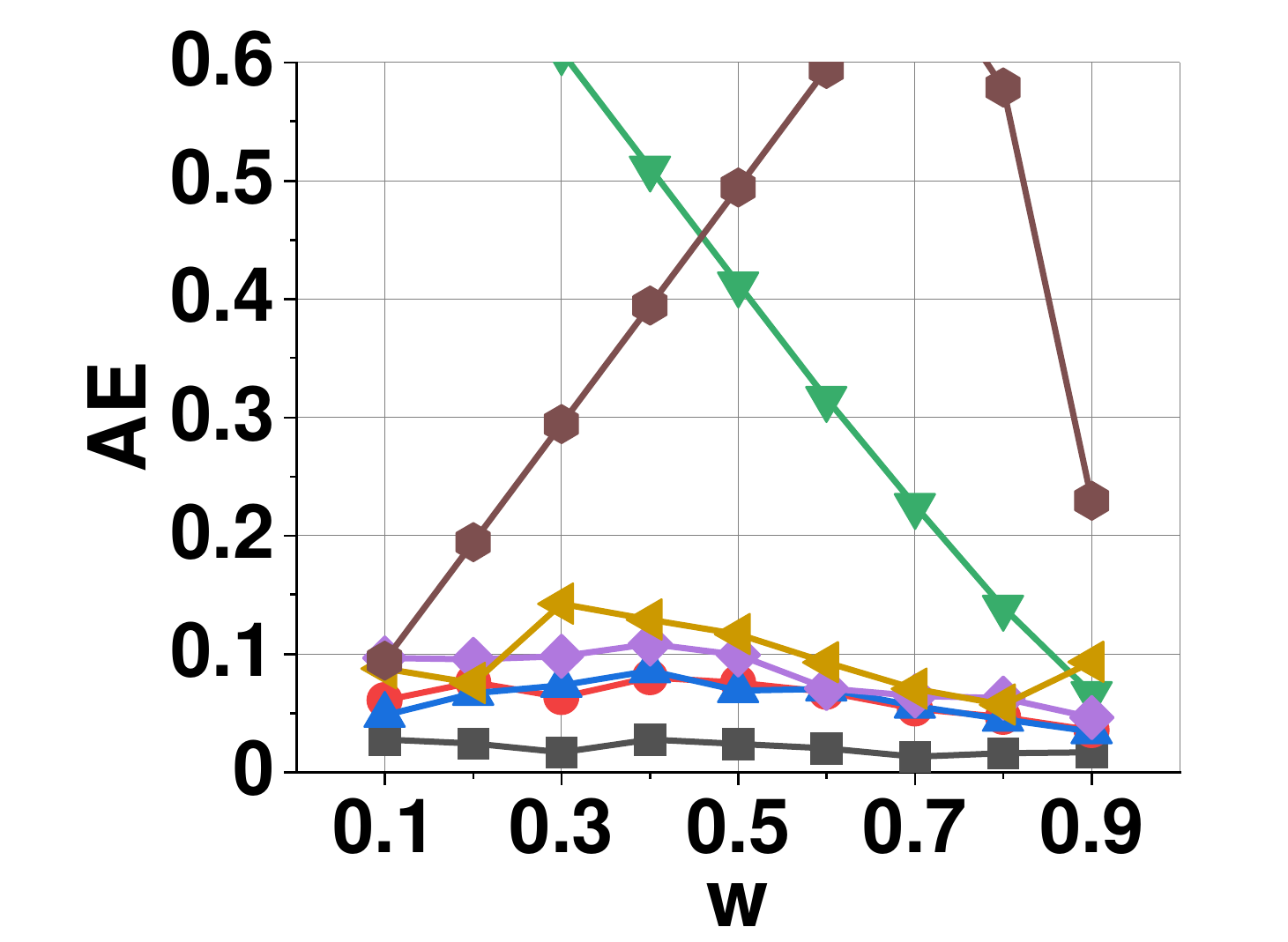}
            \vspace{-0.05in}
		}
	\end{minipage}}
    \subfigure[70KB]{
	\begin{minipage}{0.216\textwidth}{
			\includegraphics[width=1\textwidth]{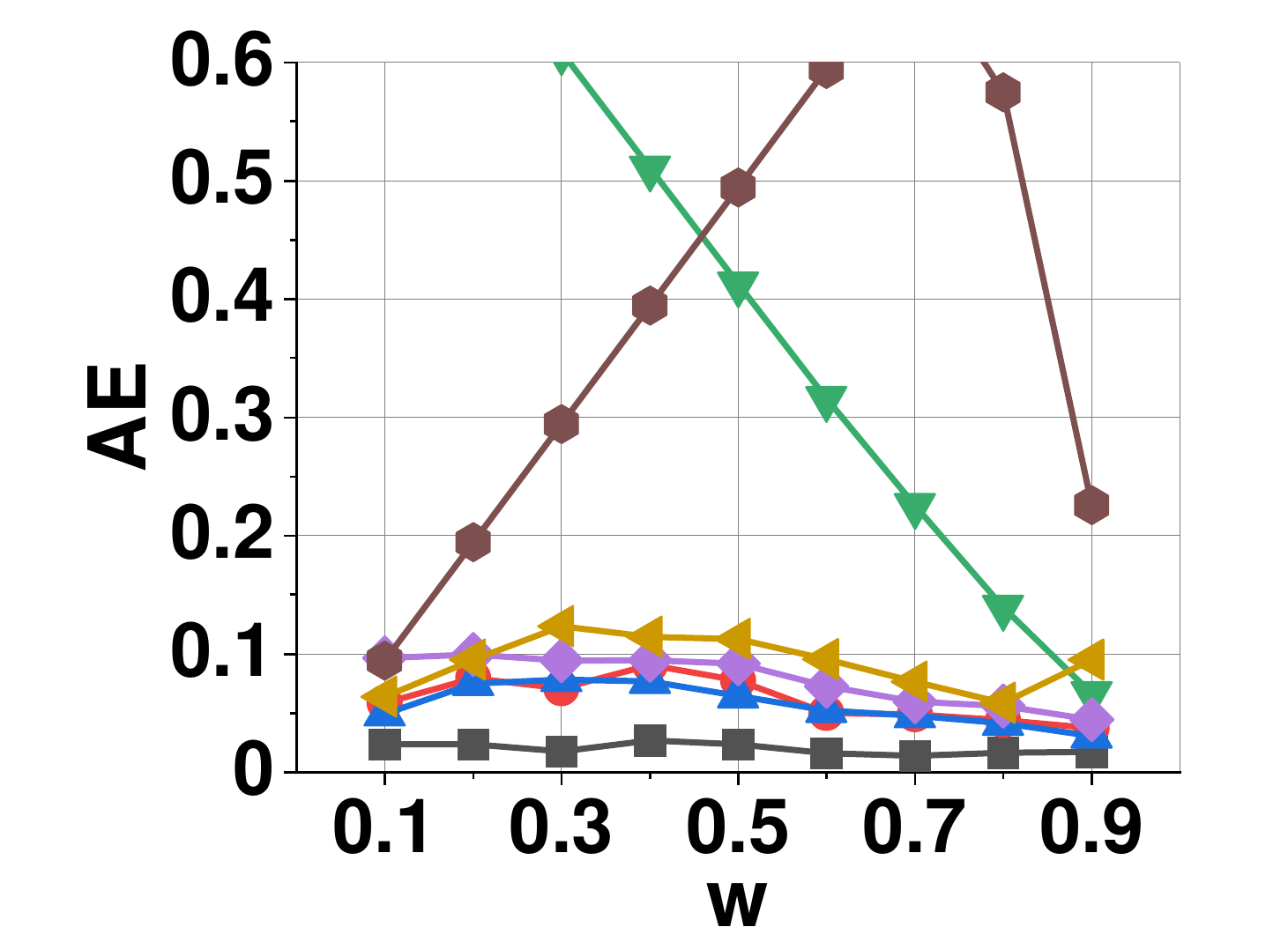}
            \vspace{-0.05in}
		}
	\end{minipage}} 
    \subfigure[80KB]{
	\begin{minipage}{0.216\textwidth}{
			\includegraphics[width=1\textwidth]{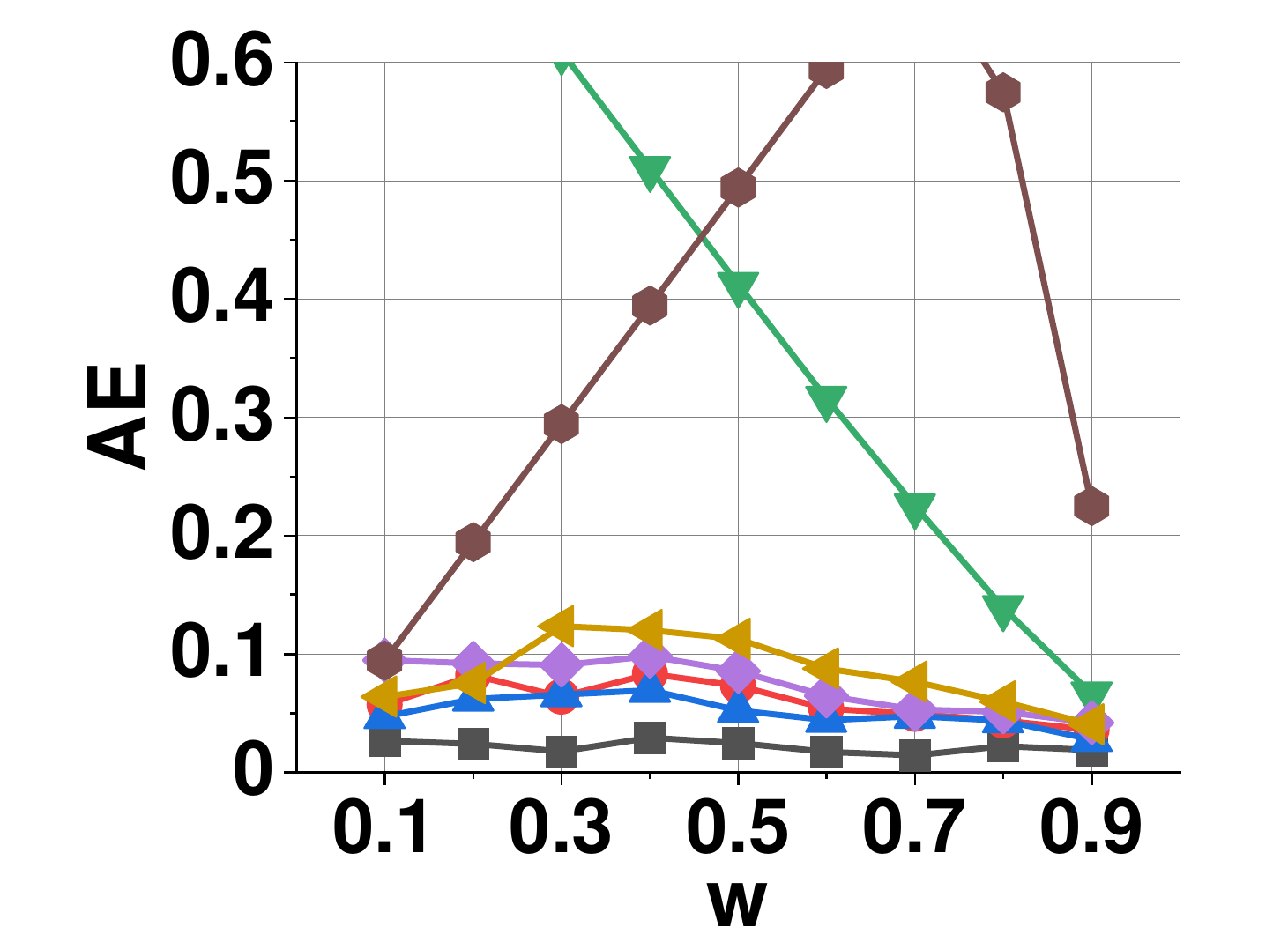}
            \vspace{-0.05in}
		}
	\end{minipage}}
    \caption{AE on Seattle Dataset in Per-key Situation}
    \vspace{-0.15in}
    \label{fig::per::seattle}
\end{figure*}

\begin{figure*}[!ht]
	\centering
  \vspace{-0.05in}
    \begin{minipage}{.8\textwidth}
        \includegraphics[width=1\textwidth]{Exp_new/8legend.pdf}
    \end{minipage} 
    \vspace{-0.1in}
    \\
    \subfigure[50KB]{
	\begin{minipage}{0.216\textwidth}{
			\includegraphics[width=1\textwidth]{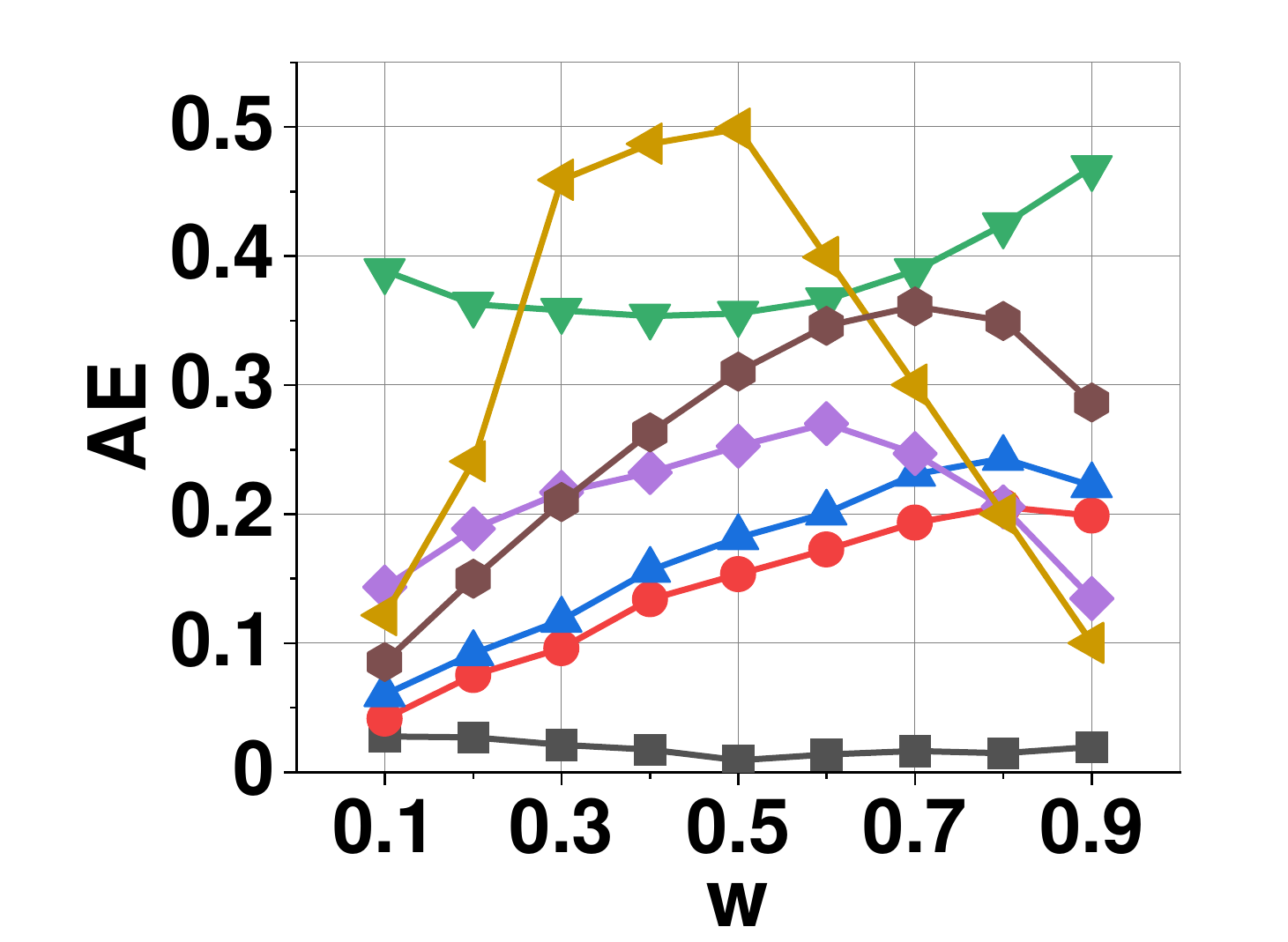}
            \vspace{-0.05in}
		}
	\end{minipage}}
    \subfigure[60KB]{
	\begin{minipage}{0.216\textwidth}{
			\includegraphics[width=1\textwidth]{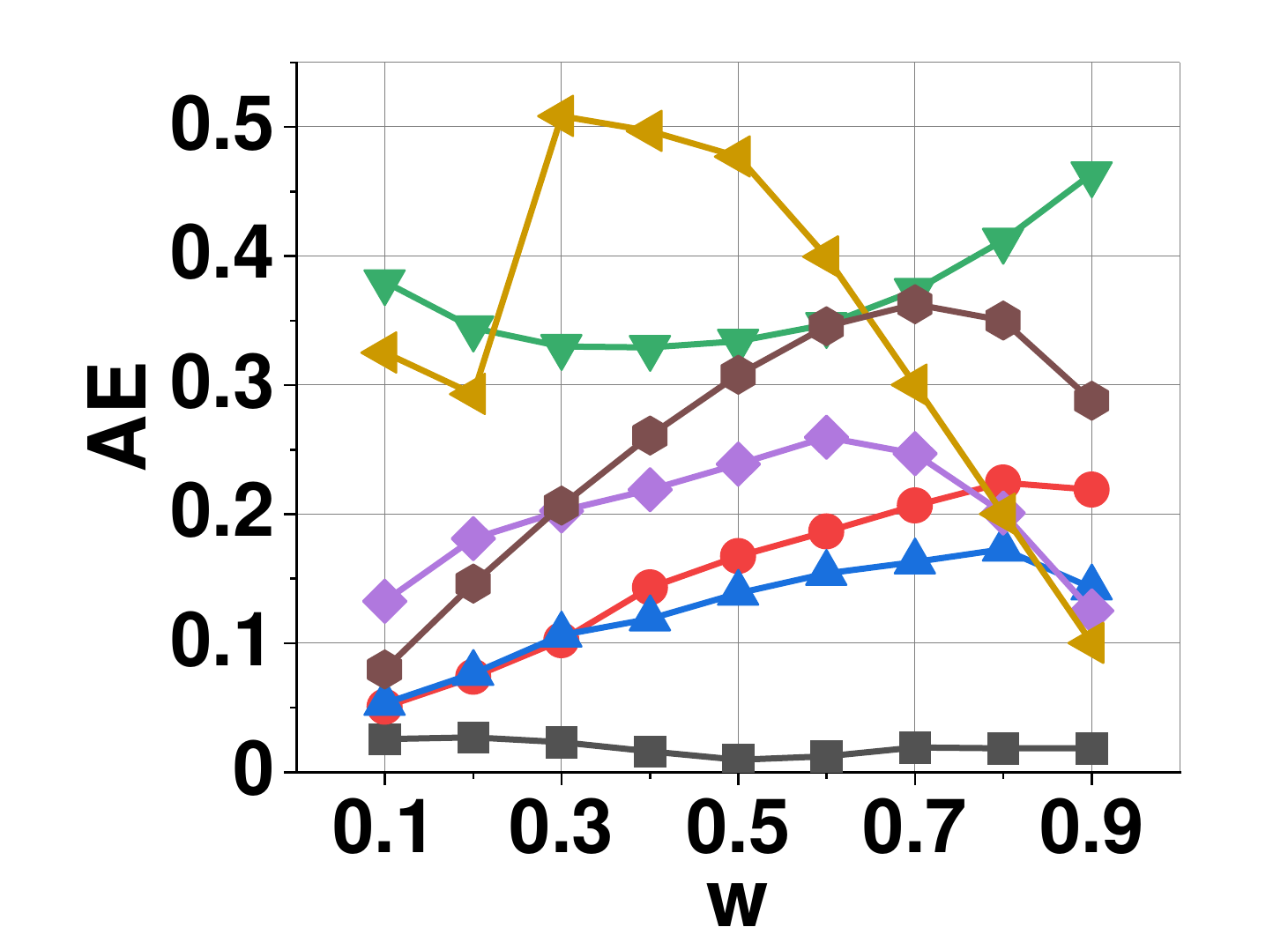}
            \vspace{-0.05in}
		}
	\end{minipage}}
    \subfigure[70KB]{
	\begin{minipage}{0.216\textwidth}{
			\includegraphics[width=1\textwidth]{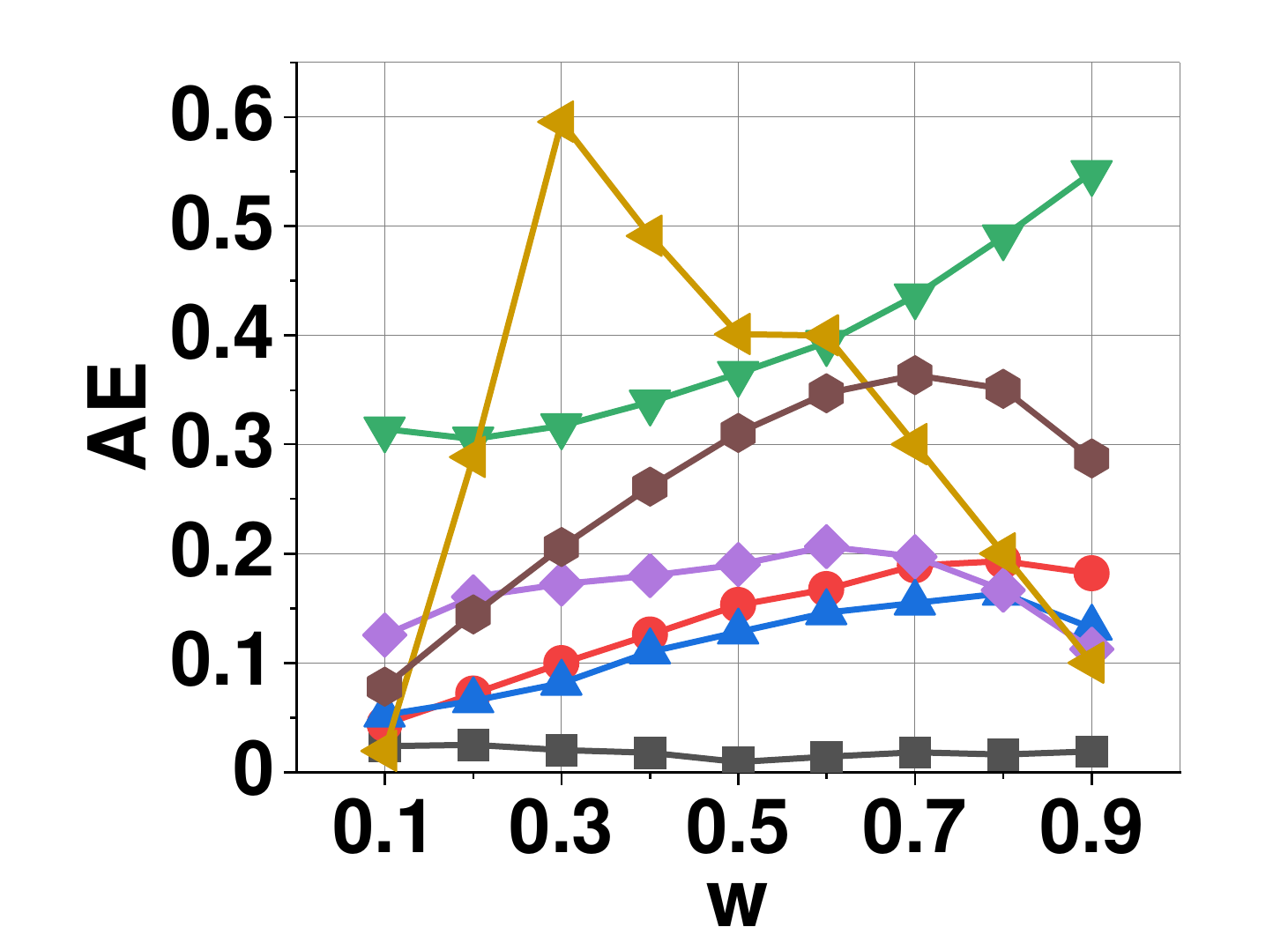}
            \vspace{-0.05in}
		}
	\end{minipage}}
    \subfigure[80KB]{
	\begin{minipage}{0.216\textwidth}{
			\includegraphics[width=1\textwidth]{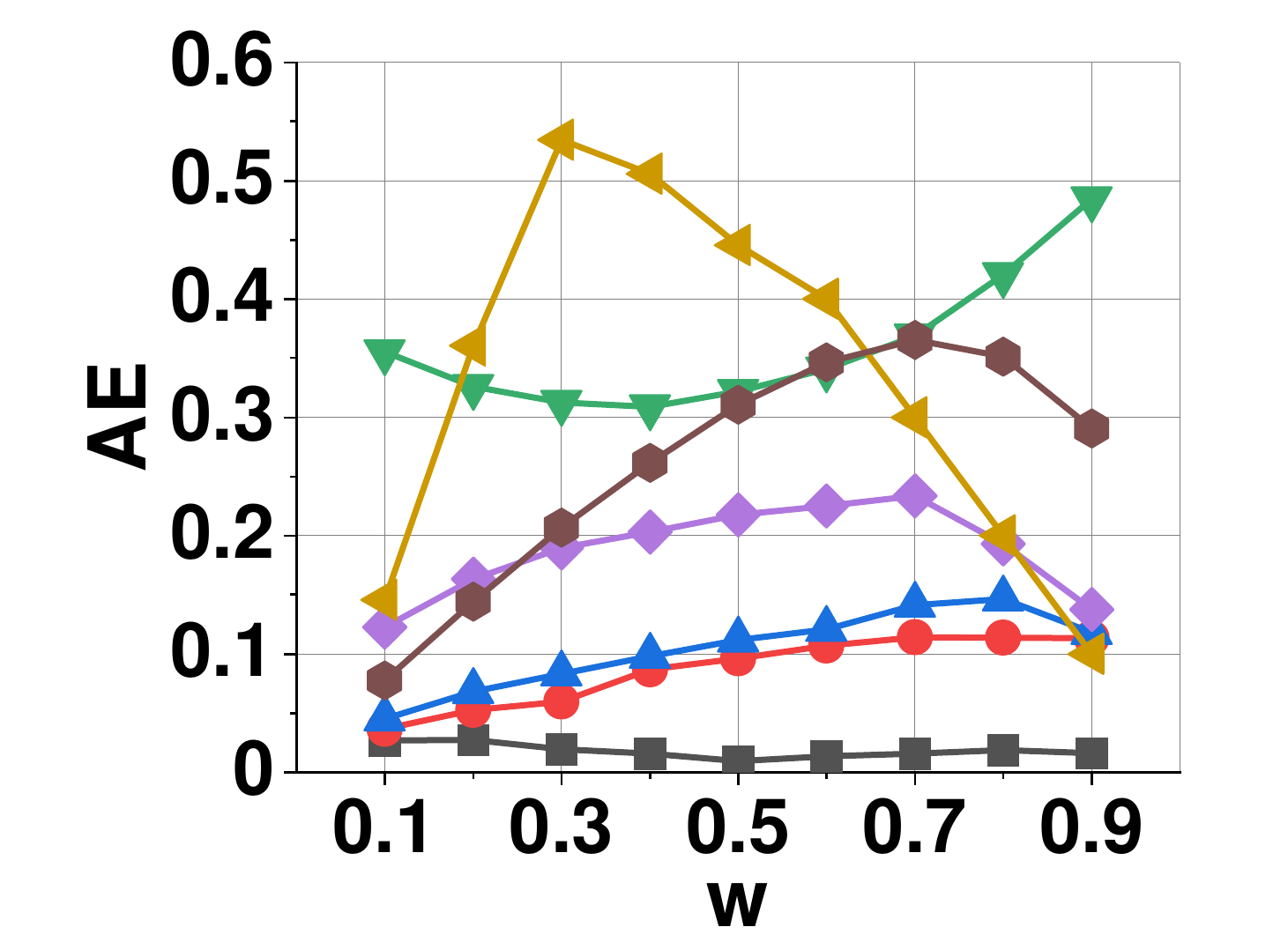}
            \vspace{-0.05in}
		}
	\end{minipage}}
    \caption{AE on Synthetic Dataset in Per-key Situation}
    \vspace{-0.15in}
    \label{fig::per::stn}
\end{figure*}

\begin{figure*}[!ht]
	\centering
   \vspace{-0.05in}
    \begin{minipage}{.8\textwidth}
        \includegraphics[width=1\textwidth]{Exp_new/8legend.pdf}
    \end{minipage} 
    \vspace{-0.1in}
    \\
    \subfigure[CAIDA (1500KB)]{
	\begin{minipage}{0.216\textwidth}{
			\includegraphics[width=1\textwidth]{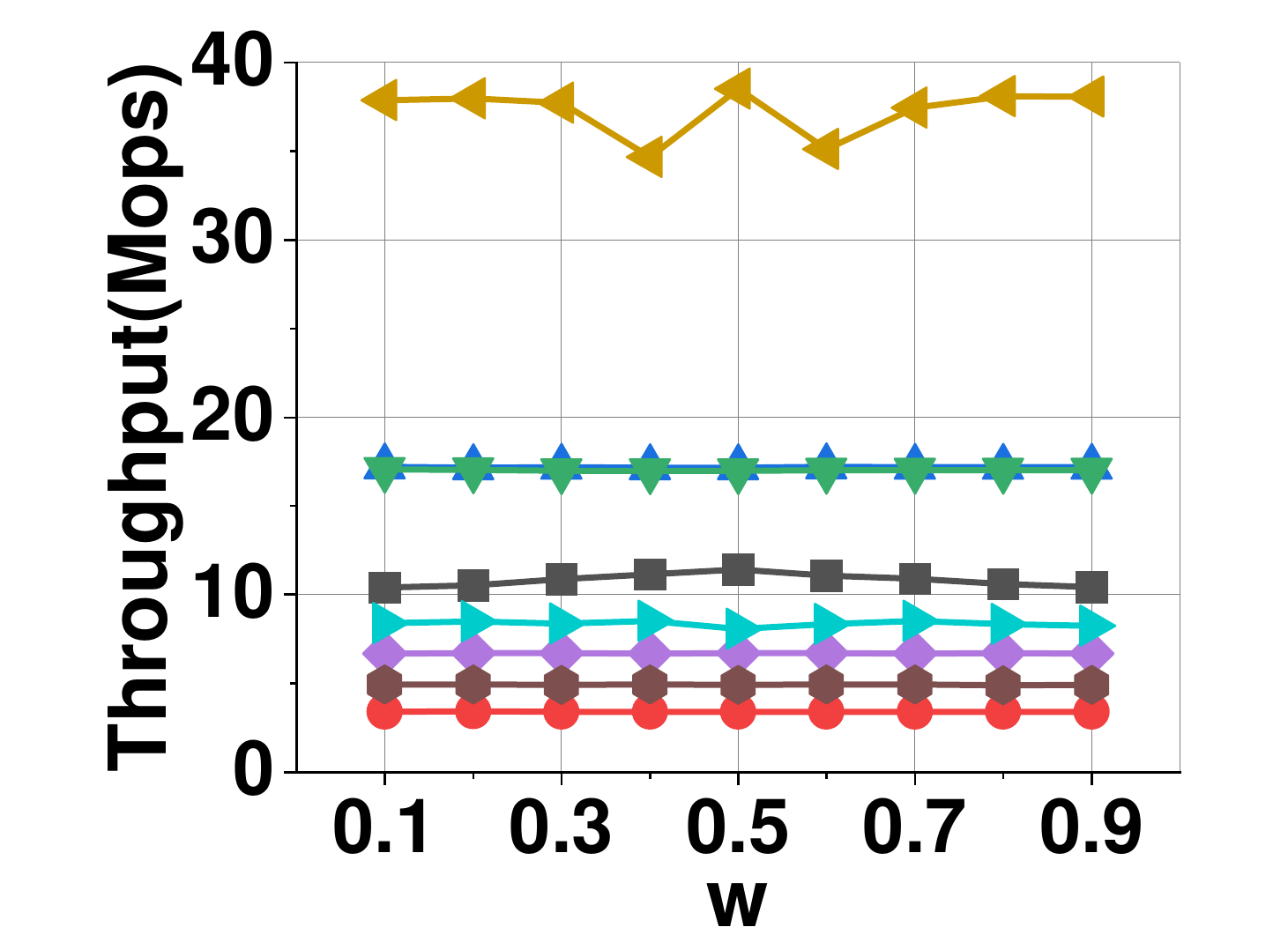}
            \vspace{-0.05in}
		}
	\end{minipage}}
    \subfigure[Campus (1500KB)]{
	\begin{minipage}{0.216\textwidth}{
			\includegraphics[width=1\textwidth]{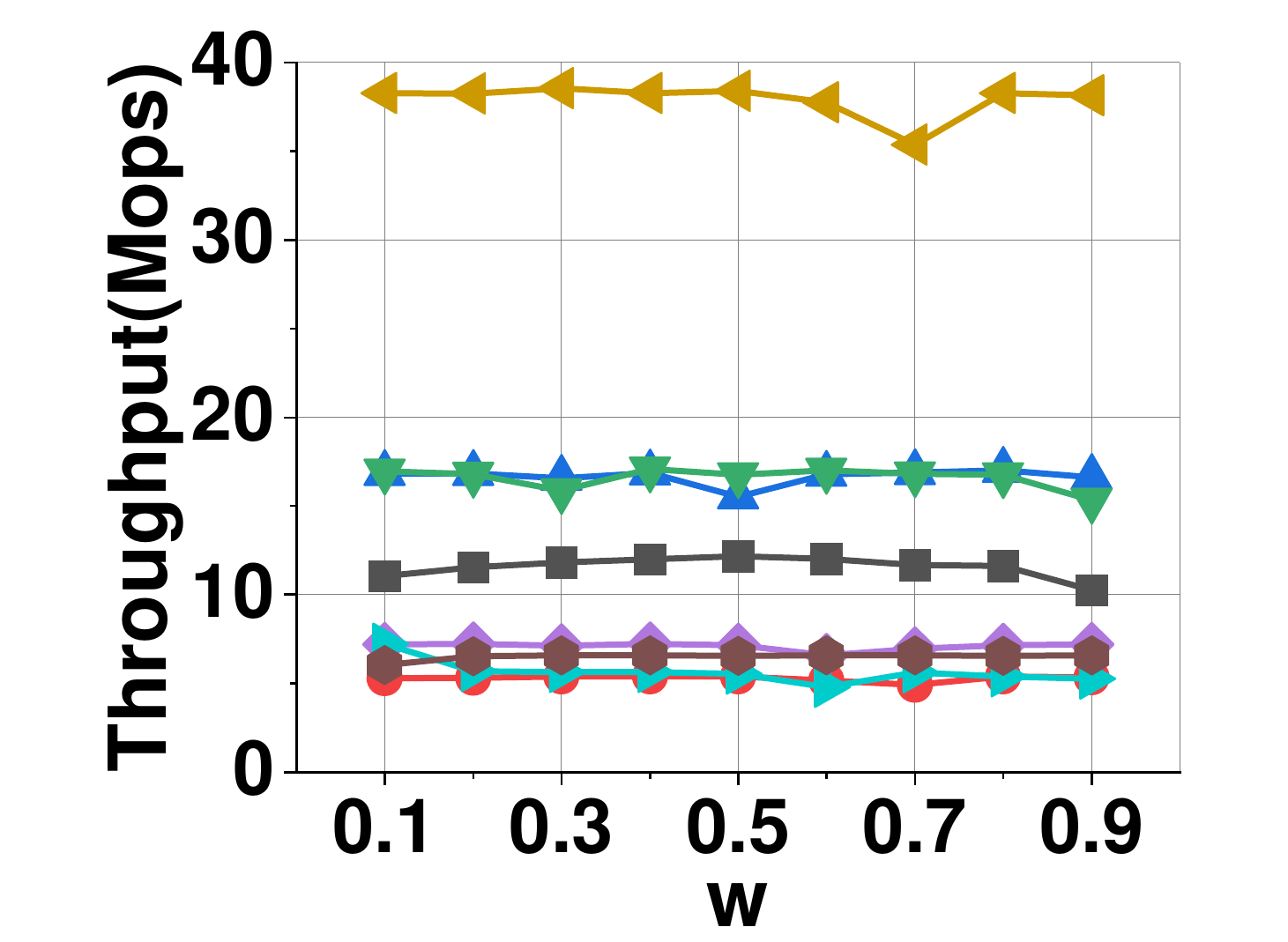}
            \vspace{-0.05in}
		}
	\end{minipage}}
    \subfigure[Seattle (80KB)]{
	\begin{minipage}{0.216\textwidth}{
			\includegraphics[width=1\textwidth]{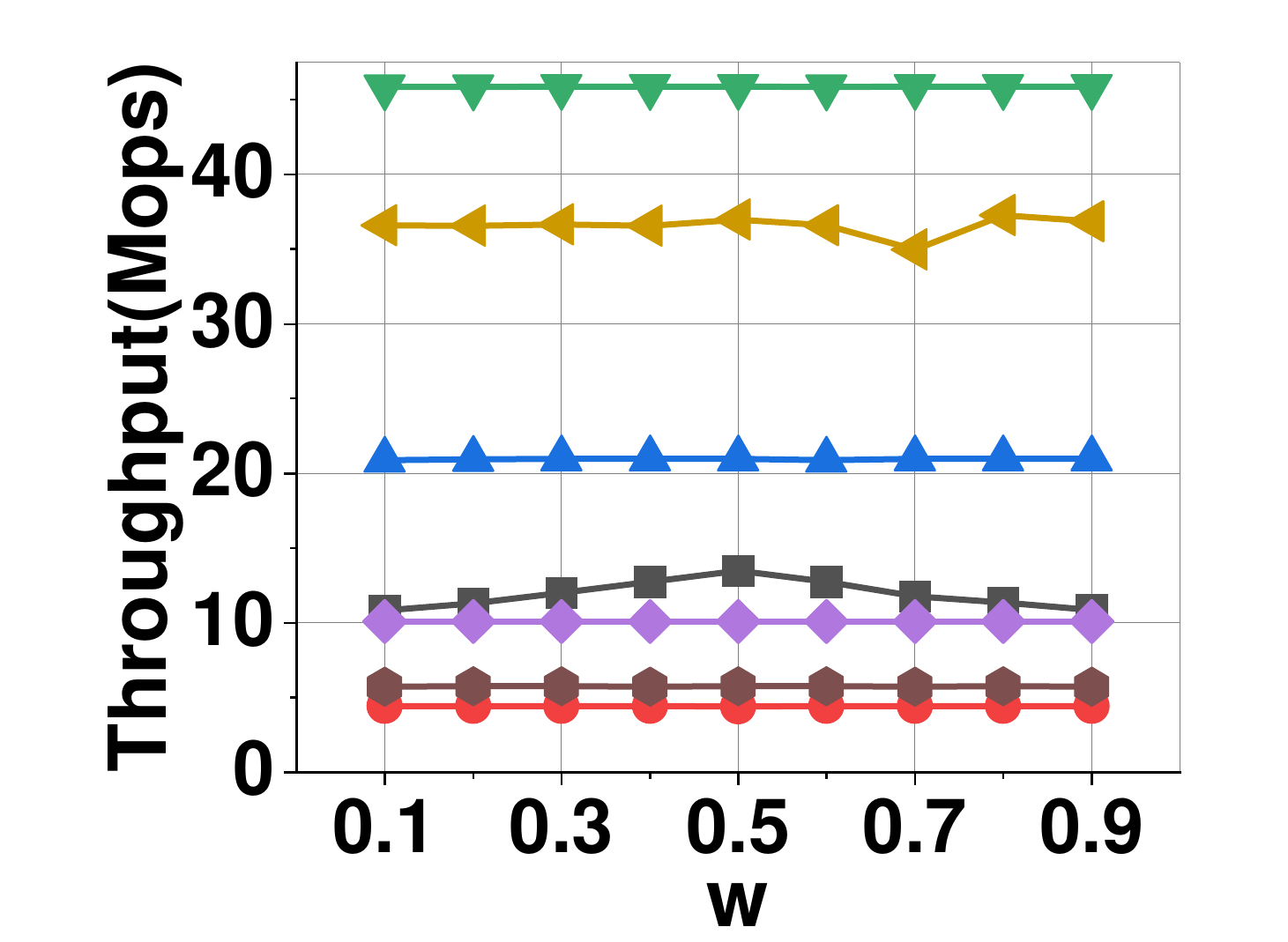}
            \vspace{-0.05in}
		}
	\end{minipage}}
    \subfigure[Synthetic (80KB)]{
	\begin{minipage}{0.216\textwidth}{
			\includegraphics[width=1\textwidth]{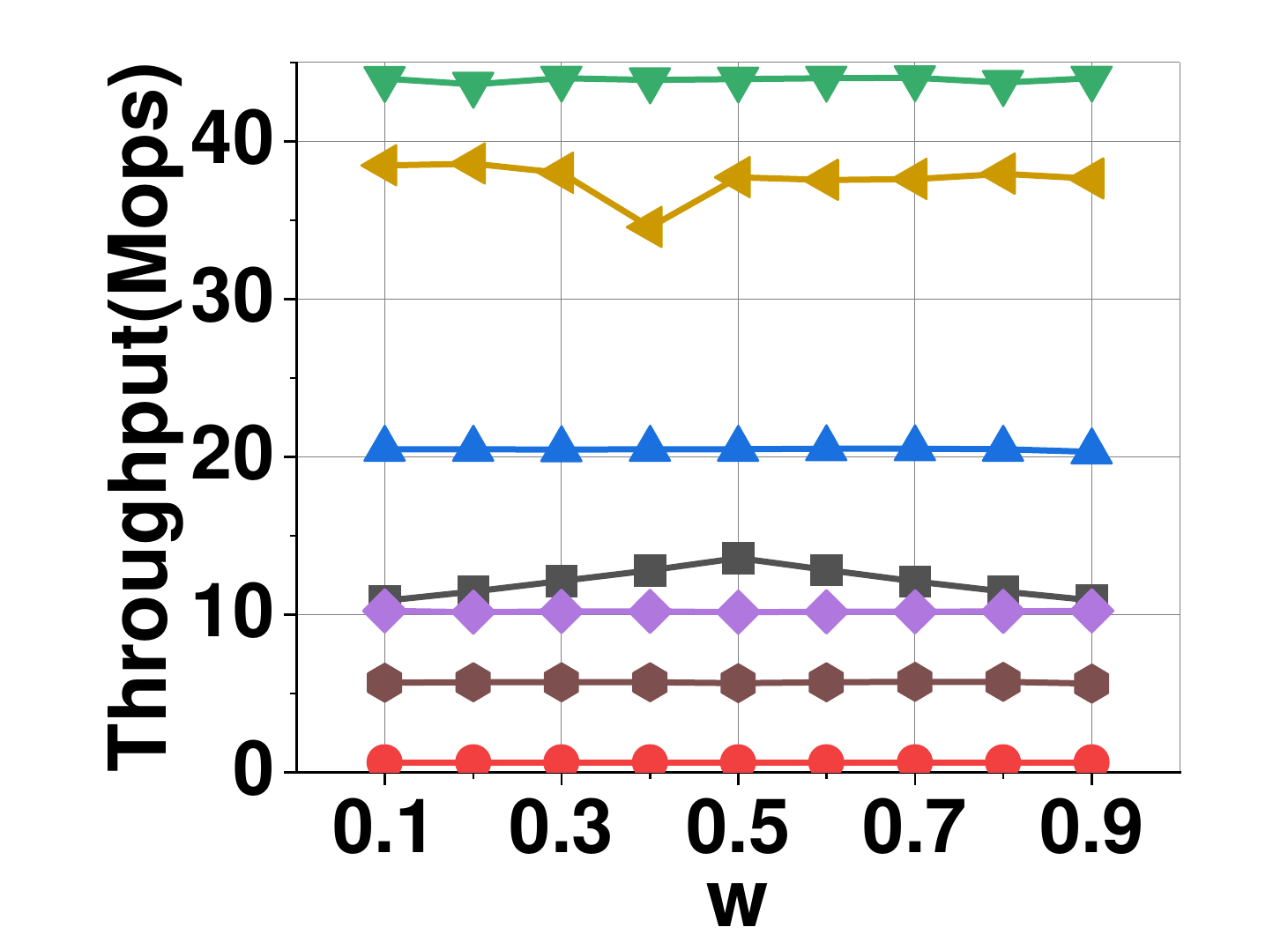}
            \vspace{-0.05in}
		}
	\end{minipage}}
    \caption{Insertion Throughput on Per-key Situation}
    \vspace{-0.05in}
    \label{fig::per::thp}
\end{figure*}
\begin{figure*}[!ht]
	\centering
   \vspace{-0.05in}
    \begin{minipage}{.8\textwidth}
        \includegraphics[width=1\textwidth]{Exp_new/8legend.pdf}
    \end{minipage} 
    \vspace{-0.1in}
    \\
    \subfigure[CAIDA (1500KB)]{
	\begin{minipage}{0.216\textwidth}{
			\includegraphics[width=1\textwidth]{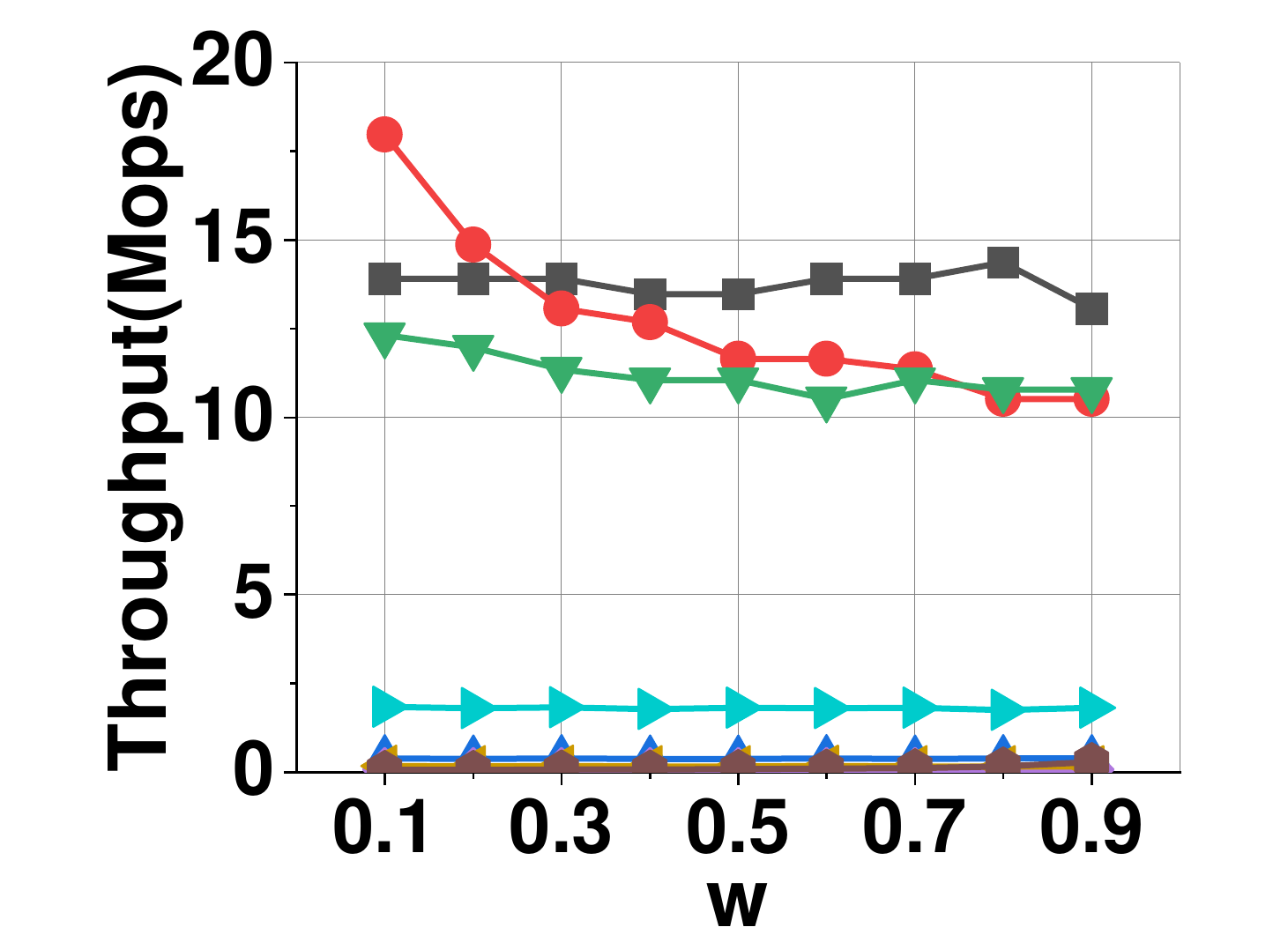}
            \vspace{-0.05in}
		}
	\end{minipage}}
    \subfigure[Campus (1500KB)]{
	\begin{minipage}{0.216\textwidth}{
			\includegraphics[width=1\textwidth]{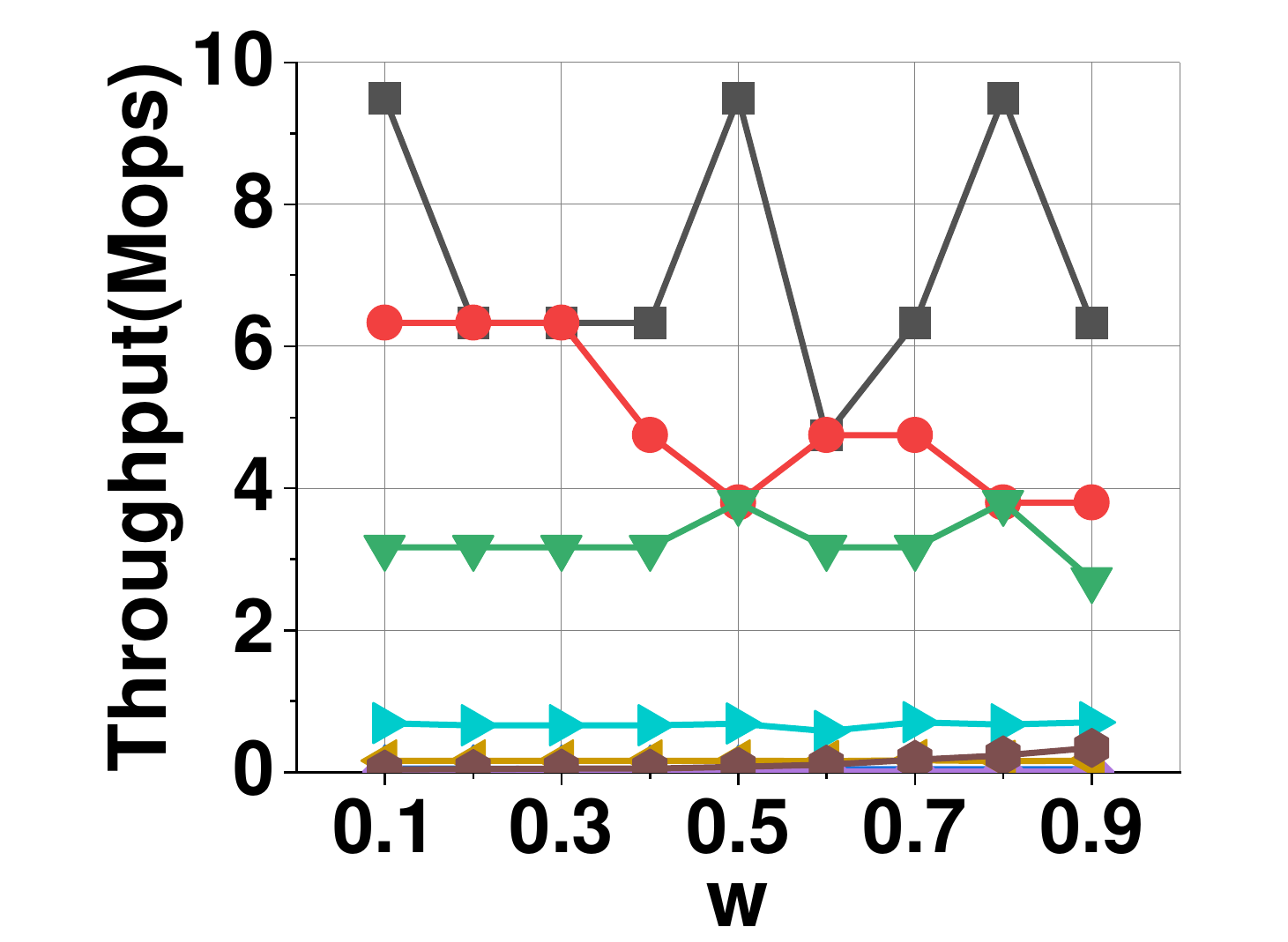}
            \vspace{-0.05in}
		}
	\end{minipage}}
    \subfigure[Seattle (80KB)]{
	\begin{minipage}{0.216\textwidth}{
			\includegraphics[width=1\textwidth]{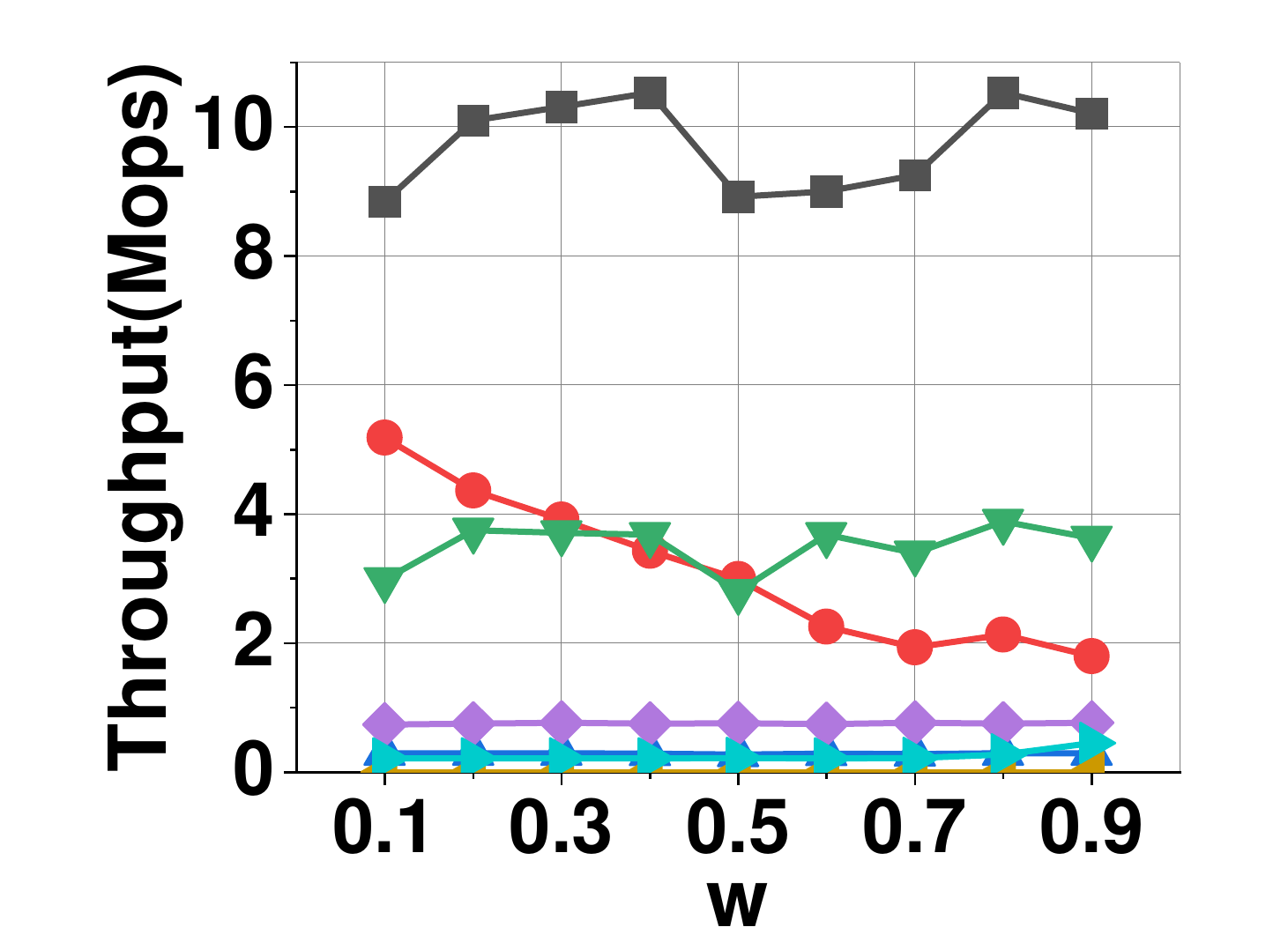}
            \vspace{-0.05in}
		}
	\end{minipage}}
    \subfigure[Synthetic (80KB)]{
	\begin{minipage}{0.216\textwidth}{
			\includegraphics[width=1\textwidth]{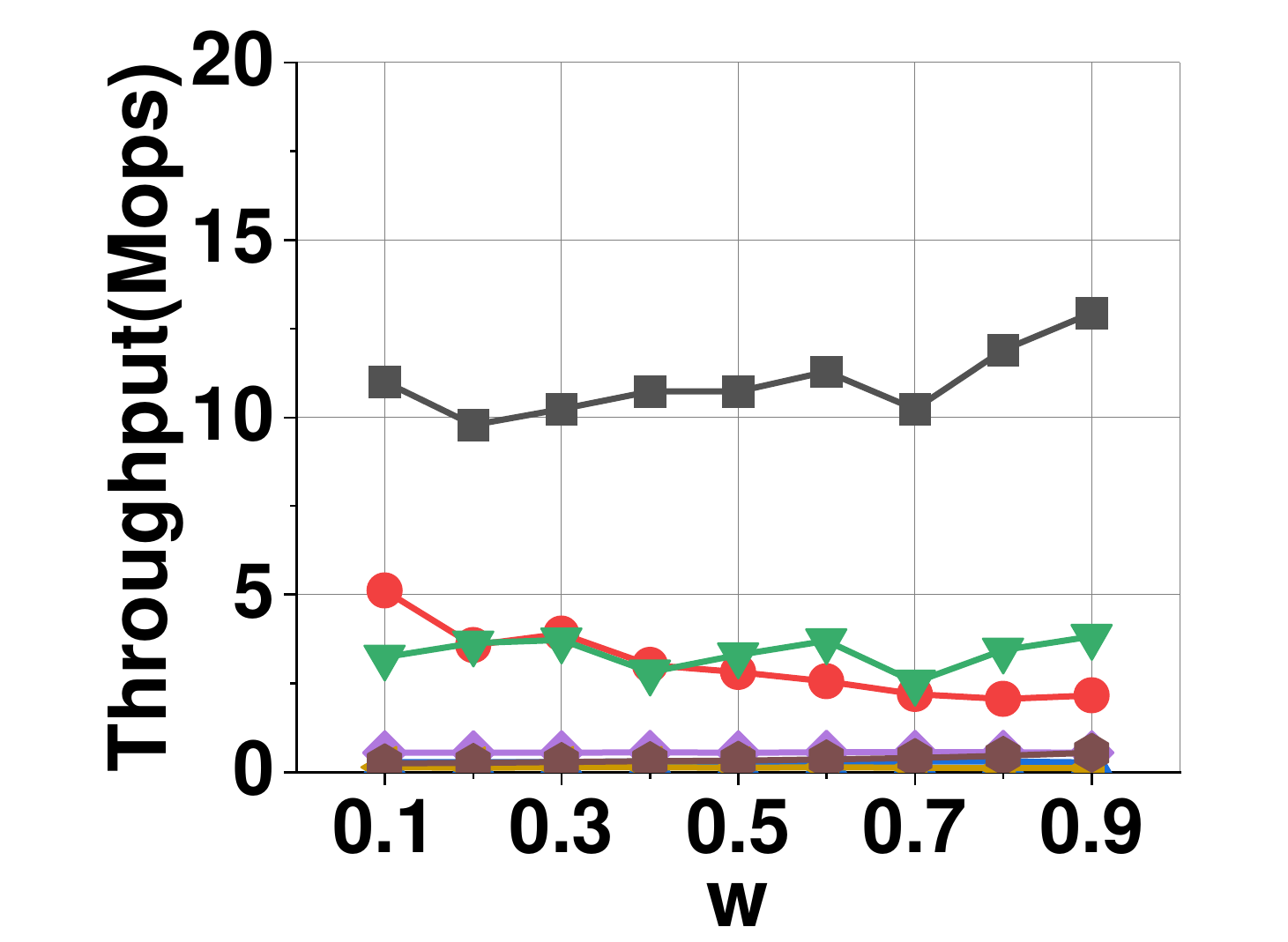}
            \vspace{-0.05in}
		}
	\end{minipage}}
    \caption{Query Throughput on Per-key Situation}
    \vspace{-0.15in}
    \label{fig::per::thp}
\end{figure*}

\subsection{Experiments in Single-key Situation}
\label{exp::single}

In this section, we compare the performance of \ourname{} with prior work (GK, KLL, DDSketch, $t$-digest, ReqSketch) on CAIDA dataset and Synthetic dataset in single-key situation. 
To support single-key insertion and query, we ignore the key of all items in the data stream and we set the total memory to 10KB for every algorithm. \textit{Experimental results \emph{(Figure \ref{fig::single::ae}-\ref{fig::single::thp})} show that \ourname{} is as good as other algorithms in terms of AE, but \ourname{} runs much faster.} 
The AE of \ourname{} is far less than that of $t$-digest and ReqSketch, but it's not much different from other algorithms. 
The insertion throughput of \ourname{} is around 20M item per second, which is around 2 times higher than $t$-digest, 7 times higher than GK, 15 times higher than DDSketch and only lower than KLL and ReqSketch. Moreover, the insertion throughput of \ourname{} peaks when $w=0.5$, as we have to insert positive (negative) infinities by probability method when $w\neq 0.5$, and the larger $|w-0.5|$ is, the more infinities \ourname{} has to insert, which results in the lower throughput.

\subsection{Experiments in Per-key Situation}
\label{exp::per}

In this section, we compare the performance of \ourname{} with baseline solution (GK, KLL, DDSketch, $t$-digest, ReqSketch), SQUAD\footnote{Experimental results show that SQUAD has to set $\varepsilon=1$ within 900KB on CAIDA dataset, and it cannot work within such small memory. As a result, we only provide the results of SQUAD with 1100, 1300 and 1500KB memory. } and SketchPolymer in per-key situation\footnote{Experimental results show that M4 cannot run within tight memory. With similar problem settings, M4 uses at least 6MB to run for CAIDA dataset in \cite{dongm4}, so we do not list the results of M4 here. }. 
\textit{Experimental results \emph{(Figure \ref{fig::per::caida}-\ref{fig::per::thp})} show that \ourname{} performs much better than state-of-the-art in per-key situation.} 
The AE of \ourname{} is all lower than 0.06 on CAIDA dataset, and lower than 0.03 on Synthetic dataset, which is approximately half of other algorithms. 
Experimental results also show that \ourname{} works especially well in small memory cases, which SQUAD and SketchPolymer fails to achieve. 
As to throughput, \ourname{} achieves balance betweeen insertion throughput and query throughput. Both insertion and query throughput of \ourname{} are close to 10 Mops on these datasets. 
On CAIDA dataset, the insertion throughput of GK and SketchPolymer is smaller than 5 Mops, so they cannot catch up with high-speed items in data streams; the query throughput of KLL, $t$-digest, ReqSketch and SQUAD is smaller than 2 Mops, so they have to spend a lot of time to answer quantile query.

\begin{figure}[!ht]
	\centering
	\vspace{-0.05in}
	\subfigure[$w=0.5$]{
	\begin{minipage}[t]{0.22\textwidth}{
			\includegraphics[width=1\textwidth]{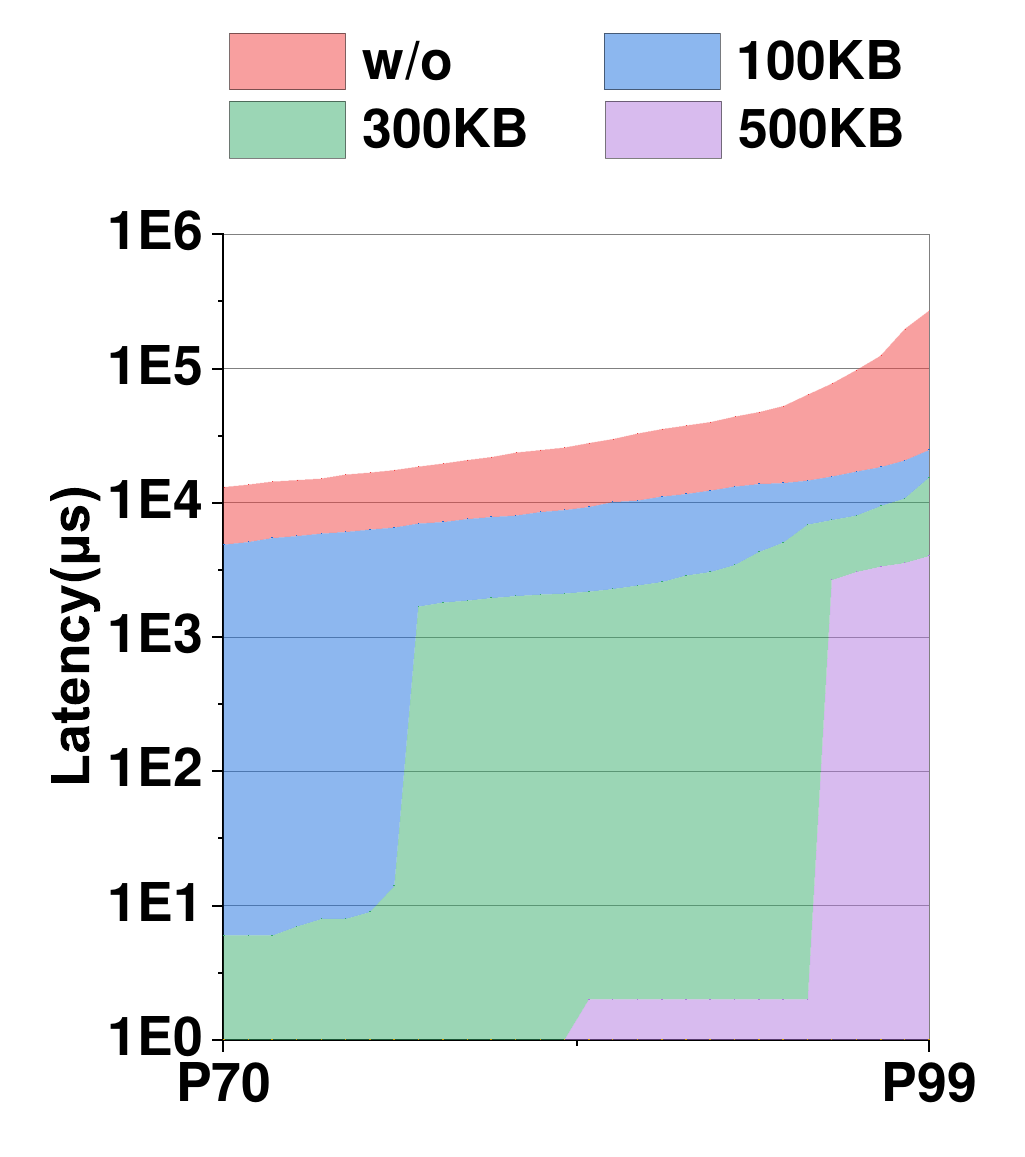}}
	\end{minipage}}
	\subfigure[$w=0.9$]{
	\begin{minipage}[t]{0.22\textwidth}{
		\includegraphics[width=1\textwidth]{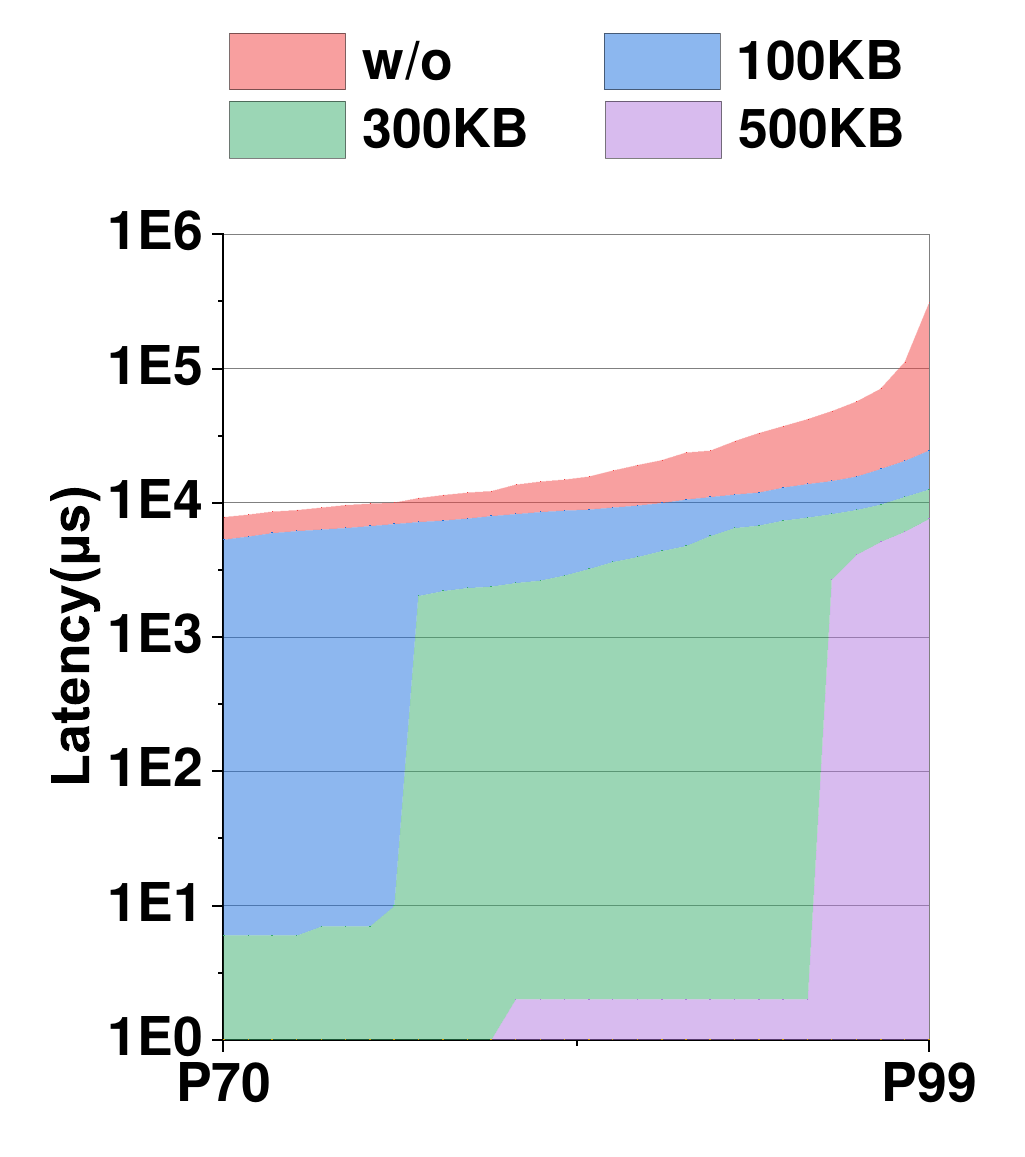}}
	\end{minipage}}
	\caption{Experiments on RocksDB database}
	\label{fig::rocksdb}
	\vspace{-0.2in}
\end{figure}

\subsection{Experiments on RocksDB Database}
\label{exp::rocksdb}

In this section, we implement \ourname{} on RocksDB database \cite{rocksdb} to accelerate quantile estimation in database scenarios. We create a synthetic RocksDB benchmark to measure the running time of quantile estimation with and without \ourname{}. We generate 1M items with 64-bit key and 64-bit value following Zipfian distribution with $\alpha=1$. For items with the same key, we add distinct 16-bit suffix to the key to distinguish them, so RocksDB database will not report two records with the same key. For simplicity, we set $w=0.5$ and $w=0.9$ and compare the following two schemes on the running time of quantile estimation:

\textbf{1) \ourname{}:} For every item to be inserted into RocksDB database, we insert it into \ourname{} as well. When query the $w$-quantile of values with key $k$, we first use $k$ to query \ourname{}. If the information of key $k$ is stored in \ourname{}, then we directly return the query result of \ourname{} as its query result; otherwise we select all values with key $k$ from the database and calculate their $w$-quantile. 

\textbf{2) Baseline Solution:} No data structure in RocksDB database is specifically designed for quantile estimation. Consequently, to answer quantile query for key $k$, the baseline solution has to select all items in the database with key $k$.

We measure the tail latency of quantile query for baseline solution and different size of \ourname{} in Figure \ref{fig::rocksdb}. \textit{Our experimental results show that allocating small memory for \ourname{} can significantly accelerate quantile query speed in RocksDB database. }  Within 100KB memory, \ourname{} can reduce the P90 latency by $3.23/2.21\times$ compared to baseline solution. As the allocated memory increases, \ourname{} can store more frequent items, and the acceleration effect is more pronounced. Within 500KB memory, the P99 latency is reduced by 67.06/35.51$\times$ and is smaller than 8ms, so \ourname{} supports fast quantile query request in real database scenarios.

\section{Discussion}

\textbf{Question:} To apply Distribution Calibration, \ourname{} assumes that all values in the data stream are i.i.d. Is this acceptable in practice? 

\textbf{Answer:} In reality, values in datasets do not strictly follow i.i.d. distribution; this assumption serves more as an approximation. Yet, this assumption is common and reasonable in both scientific research and real datasets, with which we can give simpler and more elegant results.

In theoretical computer science, most prior work on quantile estimation applies the classical comparison-based model, where the algorithm is only allowed to compare items via the total ordering on the universe \cite{greenwald2001space, karnin2016optimal, cormode2021relative, gribelyuk2024simple}. 
These algorithms do not rely on prior knowledge of data distribution, and they provide optimal space complexity for this problem. 
However, we would like to point out that, in real database experiments, programmers shall consider more than theory. A simpler and more straightforward algorithm is preferred in implementation, maintenance, reuse, and scalability. 

Our example is two popular quantile estimation algorithms, DDSketch \cite{ddsketch} and $t$-digest \cite{tdigest}. The theory of DDSketch not only assumes that values of all items are i.i.d., but also supposes values follow the Pareto distribution. $t$-digest also assumes that values follow i.i.d. distribution, and $t$-digest does not explicitly provide its space complexity. 
Both DDSketch and $t$-digest can be problematic when the distribution is not i.i.d.: under certain input, these algorithms fail to achieve accurate quantile estimation \cite{cormode2021theory}. Nonetheless, DDSketch and $t$-digest are still widely used in databases and actually works well \cite{li2022microsketch, dunning2021t}, and this is why we collect data to do empirically experiments in addition to giving theoretical guarantees. 

In this paper, we run \ourname{} on four datasets, three of which are real-world datasets, and experimental results show that values in these datasets can be approximated as i.i.d., so \ourname{} still performs well in practice. 
Finally, to tackle non-i.i.d. data stream, we can cut the data stream into several intervals, and values in each interval can be approximated as i.i.d. 
The results from different intervals can be aggregated to achieve approximate quantile estimation. 
\section{Conclusion}
\label{sec:conclusion}

In this paper, we propose \ourname{} for point-quantile estimation. In single-key situation, \ourname{} applies \textbf{Value Focus} to keep values close to median when $w=0.5$, and uses \textbf{Distribution Calibration} to transfer arbitrary quantile estimation problem into median estimation problem when $w\neq 0.5$. Moreover, \ourname{} applies \textbf{Double Filtration} to filter infrequent items in both stages to cater for per-key quantile estimation. We present rigorous mathematical analysis for \ourname{}, and experimental results show that \ourname{} outperforms existing algorithms in different scenarios. 
We have released the source codes of \ourname{} on GitHub \cite{source}.

\newpage
\clearpage
{
\bibliographystyle{IEEEtran}
\bibliography{reference}
}

\newpage
\clearpage
\appendices
\section{Pseudo-code of \ourname{}}
\label{appendix:code}

\subsection{Single-key Situation when $w=0.5$}
\label{appendix:single}

The pseudo-code of insertion and query operation in single-key situation when $w= 0.5$ is shown in Algorithm \ref{alg:single}.

\begin{algorithm}
	\caption{Single-key Situation when $w= 0.5$}
	\label{alg:single}
        \textit{// initialize the Candidate and Representative as empty} \\
        $C, R \leftarrow \varnothing$\; 
        \textit{// Insertion Procedure} \\
	\For{every item $e=(k, v)$ in the data stream}
        {
        $C \leftarrow C \cup \{v\}$\;
        \If{$|C| = r$} 
        {
        \textit{// the Candidate is full} \\
        $v', v'' \leftarrow \text{median of $C$}$\;
        $C \leftarrow \varnothing$; $R \leftarrow R \cup \{v', v''\}$\;
        \If{$|R| > s$}
        {
        \textit{// the Representative is full} \\
        $v_M \leftarrow \max \limits _{v\in R} v$; $v_m \leftarrow \min \limits _{v\in R} v$\;
        \textit{// keep the size of $R$ unchanged} \\
        $R \leftarrow R \backslash \{v_M, v_m\}$\;
        }
        }
        }
        \textit{// Query Procedure} \\
        \Return{0.5-quantile of $R$}
\end{algorithm}

\subsection{\stagetwo{} Insertion Procedure}
\label{appendix:per}
The pseudo-code of insertion operation of \stagetwo{} is shown in Algorithm \ref{alg:per:insert}. 

\begin{algorithm}
	\caption{\stagetwo{} Insertion Procedure}
	\label{alg:per:insert}
        \KwIn{an item $e=(k, v)$}
        \If{$\exists 1\leq j\leq d$, \st{} $B[h(k)][j].key = k$}
        {
        $B[h(k)][j].vote^+ \leftarrow B[h(k)][j].vote^+ + 1$\;
        $B[h(k)][j].value.\texttt{insert($v$)}$\;
        }
        \ElseIf{$\exists 1\leq j\leq d$, \st{} $B[h(k)][j]$ is empty}
        {
        $B[h(k)][j].key \leftarrow k$\;
        $B[h(k)][j].vote^+ \leftarrow 1$\;
        $B[h(k)][j].value.\texttt{insert($v$)}$\;
        }
        \Else
        {
        $B[h(k)].vote^- \leftarrow B[h(k)].vote^- + 1$\;
        \textit{// find the cell with minimum $vote^+$ in $B[h(k)]$} \\
        $j \leftarrow \mathop{\arg\min}\limits _{1\leq i\leq d} B[h(k)][i].vote^+$\;
        \If{$B[h(k)].vote^-\geq \lambda \times B[h(k)][j].vote^+$}
        {
        \textit{// clear the cell and insert $e$} \\
        $B[h(k)][j].\texttt{clear()}$\;
        $B[h(k)][j].key \leftarrow k$\;
        $B[h(k)][j].vote^+\leftarrow 1$\;
        $B[h(k)][j].value.\texttt{insert($v$)}$\;
        $B[h(k)].vote^-\leftarrow 0$\;
        }
        }
\end{algorithm}

\subsection{\ourname{} Insertion Procedure}
\label{appendix}
The pseudo-code of insertion operation of \ourname{} is shown in Algorithm \ref{alg:insert}.

\begin{algorithm}
	\caption{\ourname{} Insertion Procedure}
	\label{alg:insert}
        \KwIn{an item $e=(k, v)$}
        \If{$\texttt{TowerSketch.query}(k) < T$}
        {
        $\texttt{TowerSketch.insert}(k)$\;
        \Return \;
        }
        \Else
        {
        $\texttt{ValueSketch.insert}(k, v)$\;
        }
\end{algorithm}

\end{document}